\newcommand{\denselist}{\itemsep -2pt\parsep=-1pt\partopsep -2pt}
\newcommand{\eps}{\varepsilon}
\newcommand{\Fmk}{\textsf{Approx Hull}\xspace}
\newcommand{\Fmh}{\textsf{Convex Hull}\xspace}
\newcommand{\Fmdp}{\textsf{DP}\xspace}
\newcommand{\Fmg}{\textsf{Gridding}\xspace}
\newcommand{\Fmgd}{\textsf{Grid Kernel}\xspace}
\newcommand{\Fme}{\textsf{Even}\xspace}
\newcommand{\osmdata}{\textsf{OSM}\xspace}
\newcommand{\beijingdata}{\textsf{Beijing}\xspace}
\newtheorem{lemma}{Lemma}[section]
\newtheorem{theorem}{Theorem}[section]
\newtheorem{definition}{Definition}[section]
\newcommand{\R}{\mathbb{R}}
\newcommand{\omt}[1]{}
\newcommand{\etal}{\emph{et al.}\xspace}
\newcommand{\Eu}[1]{\ensuremath{\mathcal{#1}}}
\newcommand{\RangeSet}{\Eu{A}}
\newcommand{\myParagraph}[1]{\vspace{.05in}\noindent {\sffamily\bfseries #1}}
\title{Scalable Spatial Scan Statistics for Trajectories}
\author{
	\alignauthor
		Michael Matheny\\
       \affaddr{University of Utah}\\
       \affaddr{Salt Lake City, Utah}\\
       \email{mmath@cs.utah.edu}
\alignauthor
		Dong Xie\\
		\affaddr{University of Utah}\\
       \affaddr{Salt Lake City, Utah}\\
       \email{dongx@cs.utah.edu}
\alignauthor Jeff M. Phillips \titlenote{Jeff Phillips thanks his support from NSF CCF-1350888, ACI-1443046, CNS- 1514520, CNS-1564287, and IIS-1816149. Part of the work was completed while visiting the Simons Institute for Theory of Computing.}\\
		\affaddr{University of Utah}\\
		\affaddr{Salt Lake City, Utah}\\
       \email{jeffp@cs.utah.edu}
}
\begin{document}

\maketitle

\begin{abstract}
We define several new models for how to define anomalous regions among enormous sets of trajectories.  These are based on spatial scan statistics, and identify a geometric region which captures a subset of trajectories which are significantly different in a measured characteristic from the background population.  The model definition depends on how much a geometric region is contributed to by some overlapping trajectory. This contribution can be the full trajectory, proportional to the time spent in the spatial region, or dependent on the flux across the boundary of that spatial region.  Our methods are based on and significantly extend a recent two-level sampling approach which provides high accuracy at enormous scales of data.  We support these new models and algorithms with extensive experiments on millions of trajectories and also theoretical guarantees.  

\end{abstract}

\keywords{computational geometry, sampling, trajectory, range space, anomaly detection}

\section{Introduction}
\label{sec:intro}

Large data sets of trajectories have driven much recent research with the goal of understanding the intentions and causes of various correlations hidden within the data.  These data objects are structured to capture movements, interactions, and possibly the intentions of humans and other objects.  Yet, when one considers these trajectories in unison, they usually appear as just a tangled mess.  And as the data grows (e.g., millions of objects with billions of data points), this task does not seem to aggregate and statistically simplify, rather it just becomes more unwieldy and unmanageable.

We consider new methods that use the underlying geometry of the trajectories to identify statistically significant spatial anomalies.  
Critically, these models deviate from density-based models (like DBScan~\cite{EKSX96}) which would only identify populous regions (of course a lot of traffic exists in New York or Beijing!)
Rather, our new models and algorithms identify geometric regions where some labeled aspect (e.g., trajectories of sick people) significantly deviates from what is expected or is in contrast to the trajectories of the background population.  And our approaches work at enormous scale -- required for modern large data sets, and for the statistics to be meaningful.  

Developing such models for trajectories, comparing to a background population, is highly motivated: identifying significant population or demographic shifts, pinpointing the likely location responsible for disease due to prolonged exposure among a dynamic population, or geolocating a nefarious wifi access point affecting cell phones which transiently pass by.  There are no existing mechanisms for addressing some of these goals specific to trajectory data, and certainly not for massive data sets. 


When base objects are points (not trajectories), such comparative anomaly tasks are typically resolved with a \emph{Spatial Scan Statistic} (SSS)~\cite{Kul97}.  This is one of the most common tasks within Geographic Information Science, with applications to detecting hotspots with elevated levels of disease, crime, or demographic traits~\cite{Kul97,OJPHI7599,NM04,Tango2005,ACTW18}.  These identify a region with maximum log-likelihood score $\Phi$, out of a large prescribed family of regions $\Eu{C}$, and have been shown empirically and theoretically to have high statistical power~\cite{Kul97,ACTW18}.  But this search of \emph{all} regions $C \in \Eu{C}$ (usually associated with a geometric family of shapes) is computationally onerous, and the most common software SatScan~\cite{Kul7.0} is only able to scale by restricting the class of regions to ones specifically chosen by the user. 
Moreover, there are only a few limited extensions towards trajectory data~\cite{PCLZ2011,LZCYX11} and these rely heavily on heuristic aggregation.

%

\myParagraph{Our Contributions.}
We introduce three new models to allow for geometric analysis of trajectory data. These models (derived in Section \ref{sec:model}) identify geometric regions where many trajectories of interest have passed through (full model), spent time in (partial model), or began or ended in (flux model).  These models are new and well-motivated, but they have not before been a computationally feasible objective to consider at scale.

We design sampling and scanning algorithms (in Section \ref{sec:algo}) that allow for extremely scalable methods for identifying anomalous patterns captured in the above geometric models.  
Our methods are not limited by the data size, instead they depend on the error---spatial and combinatorial---that researchers are willing to approximate the final statistical quantity to. 
The scalability as well as statistical power of these models are demonstrated on enormous data sets containing up to several million trajectories with over 1 billion waypoints (in Section \ref{sec:exp}).  

Of the three models, we develop a reduction for the \emph{flux} and \emph{partial} model to a scalable point-based scanning framework (see Section \ref{sec:algo}).  However for the third model, \emph{full}, such reductions are not possible, and considerable new scanning and sampling mechanisms are developed:  
Section \ref{sec:coresets} designs compact coresets for each trajectory that preserves spatial guarantees and converts trajectories to labeled point sets.  
Section \ref{sec:VC} provides trajectory specific sampling theorems (VC-dimension bounds for trajectory-based range spaces) necessary to apply the underlying sample approach.  
Section \ref{sec:full-scanning} develops algorithms that use new data structures to scan the data resulting from trajectory coresets and their samples in a scalable way.  
The overall accuracy and runtime bounds for of our sampling, coreset, and scanning algorithms are ultimately summarized in Table \ref{tab:alg}.  
\section{Preliminaries and Overview}
\label{sec:prelim}

We begin with an overview of the mathematical modeling, geometric, statistical, and algorithmic preliminaries to frame our new contributions.

\myParagraph{Trajectory models.}
\label{sec:trajectories}
We model a trajectory $t$ via waypoints $P_t = \langle p_1, \ldots, p_m \rangle$ as the tracing out of the ordered sequence of connected segments $s_1, \ldots, s_{m-1}$, with $s_j = \overline{p_j p_{j+1}}$.  Trajectory $t$ has total arclength $L(t)$ (or when $t$ is clear, just $L$).  Computationally and also as a way of defining regions, it is usually necessary to consider trajectories via just some set of ordered waypoints  $p_1, \ldots, p_m$ constructed via one of the methods discussed in Section \ref{sec:coresets} instead of as an ordered sequence of connected segments. 
For the partial model the parameterization of the trajectories will be important, and we use arclength by default.   We could alternatively use a time-based parametrization, but we otherwise do not explicitly modeling timing information in this paper, leaving this potential extension to future work.


\myParagraph{Range spaces.}
\label{sec:RS}
To study spatial anomalies applied to trajectories we need a way to model how or when a trajectory interacts or intersects with a potential region of interest.  
To do this we review the definition of a \emph{range space}.  A range space is a pair $(X,\Eu{A})$ consisting of a set of objects $X$ (the \emph{ground set}) and a set $\Eu{A} \subset 2^X$ of subsets of $X$ (the \emph{ranges}), that is a subset of all possible subsets $2^X$. 
Range spaces are essential objects in both data structures (e.g., for range searching~\cite{AE99}) and machine learning (for sample complexity of learning~\cite{VC71,HW87}).  For example, classically let $X$ be points in $\R^2$, and $\Eu{A}_\Eu{C}$ be the subsets induced by intersection with a disk $C \in \Eu{C}$. 

For this paper, we will define new families $\Eu{A}_\Eu{C}$ induced by a set of shapes $\Eu{C}$, focusing on those induced by halfspaces, disks, or axis-aligned rectangles.  In particular, we are interested when the ground set is a set of trajectories $T$ (or derived appropriately from $T$).   The new models (we will define in Section \ref{sec:model}) will specify the definition of intersection $T \cap C$, for prescribed spatial regions $C \in \Eu{C}$, to induce a set $A_C \in \Eu{A}_\Eu{C}$.  That is each induced subset of trajectories $A_C \in \Eu{A}_\Eu{C}$ corresponds with a potentially anomalous region of interest.  


\myParagraph{Statistical discrepancy.}
In all of the forthcoming models, each trajectory $t \in T$ has a recorded value $r(t)$ and baseline value $b(t)$.  It is typically sufficient to consider \emph{all} trajectories $t \in T$ have $b(t) = 1$ (that is they are all part of an observed background population) and that $r(t) \in \{0,1\}$ where the set $T_r = \{t \in T \mid r(t) =1\}$ are the trajectories of interest -- although these can change in accordance with objective function $\phi$ in a variety of statistical settings~\cite{Kul7.0,APV06}.   
Even when we study parts of trajectories and segments of trajectories, these traits (especially the recorded value $r(t)$) is held fixed for the entire trajectory.  

Then let $r(C) = \frac{|T_r \cap C|}{|T_r|}$ (resp. $b(C) = \frac{|T \cap C|}{|T|}$) be the fraction of all recorded trajectories (resp. all trajectories) within the range $C$.  Modeling $r$ as Poisson, the log-likelihood of the recorded rate being different than the baseline rate is $\Phi(C) = \phi(r(C),b(C))$, where 
$\phi(r,b) =r \log \frac{r}{b} + \left(1 - r\right) \log \frac{1-r}{1 - b}.$
However, in general, we can computationally reduce to a simpler 
subproblem~\cite{APV06,MP18a} with a ``linear'' model e.g., 
$\phi(r,b) = |r-b|$.  
Ultimately, the SSS is $\max_{C \in \Eu{C}} \phi(C)$.  It uses the magnitude of
$\Phi(C^*)$ (and permutation testing) to determine if the most anomalous region $C^*$ is
statistically significant.  
We will specifically be interested in an approximate variant:

\begin{definition}[$\eps$-Approx. Spatial Scan Statistic]
Consider a range space $(X,\Eu{A}_\Eu{C})$ defined by a family of shapes $\Eu{C}$, and a discrepancy function $\Phi : \Eu{C} \to \R$.   
An \emph{$\eps$-approximate spatial scan anomaly} is a shape $\hat C \in \Eu{C}$  so that 
\[
\Phi(\hat C) + \eps > \Phi(C^*)
\] 
where $C^* = \arg\max_{C \in \Eu{C}} \Phi(C)$.  
Then the corresponding \emph{$\eps$-approximate spatial scan statistic} is $\Phi(\hat C)$.  
\end{definition}

\begin{figure}
\centering
\includegraphics[width=.98\linewidth]{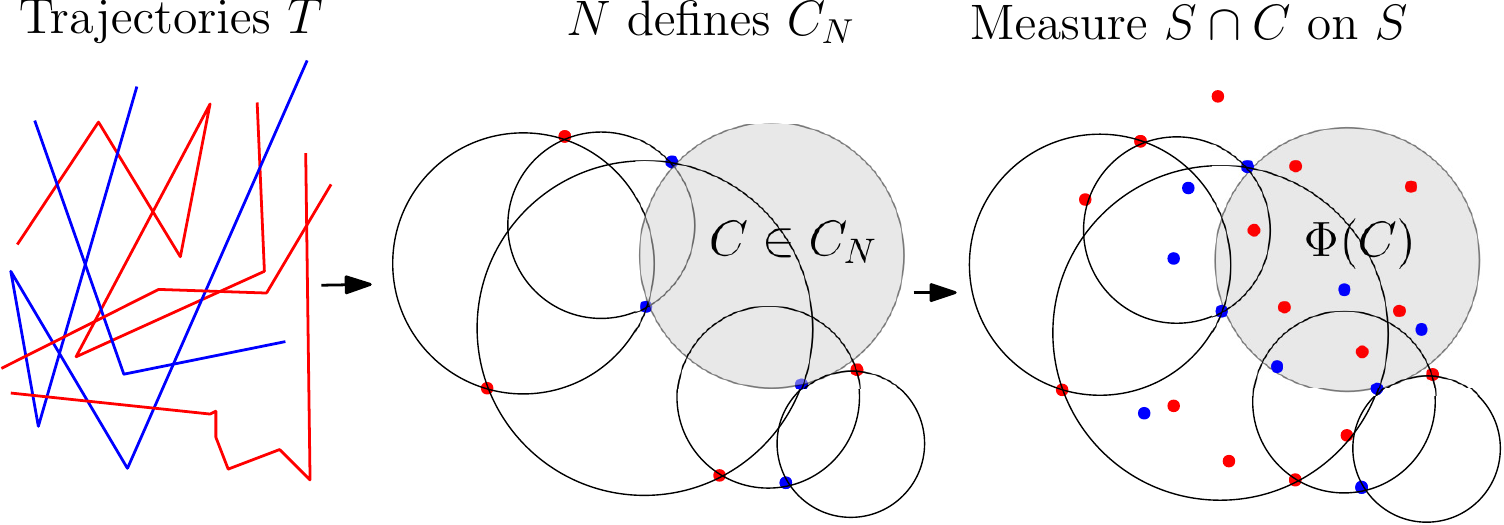}
\vspace{-3mm}
\caption{\label{fig:sss}
A sparse sample of the data defines a set of regions and these regions can be efficiently measured on a larger sample that maintains the density of the original data set.}
\vspace{-2mm}
\end{figure}

\myParagraph{Two-level sampling.}
\label{sec:overview}
The algorithmic goal in this paper is to find an $\eps$-approximate spatial scan anomaly for trajectory range spaces (with forthcoming definitions in Section \ref{sec:model}).  
Our scanning algorithms are based on recent work for calculating $\eps$-approximate spatial scan statistics on simple geometric shapes over point sets at scale~\cite{SSSS,MP18b}.  The main ideas are to construct a \emph{two-level sampling} of a large data set into sets $S$ and $N$; see Figure \ref{fig:sss}.  The larger ``sample'' set $S = S_r \cup S_b$ (where $S_r$ is the recorded set and $S_b$ is the baseline set) is used as proxy for the density of the data in any range, and the smaller ``net'' data sets $N$ is used to define the combinatorial set of ranges we will consider.  That is, $N$ \emph{defines a subset of shapes} $C \in \Eu{C}_N$; those which include combinatorial distinct subset of points in $N$, and then $S$ is used to \emph{estimate the $\Phi(C)$ value} via $C \cap S$.   
To achieve $\eps$ error in $\Phi(C)$, we (roughly) need to set $n = \nu/\eps$ and $s = \nu/
(2\eps)^2$, where $\nu$ is the VC-dimension of the range space~\cite{MP18b}.  
Then there exist shape specific methods to quickly scan and evaluate the ranges (induced by $N$) and values (from $S$)~\cite{MP18b}.  

\myParagraph{Remaining challenges.}
\label{sec:challenges}
Several significant tasks remain to adapt this framework to the new trajectory scanning models defined next.  
First, we require to spatially approximate the trajectories -- the raw trajectories are either unnecessary or too difficult to work with; we describe our methods in Section \ref{sec:coresets}.  
Second, we need to formalize the definitions of regions $\Eu{C}_N$ from the set $N$, and bound the VC-dimension $\nu$ with respect to the resulting range spaces; we do so in Section \ref{sec:VC}.  
Next, for the partial and flux models, we devise reductions to point set variants in Section \ref{sec:reduction}.  Given these reductions, we can invoke existing fast scanning algorithms~\cite{MP18b}.  
However, for the full model such reductions are not possible, and we need to develop new methods to quickly iterate over all these ranges on $S$; that is how to quickly ``scan" over the net.  This varies with the geometric properties of the shapes; we focus on halfspace, disk, and rectangle ranges, and devise new scanning methods for each of them in Section \ref{sec:full-scanning}.




\section{New Trajectory Range Models}
\label{sec:model}

We introduce three new models of how to define range spaces when applied to trajectory-derived ground sets and geometric ranges which define the region of interest.  These models capture:
(i) regions with a high percentage of measured trajectories passing from inside to outside (the \emph{flux model}), (ii) regions with a high percentage of the total arclength of measured trajectory data (the \emph{partial range model}), and (iii)
regions with a high percentage of measured trajectories pass through them (the \emph{full range model}).

\begin{figure}[hb]
\includegraphics[width=\linewidth]{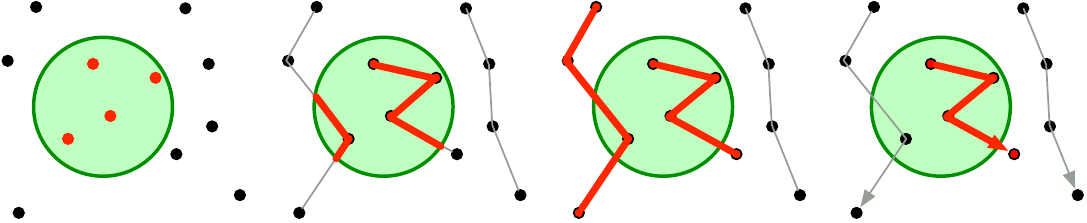}
\vspace{-3mm}
\caption{Models of how $3$ trajectories intersect (in red) with circular shape (in green).  From left to right: point-based model; partial range model; full range model; flux model.}

\label{fig:models}
\end{figure}

\subsection{Flux Model}
\label{sec:flux}

The simplest version of this problem is the flux model.  We search for a shape $C \in \Eu{C}$ where a proportionally high number of trajectories start inside the shape $C$, and end outside of the shape $C$, or vice versa.

\myParagraph{Mathematical Definition.}
In this setting, we can reduce each trajectory to two waypoints: the first and the last.  To satisfy a range intersection, the first must be inside and the last must be outside (or vice-versa with opposite effect).  That is we define two sets $X^{b} = \{p_1 \mid t \in T, t = \langle p_1, \ldots, p_m \rangle \}$ (the beginning set) and $X^{e} = \{p_m \mid t \in T, t = \langle p_1, \ldots, p_m \rangle \}$ (the end set).
Then we attempt to find the range where $\arg \max_{C \in \Eu{C}} |C \cap X^{e}| - |C \cap X^{b}|$ or vice-versa. Since we only care about the endpoints of trajectories we can directly reduce this problem to highly-scalable techniques for point-based algorithms  (see Section \ref{sec:reduction}).  


\myParagraph{Motivating scenarios.}
This model arises when finding a boundary that differentiates two types of traffic.  For instance say a city would like to place a toll fee that mostly affects tourists (using rental cars as proxy) versus locals (using other GPS databases as proxy).  The shape boundaries with high flux are potential good choices.  Alternatively, if trajectories document addresses of people over time (for instance the Utah Population Database which tracks addresses for 50 years), this can be used to identify significant migratory patterns of parts of the population.  Regions of unusually high flux or any fixed window of time may also be useful for managing crowd control in packed sporting or music events.

\begin{figure}
\begin{tcolorbox}[colback=white]
\begin{center}
\textbf{Flux Model Example}
\end{center}
\vspace{-2mm}
\includegraphics[width=\linewidth]{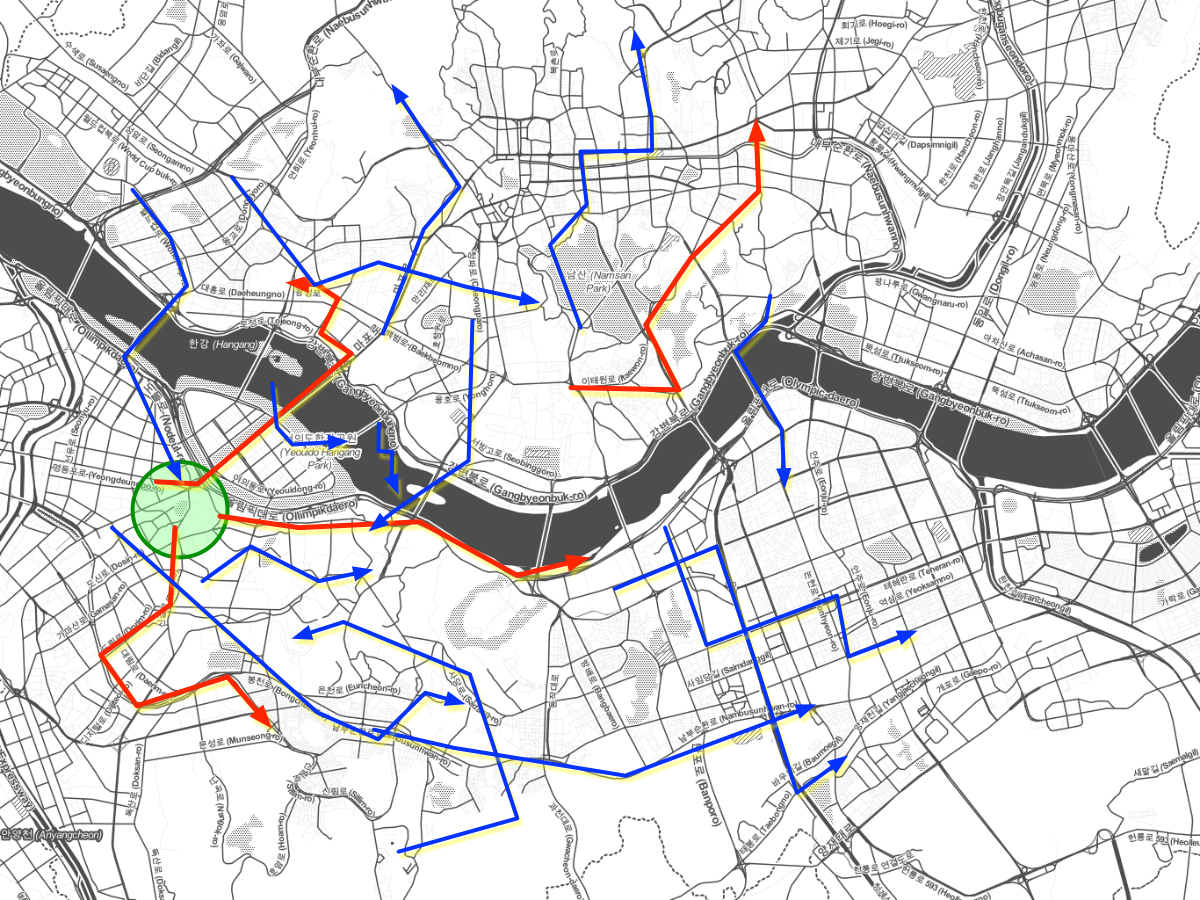}

In the image above, let the trajectories represent fair paths of a Kakao Taxi driver in Seoul over a day.  
The red trajectories represent significant tips, and the blue ones all other fairs paths.  
Identifying a region (in green) where red (high tip) routes started, and left from, without too many blue routes of the same form, will indicate a good place to try to find profitable customers.  This would correspond with the disk maximizing $\Phi$ under the flux model. \footnotemark[\the\numexpr\value{footnote}+1] 
\end{tcolorbox}
\end{figure}

\subsection{Partial Range Model}
\label{sec:partial}

In the partial range model, we want to find a shape $C \in \Eu{C}$ where the weight a trajectory contributes to $C$ is proportional to the normalized length of the trajectories inside. A trajectories intersection with a shape $C \in \Eu{C}$ is fractional; specifically $\mu(t \cap C)$ is the fraction of the total arclength of $t$ within $C$.  That is if $1/3$ of a trajectory $t$ intersects $C$, then $\mu(t \cap C) = 1/3$ or if $1/50$th of the total length of all trajectories intersect $C$ then $\mu(T \cap C) = 1/50$. The contribution $\mu(t \cap C)$ depends on the parametrization of $t$.  This could be by arclength, by time, by fuel used, or any other quantity.  In this paper we simply use arclength in our experiments.  

\begin{figure}
\begin{tcolorbox}[colback=white]
\begin{center}
\textbf{Partial Range Model Example}
\end{center}
\vspace{-2mm}
\includegraphics[width=\linewidth]{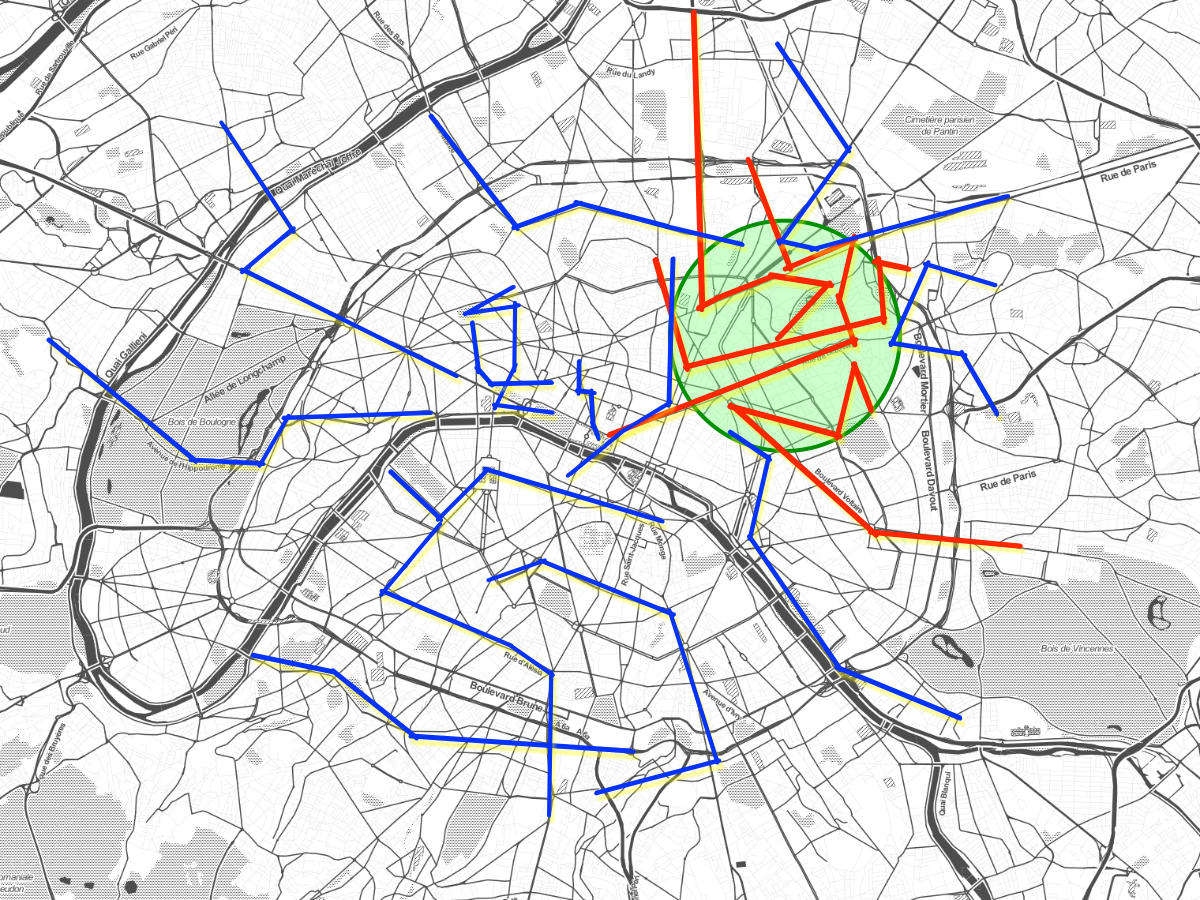}

Let the image above model the daily commutes of patients that have checked into Paris hospitals.  The red curves correspond with patients who reported a serious lung and stomach symptoms when they arrived to work, consistent with exposure to a dangerous chemical.   The blue trajectories are other patients with more standard and mainly benign symptoms.  A region where more red trajectories spent significant time, and blue trajectories did not spend much time, would help pinpoint the region with largest potential to be the source of these symptoms.  And finding such region corresponds with the disk that maximizes $\Phi$ under the partial range model. 
\footnotemark[\the\numexpr\value{footnote}+1] 
\end{tcolorbox}
\end{figure}

\myParagraph{Mathematical Definition.}
More formally, the set $C \cap T$ corresponds to the set of points comprising partial trajectories which are inside $C$.  The true ground set $X$ is then $X = \{ x \in \R^2 \mid x \in s_j \in t \in T \}$.  That is $X$ is a subset of $\R^2$ so any $x \in X$ lies on some $t \in T$.  This corresponds to an infinite (not combinatorially defined) range space~\cite{VC71} as $(X,\Eu{A}_\Eu{C})$ where $\Eu{A}_{\Eu{C}} = \{ A_C = X \cap C \mid C \in \Eu{C}\}$.  
A random sample of points from this infinite point set $X$ falls into the existing theory from \cite{MP18b,SSSS}. Therefore existing scanning algorithms can be directly applied, even ignoring the relationship between these sampled points and the trajectories, since the ground set $X$ is a subset of $\R^2$.  

\myParagraph{Motivating scenarios.}
The partial range model is important to help automatically identify regions which statistically lead to some measured characteristic, proportional to how long the object generating the trajectory spends in that region.  For instance, consider a mysterious sickness that health officials suspect is tied to prolonged exposure to some chemical event.  By finding a compact region where many of the inflicted people spend a considerable amount of time, compared to all in that region, this will provide a candidate location for the epicenter of that exposure.  
Alternatively, consider a measured set of unprofitable (or highly profitable) taxi/Uber drivers;  can we identify regions of a city where they spend a proportionally higher percentage of their time.  These and similar scenarios are directly modeled by the partial range scan statistic, and hence demand scalable solutions.

\subsection{Full Range Model}
\label{sec:full}

In the full range model, we seek to find the shape $C \in \Eu{C}$ which intersects the most trajectories of interest, compared to some baseline set of trajectories. A trajectory's contribution to $C$ is \emph{not} proportional to intersection size, but instead all or nothing.

\begin{figure}
\begin{tcolorbox}[colback=white]
\begin{center}
\textbf{Full Range Model Example}
\end{center}
\vspace{-2mm}
\includegraphics[width=\linewidth]{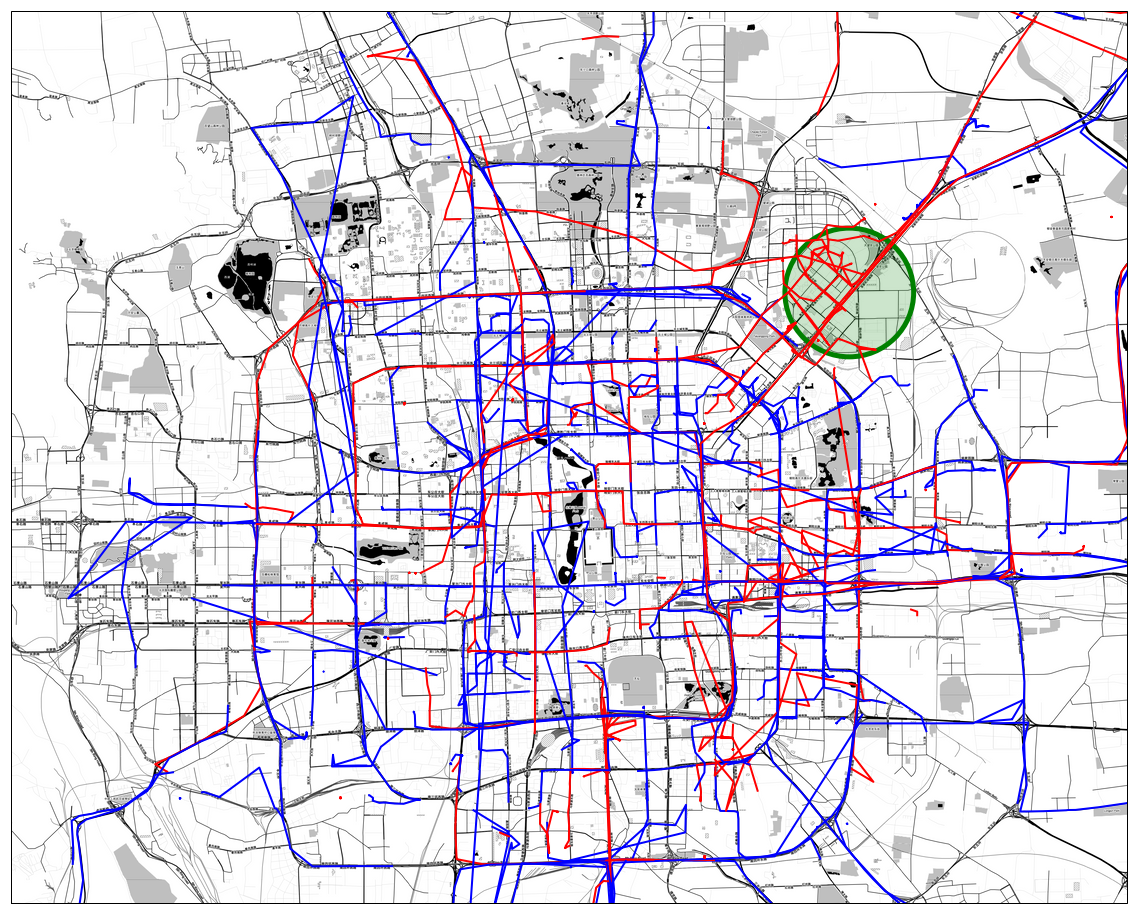}

Consider the set of limousine trajectories from a day in Beijing in the image above.   The red trajectories may represent taxis which reported navigation systems malfunctioning, while the blue ones did not.  You suspect a jamming devise, with a fixed radius, was placed by a competitor.  Detecting the circular shape (in green) which maximizes the fraction of red trajectories which pass through, while minimizing the fraction of blue, is the most likely model for where a jamming devise could have been located.  This corresponds precisely to finding the disk with largest $\Phi$ value under the full range model.  
\footnotemark[\the\numexpr\value{footnote}+1] 
\end{tcolorbox}
\end{figure}
\refstepcounter{footnote} \footnotetext{ Map tiles by Stamen Design, under CC BY 3.0. Data by OpenStreetMap, under ODbL.}

\myParagraph{Mathematical Definition}
The contribution of a trajectory $t \in T$ in a shape $C \in \Eu{C}$ is binary.  It is $1$ if there is any point where the trajectory enters the shape, and is only $0$ if the trajectory is never inside the shape.  And as such ranges $A_C \in \Eu{A}_\Eu{C}$ are defined $A_C = \{t \in T \mid t \text{ intersects } C\}$, for some shape $C \in \Eu{C}$.  
There is no need to parameterize the trajectories in this scenario.

We will typically simplify trajectories by creating a small set of labeled points $P'_t$ to represent them.  In this scenario, we say $t$ intersects shape $C$ if \emph{any} $p'_j \in P'_t$ intersects $C$.

Trajectory simplification is in fact necessary for several reasons: trajectories with an unbounded number of waypoints do not satisfy VC dimension based approximation guarantees (see Section \ref{sec:VC}), for certain range spaces there is no obvious way to define a set of regions to enumerate when considering line segments directly, and it is computationally easier than checking for full intersection with $t$ in most scenarios. Unfortunately this reduction does not allow direct application of some of the previous approaches in \cite{MP18b}, since it needs to only count a trajectory as intersecting a range once even if multiple points are inside. 

\myParagraph{Motivating scenarios.}
This model arises when just the fleeting intersection with a spatial region is enough to trigger a measured event for that entire trajectory.  Consider a set of cars with slow leaks from nails in their tires (found at the end of the day); a region many of them passed through would more likely be someplace with nails on the road.  Or consider a set of people's cell phones which a virus, suspected to be infected when they pinged some wifi access point; then just passing near that access point may be enough to trigger the event, and finding a region with high-density of cases of the virus would provide a probable location of the offending access point.  
Or consider tracking a set of animals (e.g., cows, migrating birds) where a subset become sick; then a full range spatial anomaly may indicate a contaminated watering hole.


\section{Spatial Approximation}
\label{sec:coresets}

As described, for a shape $C \in \Eu{C}$ and trajectory $t$, the critical operation in the full model is determining $t \cap C = \emptyset$.  This is far more efficient if we can approximate $t$ by a set of $k$ approximate waypoints $P'_t = \{p_1', p_2', \ldots, p_k'\}$, and then use $P'_i \cap C$ as a proxy for $t \cap C$.  The critical aspects of such an approximation is to keep the size of the approximation $k$ small, even for long trajectories, and to ensure that the answer to the intersection between a shape $C$ and the point set $P'_t$ approximates $C \cap t$.   

Specifically, we desire an \emph{$\alpha$-spatial approximation} (or just \emph{$\alpha$-approximation} for short).  For a trajectory $t$ and any range $C \in \Eu{C}$, we say $P'_t$ is an $\alpha$-spatial approximation of $t$ under two conditions (see Figure \ref{fig:a-apx}(Left)): 
\begin{enumerate} \denselist
\item (no false positives) If $t$ does not intersect $C$, then $P'_t$ does not intersect $C$.  
\item (limited false negatives)  If for \emph{all} unit vectors $v$, $t + v \alpha$ intersects $C$, then $P'_t$ intersects $C$; here $t+ \alpha v$ is a shift of the entire trajectory in a direction $v$ by $\alpha$.  
\end{enumerate}

\begin{figure}[b]
\centering
\includegraphics[width=.6\linewidth]{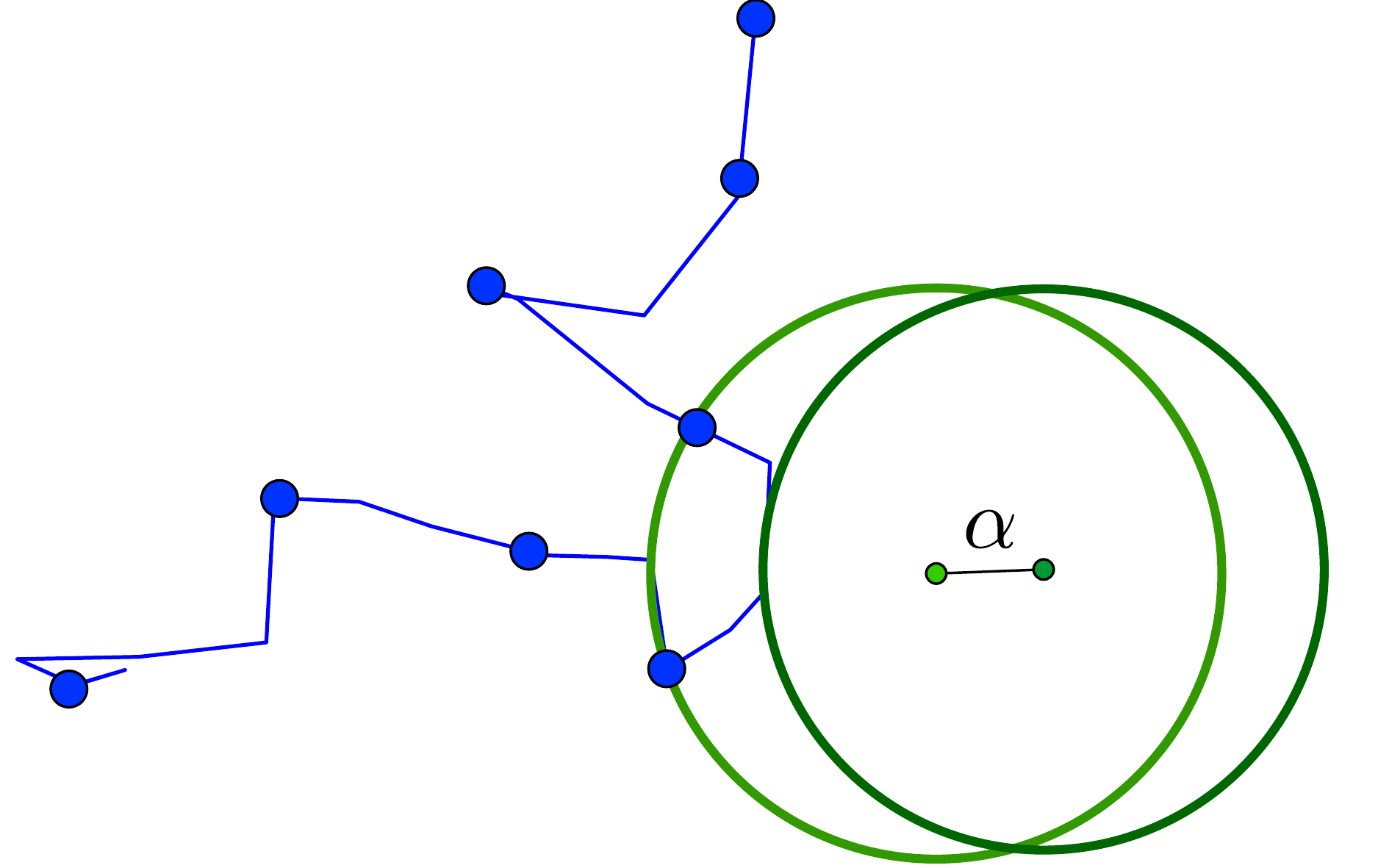}
\includegraphics[width=0.38\linewidth]{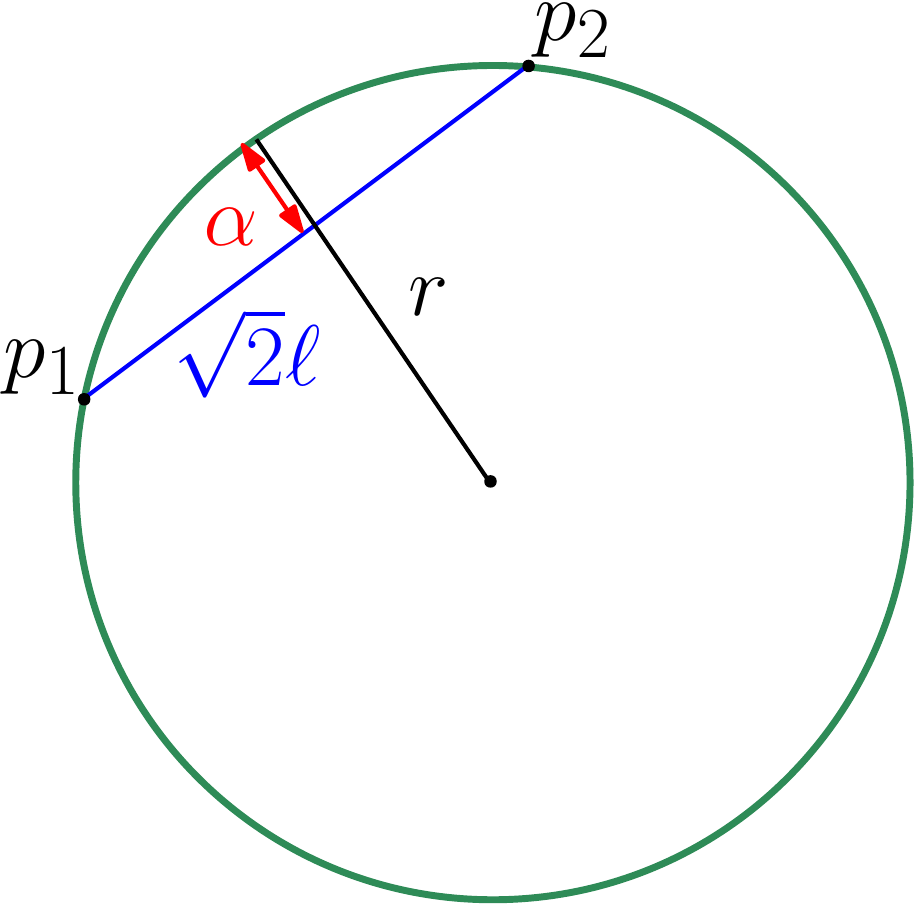}

\caption{\label{fig:a-apx} \label{fig:archeight}
Left:  Converting a trajectory into a set of points which approximate its shape preserves an $\alpha$-spatial guarantee, since we can find a nearby disk that still intersects the trajectory.
Right: The relation between $\alpha$, $\gamma$, and $r$ in proof of Lemma \ref{lem:gk-err} for some line segment $\overline{p_1 p_2}$ on the boundary of one of our kernels.}
\end{figure}

\begin{figure*}[t]
	\includegraphics[width=0.19\linewidth]{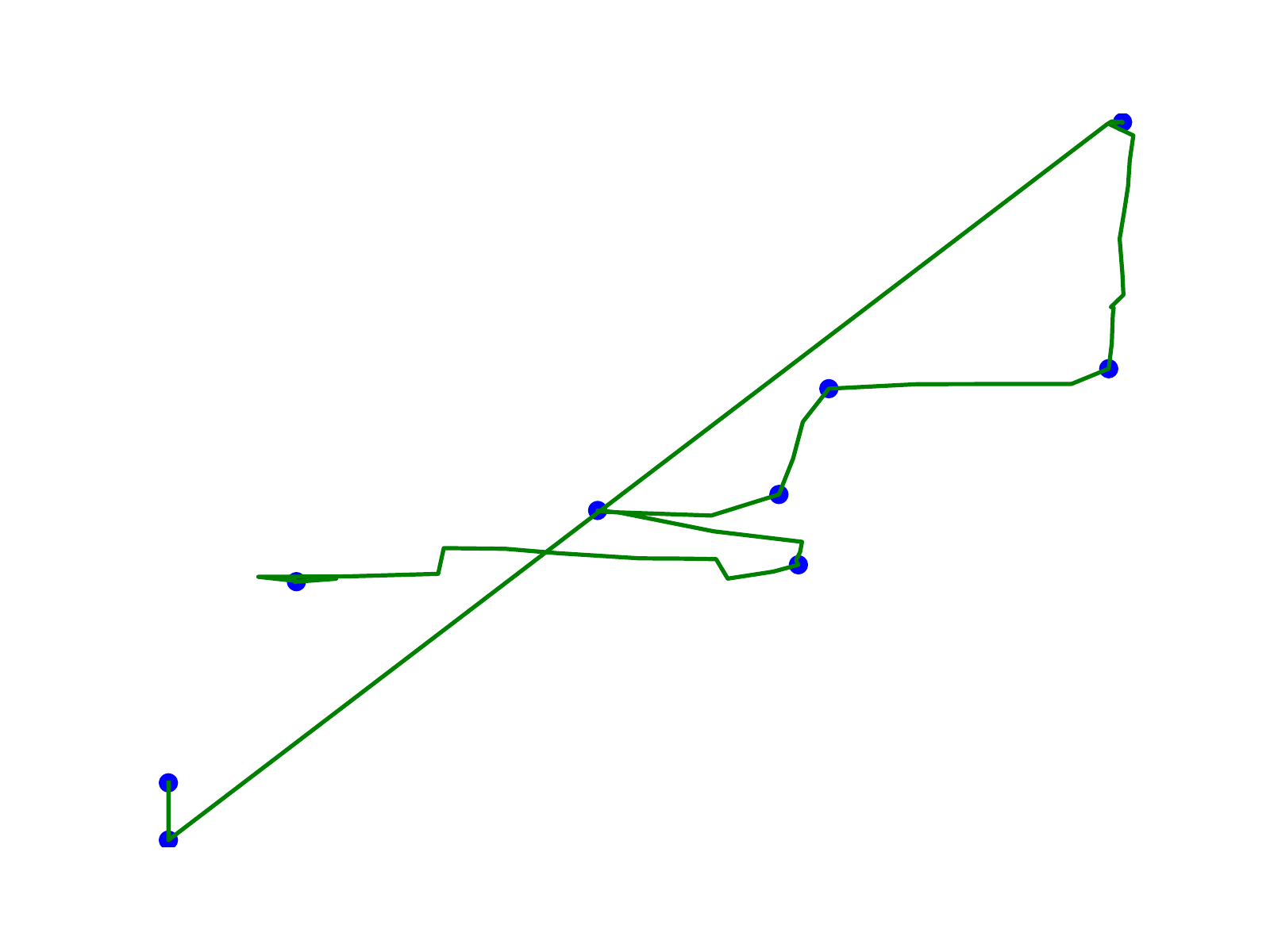}
	\includegraphics[width=0.19\linewidth]{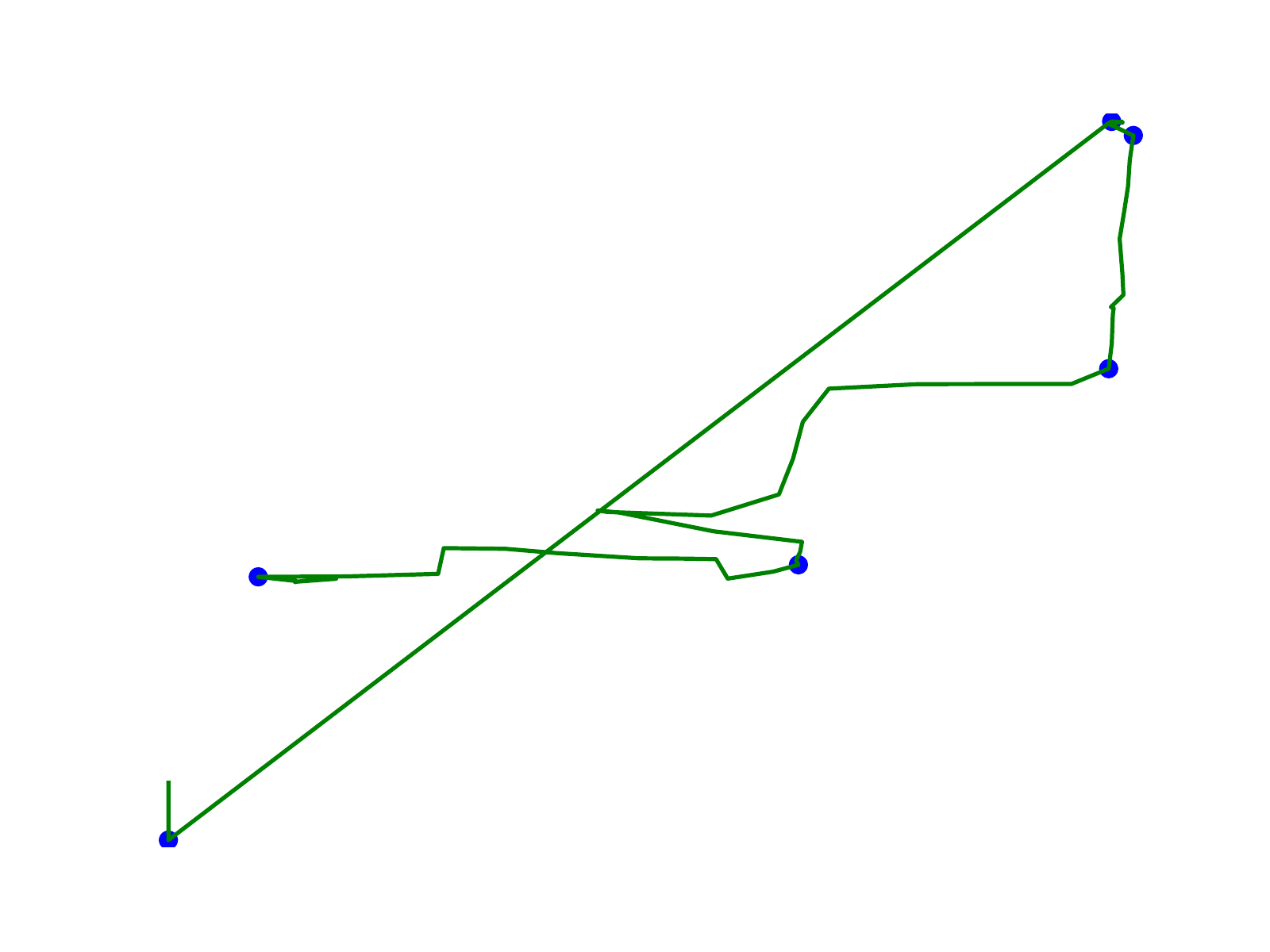}
	\includegraphics[width=0.19\linewidth]{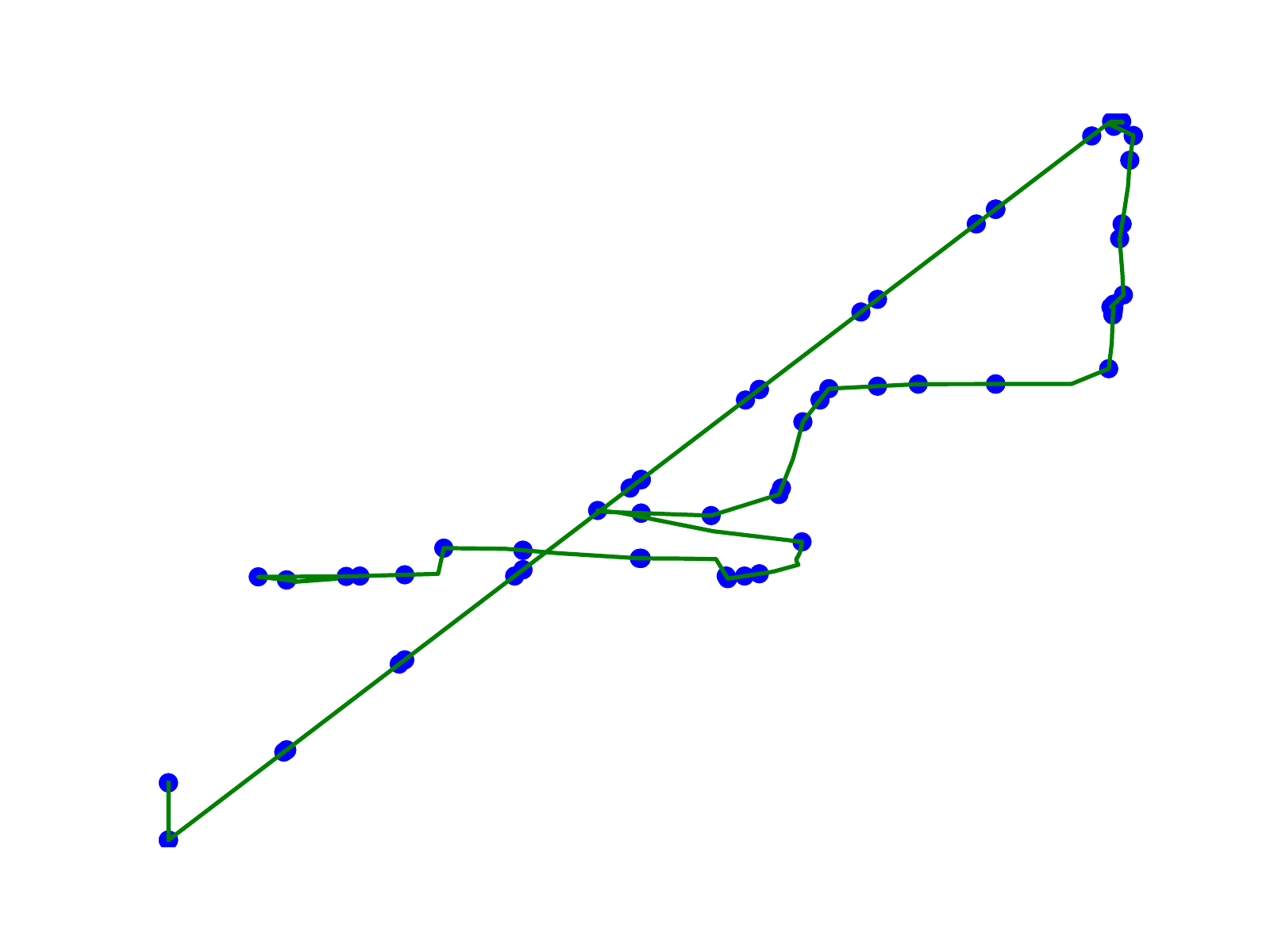}
	\includegraphics[width=0.19\linewidth]{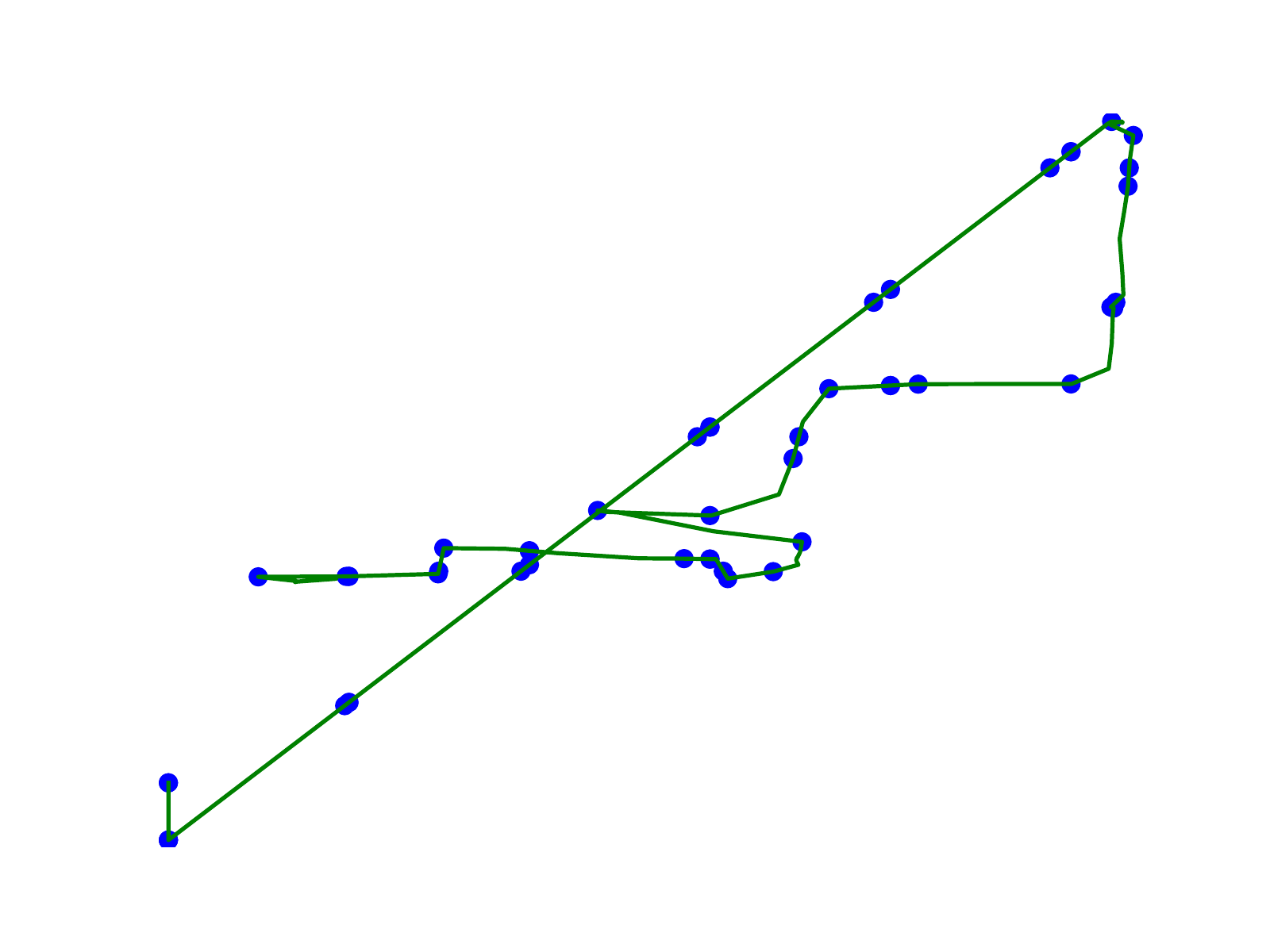}
	\includegraphics[width=0.19\linewidth]{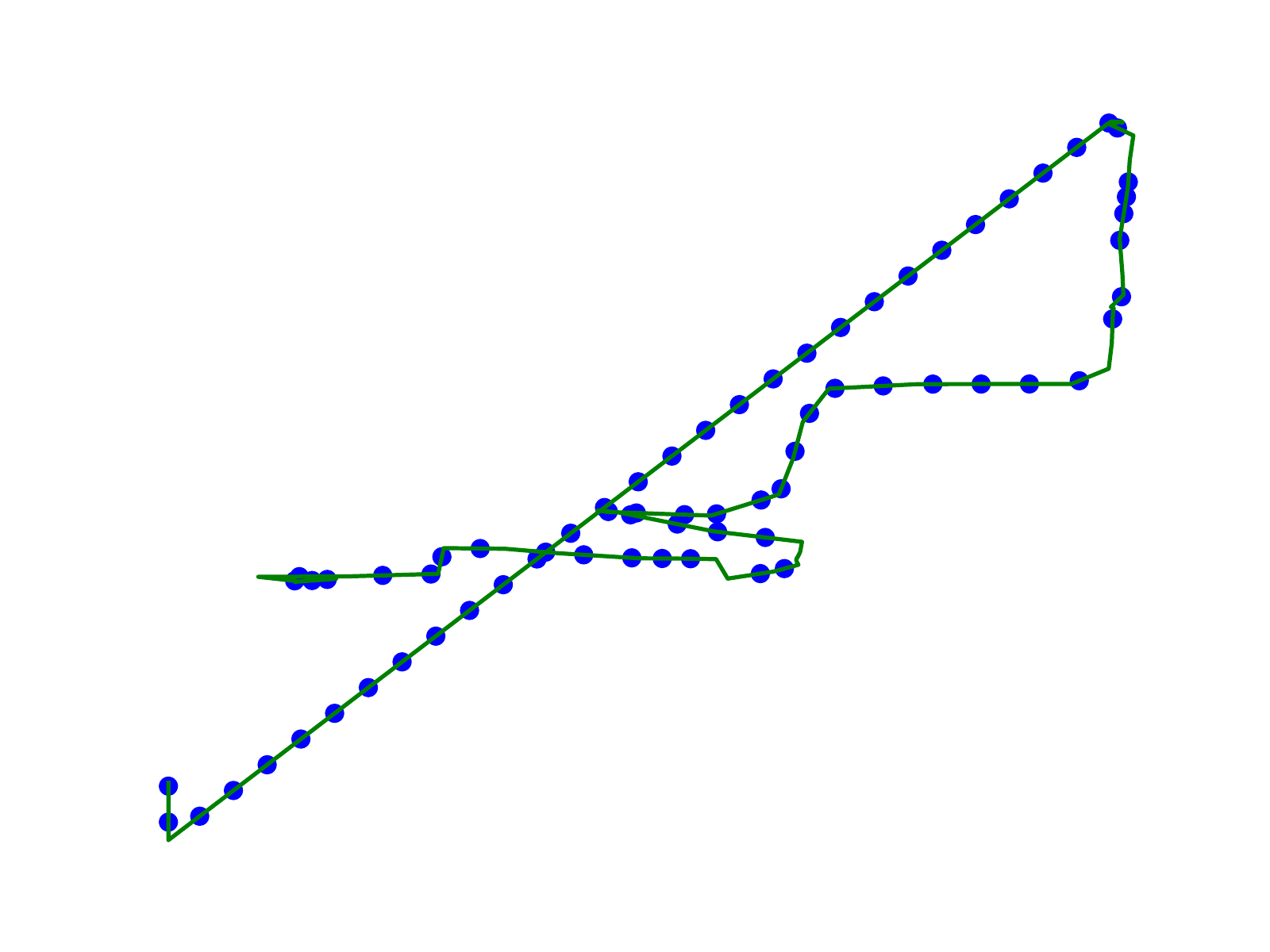}
	
	\caption{Full scanning coresets from left to right:  \Fmdp for halfplanes, \Fmk for halfspaces, \Fmgd for disks (radius $r$), \Fmgd for disks (radius $2r$), and \Fme.}
	\label{fig:blockpartial}
\end{figure*}
We next list a series of trajectory approximations we study; all bounds assume all trajectories lie in a $[0,1] \times [0,1]$ domain, otherwise $\alpha$ is scaled accordingly.

\myParagraph{All Waypoints.}
This baseline simply sets $P'_t = P_t$ and retains all waypoints.  
This does not deal with long segments well, and does not achieve an $\alpha$-approximation except for halfspaces, but may be appropriate for data collected at regular and short intervals.  
%
%

\myParagraph{Random Sampling.}
This baseline randomly samples $k$ points from segments proportional to arclength.  Let $L$ be the total arclength of a trajectory $t$.  
Based on VC-dimensional~\cite{VC71} and $\eps$-net~\cite{HW87} arguments, if $(\mathbb{R}^2,\Eu{C})$ has constant VC-dimension (as with disks, halfspaces, and rectangles) then for $k = O((L/\alpha) \log (L/\alpha))$, with constant probability it is an $\alpha$-approximation.  


\myParagraph{Even.}
In this sketch, we select $k = L/\alpha$ points evenly spaced according to arclength, where again $L$ is the total arclength of a trajectory.  This deterministically creates an $\alpha$-approximation.  To preserve the proportionality property for the partial case, we treat the trajectories as if they are chained together to adjust the first selected point from each trajectory -- so the first points on trajectory $t_j$ is a distance $\alpha$ from last point on trajectory $t_{j-1}$.

\myParagraph{DP algorithm.}
The Douglas-Peucker (DP) algorithm~\cite{HS92} is frequently used in practice as a compression step for trajectory simplification. This method iteratively removes waypoints from the original trajectory in a greedy fashion until removing another one would cause the Hausdorff distance between the original and the simplified one to exceed a chosen parameter  $\alpha$. 
This ensures, for instance, that no query shape $C \in \Eu{C}$ can intersect the original trajectory $t$ at a depth $\alpha$ into $C$ without also intersecting the simplified trajectory.  For halfspaces this provides an $\alpha$-approximation (see Lemma \ref{lem:hs-ch}).  However for rectangles and disks, this guarantee is only over the trajectory's segments, but not the waypoints $P'_t$, so does not provide an $\alpha$-approximation as desired.  

\myParagraph{Convex Hull.}
This puts all vertices on the convex hull of $P_t$ in $P'_t$.  This is a $0$-approximation (has no error) for halfspaces.  

\begin{lemma}
\label{lem:hs-ch}
A halfspace $h \in \Eu{C}$ intersects a trajectory if and only if it intersects at least one of its  waypoints.  
\end{lemma}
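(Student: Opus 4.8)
The plan is to prove the two directions separately. The forward implication is immediate, and the entire content lies in the reverse implication, which reduces to the fact that an affine function restricted to a segment attains its minimum at an endpoint.

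First the easy direction: suppose $h$ contains a waypoint $p_j$ of the trajectory $t$. Since each $p_j$ lies on $t$ (it is an endpoint of the segments $s_{j-1}$ and $s_j$ that make up $t$), the point $p_j \in t \cap h$ already witnesses that $h$ intersects $t$.

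For the converse, suppose $h$ intersects $t$. Writing $t = \bigcup_{j=1}^{m-1} s_j$ with $s_j = \overline{p_j p_{j+1}}$, there is some segment $s_j$ with a point $q \in s_j \cap h$. Express $h = \{x \in \R^2 \mid \ell(x) \le 0\}$ for an affine function $\ell$ (for an open halfspace replace ``$\le$'' by ``$<$''; the argument is verbatim the same). Since $q \in s_j$, write $q = (1-\lambda) p_j + \lambda p_{j+1}$ for some $\lambda \in [0,1]$; affinity of $\ell$ then gives $\ell(q) = (1-\lambda)\ell(p_j) + \lambda\,\ell(p_{j+1})$. Because $\ell(q) \le 0$, this convex combination of $\ell(p_j)$ and $\ell(p_{j+1})$ is nonpositive, hence $\min\{\ell(p_j), \ell(p_{j+1})\} \le 0$, i.e., the corresponding waypoint lies in $h$. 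This proves the lemma.

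There is no substantive obstacle here; the only points to be careful about are (i) fixing the open/closed convention that the family $\Eu{C}$ uses, so that the inequality in the definition of $h$ matches, and (ii) noting that exactly this one-line argument is what makes the All Waypoints sketch a $0$-approximation for halfspaces, whereas it fails for disks and rectangles precisely because a point strictly interior to a long segment can be enclosed by such a shape that misses both of the segment's endpoints.
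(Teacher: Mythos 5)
Your proof is correct and rests on the same fact as the paper's: a halfspace's defining affine constraint restricted to a segment is extremized at the segment's endpoints, so a segment meets $h$ only if one of its endpoints does. The paper phrases this via the extreme points of the segment and a small case split (boundary crosses a segment vs.\ $h$ contains the whole trajectory), while your convex-combination argument handles both cases uniformly, but the approach is essentially identical.
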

\begin{proof}
A halfspace $h$ intersects part (but not all) of a trajectory if and only if its boundary intersects one (or more) of its segments.  If it intersects all of the trajectory, it must contain all waypoints.  For a boundary plane to intersect a segment $s_j$, it must be that one of its waypoints $p_j,p_{j+1}$ is inside the halfspace and the other is not since the segment is a convex object with these points as the only extreme points.  Hence, we can check intersection of a trajectory $t$ with $h$ by checking if any of its waypoints are in $h$; otherwise all of the waypoints and the entire trajectory must be outside of $h$.  
\end{proof}

\myParagraph{Approx Hull.}
We create $P'_t$ as the $\alpha$-kernel coreset which approximates the convex hull of $P_t$~\cite{AHV04}.  This provides an $\alpha$-approximation for halfspaces with only $O(1/\sqrt{\alpha})$ points, independent of arclength, but with restriction that all trajectories are in $[0,1]\times[0,1]$.

\myParagraph{Lifting and Convex Hull for Disks.} 
For disks, there is a reduction via a data transformation (the Veronese Map $v$) that provides similar approximations as the convex hull approach for halfspaces.  Given a point set $P \in \R^2$, the intersection of that point set with a disk is preserved under a map to $\R^3$ where disks are mapped to halfspaces.  For $p = (p_x,p_y) \in \R^2$ we replace it with $v(p) = (p_x, p_y, p_x^2 + p_y^2) \in \R^3$.   Every disk becomes a halfspace in $\R^3$ and contains the same subset of points as the disk did in $\R^2$.  

After this transformation, set $P_t'$ as the points on the convex hull (or in the $\alpha$-kernel coreset~\cite{AHV04}) in $\R^3$.  However, because Lemma \ref{lem:hs-ch} does not apply to disks, this does not have any approximation guarantee.  That is, a disk $C$ in $\R^2$ which intersects a segment but no waypoints, transforms to a halfspace $h_C \in \R^3$ which contains part of a segment (these segments are now quadratic curves, and are not straight), but still no waypoints.

\begin{lemma}
	\label{lem:trajgrid} 
	The number of cells of a regular grid with grid cells of size $\ell \times \ell$ that a polyline can enter is $O(L / \ell)$.
\end{lemma}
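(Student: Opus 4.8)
The plan is to bound the number of grid cells a polyline can enter by charging cell-entries to arclength. First I would handle a single segment $s = \overline{p_1 p_2}$ of the polyline, and show that a straight segment of length $\ell_s$ enters at most $O(\ell_s/\ell + 1)$ cells. The key observation is that each time the segment crosses from one cell to an adjacent cell it must cross a grid line (either a vertical line $x = i\ell$ or a horizontal line $y = j\ell$). Walking along the segment, consecutive crossings of \emph{vertical} grid lines are separated by a portion of the segment whose $x$-extent is exactly $\ell$, hence whose (Euclidean) length is at least $\ell$; the same holds for horizontal grid lines. So the number of vertical grid lines the segment crosses is at most $\lceil \ell_s/\ell\rceil$, and likewise for horizontal grid lines, giving at most $2\lceil \ell_s/\ell\rceil + 1$ cells visited by $s$ (the $+1$ for the starting cell).

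Next I would sum over all segments $s_1, \ldots, s_{m-1}$ of the polyline. Writing $\ell_{s_j}$ for the length of $s_j$, the total number of cell-visits is at most $\sum_{j=1}^{m-1} \left( 2\lceil \ell_{s_j}/\ell \rceil + 1 \right) \le \sum_{j=1}^{m-1} \left( 2\ell_{s_j}/\ell + 3 \right) = 2L/\ell + 3(m-1)$, where $L = \sum_j \ell_{s_j}$ is the total arclength. This already gives $O(L/\ell)$ \emph{provided} $m = O(L/\ell)$, which need not hold if the polyline has many tiny segments. To get the clean $O(L/\ell)$ bound stated, I would instead argue directly about cell boundaries rather than per-segment: the number of distinct cells entered is at most one plus the number of times the polyline, traversed in order, crosses a grid line; and between two consecutive crossings of vertical grid lines the traversed subcurve has $x$-extent $\ell$ and thus arclength $\ge \ell$, so the number of vertical-line crossings over the whole polyline is at most $L/\ell$, and symmetrically for horizontal lines. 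Hence the polyline enters at most $1 + 2L/\ell = O(L/\ell)$ cells.

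The main obstacle is precisely this subtlety: a naive per-segment count picks up an additive $O(m)$ term from segments shorter than $\ell$, so one must charge crossings to arclength \emph{globally} along the polyline rather than segment by segment, observing that a maximal subcurve lying between two consecutive vertical (resp. horizontal) grid-line crossings must have horizontal (resp. vertical) extent at least $\ell$ and therefore arclength at least $\ell$. A minor technical point to dispatch is degenerate behavior — the polyline running along a grid line, or passing exactly through a grid vertex — which is handled by a standard perturbation argument or by counting a cell as "entered" only when the curve has interior points in that cell, neither of which affects the asymptotic bound.
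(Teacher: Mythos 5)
Your per-segment computation in the first paragraph is correct, and you rightly flag its weakness: the additive $O(m)$ term coming from segments much shorter than $\ell$. The problem is that the ``global'' argument you substitute for it rests on a false claim. You assert that between two consecutive crossings of vertical grid lines the traversed subcurve has $x$-extent $\ell$ and hence arclength at least $\ell$. That is true for a single segment, which is $x$-monotone, but not for a polyline: the curve can turn at a waypoint and re-cross the \emph{same} vertical line $x=i\ell$ after arbitrarily small arclength. A polyline of total length $100\delta$ that zigzags across $x=i\ell$ fifty times makes fifty vertical-line crossings while $L/\ell$ is tiny, so the inequality ``number of vertical-line crossings $\le L/\ell$'' fails, and with it your chain ``cells $\le 1+\text{crossings} \le 1+2L/\ell$.'' Restricting to crossings that enter a \emph{new} cell does not immediately repair this: near a grid vertex the curve can collect up to four distinct new cells within arbitrarily small arclength, so even consecutive first-entry crossings need not be $\ell$ apart along the curve. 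One must argue that only $O(1)$ new cells can be collected before the curve is forced to travel $\Omega(\ell)$, and your accounting does not do that.

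The paper closes exactly this hole with a localization device: it partitions the cells into $9$ classes (a $3\times 3$ tiling pattern, so that the $8$ neighbors of any cell all lie in the other $8$ classes), which guarantees that any two distinct cells of the same class are separated by at least one intervening cell and hence by Euclidean distance at least $\ell$. Ordering the distinct cells of one class by time of first entry, consecutive first entries force the curve to traverse arclength at least $\ell$, giving $O(L/\ell)$ cells per class and $O(L/\ell)$ overall after multiplying by $9$. If you want to keep a crossing-based argument you need an equivalent step that prevents repeated charging to the same neighborhood; as written, the key inequality in your global argument is simply not true for non-monotone curves.
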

\begin{proof}
	We will group cells into 9 groups and analyze each separately.  Each cell is in the same group as other cells two hops over in one of the 8 directions (left, right, $45\deg$, etc...). Each cell touches 8 other cells which are not in its group, and each one is in a distinct group. 
	Now within each group, to intersect a cell and enter another one the trajectory must travel a distance of $\Omega(\ell)$, since it will have to pass the complete vertical or diagonal distance of a cell. Thus, within a group a trajectory of length L can touch at most $O(L/\ell)$ cells. And in 9 groups, the total number of cells is at most $9 O(L/\ell) = O(L/\ell)$. 
\end{proof}

\myParagraph{Grid Kernel.  }
For disks it makes sense to adjust the approximation for different radii as smaller values of $k$ are needed for very large radii disks which can potentially intersect many trajectories. By constructing multiple approximations and scanning each with different radii significant speedups in practice can be realized.

We adjust the gridding technique for disks, and specifically for a family $\Eu{C}_r$ which only considers disks of radius at least $r$.  
We consider grid cells of size $\gamma \times \gamma$ with $\gamma = \sqrt{2\alpha r - \alpha^2 / 2}$.  Within each grid cell we retain multiple points in $P'_t$, specifically those on a $(\alpha/(2\sqrt{2} \gamma))$-kernel coreset of the points in that cell.

\begin{lemma}\label{lem:gk-err}
For a trajectory with arclength $L$, at most $O(L/(r^{1/4} \alpha^{3/4}))$ points are put in $P'_t$ for \Fmgd, and it is an $\alpha$-approximation for $\Eu{C}_r$.  
\end{lemma}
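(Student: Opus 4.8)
The claim has three parts---the bound on $|P'_t|$, the ``no false positives'' property, and the ``limited false negatives'' property---and I would attack them in that order. For the size bound, combine two facts. First, by Lemma~\ref{lem:trajgrid} applied with cell side $\ell=\gamma$, the trajectory enters only $O(L/\gamma)$ grid cells. Second, inside each cell we retain an $\eps$-kernel of the vertices of $t$ lying in that cell (the original waypoints there together with the crossings of $t$ with the cell boundary), with $\eps=\alpha/(2\sqrt2\,\gamma)$; in the plane an $\eps$-kernel has $O(1/\sqrt\eps)=O(\sqrt{\gamma/\alpha})$ points~\cite{AHV04}. Multiplying gives $|P'_t|=O\big((L/\gamma)\sqrt{\gamma/\alpha}\big)=O(L/\sqrt{\alpha\gamma})$. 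It then remains to substitute $\gamma=\sqrt{2\alpha r-\alpha^2/2}$: since $\alpha\le r$ (needed anyway for $\gamma$ to be real and in this regime) we have $\gamma=\Theta(\sqrt{\alpha r})$, hence $\sqrt{\alpha\gamma}=\Theta(\alpha^{3/4}r^{1/4})$ and $|P'_t|=O(L/(r^{1/4}\alpha^{3/4}))$.

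``No false positives'' is immediate: every retained point lies on the convex hull of points that are themselves on $t$, so $P'_t\subseteq t$, and thus $t\cap C=\emptyset$ forces $P'_t\cap C=\emptyset$. For ``limited false negatives,'' fix a disk $C\in\Eu{C}_r$ with radius $\rho\ge r$ and center $c$, and suppose $t+\alpha v$ meets $C$ for every unit vector $v$. The plan is a two-step argument. Step one turns this shift-robust hypothesis into a concrete penetration statement: either $t\subseteq C$, in which case $P'_t\subseteq t\subseteq C$ and we are done, or $t$ contains a point $y$ with $\|y-c\|\le\rho-\alpha$ (``$\alpha$-deep'' in $C$)---the witness being the shift that moves $t$ radially outward through its point closest to $c$, which would otherwise carry all of $t$ out of $C$.

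Step two is local and is where the constants are calibrated. Let $\sigma$ be a cell containing $y$ and set $t_\sigma=t\cap\sigma$, a polyline of diameter at most $\sqrt2\,\gamma$ that contains $y$. If $\sigma\subseteq C$ we are done, so assume $\partial C$ crosses $\sigma$. Because $\rho\ge r$, the arc of $\partial C$ inside $\sigma$ has sagitta at most $\alpha/2$ over any chord of length $\le\sqrt2\,\gamma$; this is exactly the inequality $\gamma^2\le 2\alpha\rho-\alpha^2/2$ secured by $\gamma=\sqrt{2\alpha r-\alpha^2/2}$ (see Figure~\ref{fig:archeight}). Consequently $C\cap\sigma$ contains the $\sigma$-part of a halfplane $H$, and the $\alpha$-deep point $y$ lies at depth at least $\alpha/2$ inside $H$. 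Now invoke the $\eps$-kernel guarantee in the direction normal to $\partial H$: over a cell of diameter $\le\sqrt2\,\gamma$, an $\eps$-kernel with $\eps=\alpha/(2\sqrt2\,\gamma)$ reproduces the extent in every direction up to additive $\eps\cdot\sqrt2\,\gamma=\alpha/2$. Hence some retained point $q\in P'_t\cap\sigma$ is at most $\alpha/2$ on the wrong side of $\partial H$ along that axis, so $q\in H\cap\sigma\subseteq C$, giving $P'_t\cap C\ne\emptyset$.

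The routine parts are the size arithmetic and ``no false positives.'' I expect the main obstacle to be step one of the last argument---converting ``$t+\alpha v$ meets $C$ for all unit $v$'' into the usable fact that either $t\subseteq C$ or $t$ penetrates $C$ to depth $\alpha$, in particular ruling out pathological windings of $t$ around $c$---together with verifying that the two error budgets, the $\alpha/2$ from the curvature of $\partial C$ across a $\gamma$-cell and the $\alpha/2$ from the $\eps$-kernel, sum to exactly $\alpha$; this sum is precisely what forces the choices $\gamma=\sqrt{2\alpha r-\alpha^2/2}$ and $\eps=\alpha/(2\sqrt2\,\gamma)$.
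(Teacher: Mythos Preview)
Your proposal takes essentially the paper's approach. The size bound is argued identically: Lemma~\ref{lem:trajgrid} gives $O(L/\gamma)$ cells, an $\eps$-kernel with $\eps=\alpha/(2\sqrt2\,\gamma)$ contributes $O(\sqrt{\gamma/\alpha})$ points per cell, and substituting $\gamma=\Theta(\sqrt{\alpha r})$ yields the stated bound. For the approximation guarantee, the paper's proof is precisely your step~2: it exhibits the same two $\alpha/2$ budgets---the directional kernel error $\sqrt2\,\gamma\cdot\eps=\alpha/2$, and the sagitta of a radius-$\ge r$ arc over a chord of length at most $\sqrt2\,\gamma$, which the choice of $\gamma$ forces to equal $\alpha/2$---and sums them to~$\alpha$. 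The paper does \emph{not} carry out your step~1; it argues at the informal level of ``how far a disk can protrude past the retained points'' and never reduces the shift-based hypothesis to the existence of an $\alpha$-deep point. Your instinct that step~1 is the delicate part is correct (the closest-point radial shift does not by itself force depth $\alpha$ when $t$ winds around~$c$), but the paper simply does not confront this issue, so your proposal is already at least as complete as the published argument.
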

\begin{proof}
The maximum distance between two points in a grid cell is $\sqrt{2} \gamma$.  Thus the $(\alpha / (2\sqrt{2}\gamma))$-kernel incurs at most $\sqrt{2} \gamma \cdot (\alpha / 2\sqrt{2} \gamma) = \alpha / 2$ error between the convex hull of all points in that cell, and the hull of the approximate ones.  

However, a disk may intersect part of a trajectory without intersecting any of the waypoints on the convex hull.  But, if the longest possible edge in convex hull is $\sqrt{2}\gamma$ then a disk of radius $r$ not containing a waypoint can protrude into the hull at most a distance (see Figure \ref{fig:archeight}(Right))
\begin{align*}
r - \sqrt{r^2 + (\sqrt{2}\gamma/2)^2} 
&=
r - \sqrt{r^2 + (\sqrt{2 \alpha r - \alpha^2 /2}/\sqrt{2})^2}
\\&  =
  r - \sqrt{r^2 + (\alpha r - \alpha^2/4)}
\\&  = 
  r - \sqrt{(r + \alpha/2)^2}
  =
  r - r + \alpha/2
  =
  \alpha/2.
\end{align*}

The sum of these two errors is at most $\alpha/2 + \alpha/2 = \alpha$, as desired.  
  
The total number of points associated with a trajectory of length $L$ is:
the number of cells it intersects $O(L/\gamma)$ 
times
the number of points in each cell $O(1/\sqrt{(\alpha / (2 \sqrt{2} \gamma}) = O(\sqrt{\gamma/\alpha})$.  
In total this is 
\[
O\hspace{-0.5mm}\Big(\hspace{-0.3mm}\frac{L}{\gamma}\hspace{-0.3mm}\cdot \hspace{-0.3mm}\sqrt{\gamma/\alpha}\hspace{-0.3mm}\Big) 
=\hspace{-0.5mm} 
O\hspace{-0.5mm}\Big(\hspace{-0.3mm}\frac{L}{\sqrt{\alpha \gamma}}\hspace{-0.3mm}\Big)
=\hspace{-0.5mm}
O\hspace{-0.5mm}\Big(\hspace{-0.3mm}\frac{L}{\sqrt{\alpha \sqrt{\alpha r}}}\hspace{-0.3mm}\Big)
=\hspace{-0.5mm}
O\hspace{-0.5mm}\Big(\hspace{-0.3mm}\frac{L}{\alpha^{3/4} r^{1/4}}\hspace{-0.3mm}\Big)\hspace{-0.3mm}.
\]
\end{proof}

\section{Trajectory Sampling}
\label{sec:VC}
To enable efficient scanning for the full model we require that the number of regions grows polynomially with the number of trajectories as otherwise random sampling cannot be used to attain an additive approximation bound.
For the trajectory range spaces we consider it was not previously known if this was true. A bound on the VC-dimension of these range spaces would ensure this.  
It turns out the VC-dimension bounds are tied in some manner to $k$, the number of points representing each trajectory; the number of possible subsets is then a function of the range complexity ($\nu$, e.g., dimension $d$) and the trajectory complexity ($k$). 

We now consider that each trajectory $t$ is represented by exactly  $k$ labeled points $P_t'$ (if it is less than $k$, we can duplicate the last point to increase to $k$).  That is $P'_t \subset \mathbb{R}^{d \times k}$.  
Next consider an alternative range space $(\mathbb{R}^{d \times k}, \RangeSet_{k})$ where the ground set is the approximate waypoints. 
The number of ranges induced on a set of labeled points (two ranges are the same if they contain the same set of labels) is upper bounded by the number of unique subsets on the $mk$ unlabeled points. Therefore the growth function on $m$ sets is upper bounded by $O((mk)^{\nu}) = m^{O(\nu \log k)}$, where the base range space $(\mathbb{R}^{d}, \RangeSet)$ has VC-dimension $\nu$. 

\begin{lemma}
	\label{lem:alphasample}
	The growth constant for $(\mathbb{R}^{d \times k}, \RangeSet_{k})$ is $O(\nu \log k)$ where $\nu$ is the growth constant of the range space $(\mathbb{R}^{d}, \RangeSet)$.  Hence the VC-dimension is $O(\nu \log k)$.  
\end{lemma}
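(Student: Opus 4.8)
The plan is to bound the growth function of $(\mathbb{R}^{d\times k},\RangeSet_k)$ by counting through the underlying unlabeled points, and then invert that polynomial bound to extract the VC-dimension. This is essentially the argument sketched in the paragraph preceding the lemma, carried out with the constants made explicit.

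First I would record the defining observation: a shape $C\in\Eu{C}$ induces the range $A_C=\{P'_t : P'_t\cap C\neq\emptyset\}$, so the subset of trajectories that $C$ selects is a function only of which underlying points fall in $C$. Hence, fix any $m$ labeled point sets $P'_{t_1},\dots,P'_{t_m}$; together they form a set $Q$ of at most $mk$ points of $\mathbb{R}^d$. Since $(\mathbb{R}^d,\RangeSet)$ has growth constant $\nu$, the Sauer--Shelah lemma bounds the number of distinct subsets $\{q\in Q: q\in C\}$, over $C\in\Eu{C}$, by $c\,(mk)^{\nu}$ for an absolute constant $c$.

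Next I would note that the map sending each point-subset $Q'\subseteq Q$ to the trajectory-subset $\{t_i : P'_{t_i}\cap Q'\neq\emptyset\}$ is well defined (though many-to-one), so the number of distinct ranges of $\RangeSet_k$ restricted to $\{P'_{t_1},\dots,P'_{t_m}\}$ is at most $c\,(mk)^{\nu}$. Maximizing over the choice of the $m$ point sets gives $\pi_{\RangeSet_k}(m)\le c\,(mk)^{\nu}$ for the growth function $\pi_{\RangeSet_k}$; since $(mk)^{\nu}\le m^{\nu(1+\log_2 k)}$ for $m\ge 2$, this is exactly the claimed $m^{O(\nu\log k)}$ bound, i.e.\ growth constant $O(\nu\log k)$.

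The remaining step is the log-inversion yielding the VC-dimension. If $D$ labeled point sets are shattered then $2^{D}\le\pi_{\RangeSet_k}(D)\le c\,(Dk)^{\nu}$, hence $D\le\log_2 c+\nu\log_2 D+\nu\log_2 k$; the elementary fact that $a\le b+c'\log_2 a$ implies $a=O(b+c'\log_2 c')$ then gives $D=O(\nu\log_2 k+\nu\log_2\nu+\log_2 c)=O(\nu\log k)$, using that $\nu$ (and $c$) are constants for the halfspace, disk, and rectangle families we care about. The only part of this that takes any thought is that final inversion; the reduction to counting subsets of the $mk$ base points is a routine over-count, and the single modeling subtlety worth flagging is precisely that distinct point-subsets can collapse to the same trajectory-subset, which keeps the estimate a valid upper bound.
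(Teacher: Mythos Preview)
Your proposal is correct and follows exactly the same approach as the paper: the paper's argument is precisely the over-count through the $mk$ unlabeled base points to get the $O((mk)^\nu)=m^{O(\nu\log k)}$ growth bound, stated tersely in the paragraph preceding the lemma. You have simply made explicit the Sauer--Shelah application, the many-to-one collapse from point-subsets to trajectory-subsets, and the log-inversion extracting the VC-dimension bound, all of which the paper leaves implicit.
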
 

In our context, this means after trajectories in $\mathbb{R}^2$ are represented by at most $k$ points, then the VC-dimension for ranges defined by disks, halfspaces, or rectangles have VC-dimension $O(\log k)$.  

So as long as the number of trajectory waypoints is bounded by $k$, the sample sizes for $S$ and $N$ (needed in the two-level framework~\cite{SSSS,MP18b}) are increased by a rather benign near-logarithmic in $k$.   We will invoke this in the context of scanning algorithms in Section \ref{sec:full-scanning}.

\myParagraph{Bounded $k$ is needed.}
It seems hopeful that a better bound independent of $k$ may be possible, but we can show that for halfspaces, disks, and rectangles it is possible to construct cases where the complexity of the trajectories, and hence the VC dimension is unbounded. To see this we will first restrict to the set of ranges $\Eu{H}_d$ induced by halfspaces in $\R^d$.  Indeed, we can replace each trajectory with a convex set by Lemma \ref{lem:hs-ch}, so we only need to work with a ground set of all convex sets $\mathbb{C}$.  However, if the complexity of the trajectories, and hence the convex sets, is unbounded, then so is the VC-dimension even in $\R^2$, as the next lemma shows.  

Because disks in $\mathbb{R}^d$ are special cases of halfspaces in $\mathbb{R}^{d+1}$ (by the Veronese map), this bound holds for disks as well.

\begin{lemma}
\label{lem:VC-H}
The VC-dimension of $(\mathbb{C},\Eu{H}_2)$ is unbounded, and hence so is the VC-dimension of $(T, \Eu{H}_2)$ with no restriction on $k$.    
\end{lemma}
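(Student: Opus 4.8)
The plan is to prove the stronger statement that for \emph{every} $n$ there exist $n$ convex sets in $\R^2$ that are shattered by halfplanes; since $n$ is arbitrary, this shows $(\mathbb{C},\Eu{H}_2)$ has unbounded VC-dimension. The conclusion for $(T,\Eu{H}_2)$ is then immediate from Lemma~\ref{lem:hs-ch}: a halfplane meets a trajectory exactly when it meets the convex hull of its waypoints, so a trajectory whose waypoint set is the vertex set of a convex polygon realizes precisely that polygon as an element of the range space. Since the polygons produced below have up to $2^{n-1}$ vertices, this simultaneously shows that no bound on $k$ can be removed from Lemma~\ref{lem:alphasample}.

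For the construction I would fix $n$, set $N=2^{n}$, place $N$ points in strictly convex position on a circle, labeling them $w_S$ by the subsets $S\subseteq[n]$, and set $K_i=\conv\{w_S : i\in S\}$ for each $i\in[n]$. Given a target subset $T\subseteq[n]$, let $v_T$ be the outward circle normal at $w_T$, so that $w_T$ is the unique maximizer of $x\mapsto\langle x,v_T\rangle$ over $\{w_S\}_S$; pick $\beta_T$ strictly between $\max_{S\ne T}\langle w_S,v_T\rangle$ and $\langle w_T,v_T\rangle$ and take $H_T=\{x:\langle x,v_T\rangle\ge\beta_T\}$. Since the maximum of a linear functional over a polytope is attained at a vertex, $H_T$ meets $K_i$ iff some vertex of $K_i$ lies in $H_T$; but the only $w_S$ with $\langle w_S,v_T\rangle\ge\beta_T$ is $w_S=w_T$, and $w_T$ is a vertex of $K_i$ precisely when $i\in T$. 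Hence $\{\, i : K_i\cap H_T\ne\emptyset \,\}=T$ for every $T\subseteq[n]$, so $\{K_1,\dots,K_n\}$ is shattered; as $n$ was arbitrary this proves unbounded VC-dimension.

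The substantive part is the choice of configuration, not the verification, which is a one-line support-function computation. The point to appreciate --- and the real ``obstacle'' --- is that neither halfplanes alone nor convex sets of bounded combinatorial complexity can achieve this: for ground sets of $O(1)$-complexity convex sets the number of distinct halfplane ranges is only polynomial in the ground-set size, so the $2^{n}$ distinct behaviors must be carried entirely by the vertex counts of the polygons $K_i$, which is exactly why the example needs $2^{n}$ points on the circle and why a bound on $k$ is unavoidable. Finally, since disks in $\R^2$ become halfplanes in $\R^3$ under the Veronese lift (and a trajectory lifts to a curve whose waypoint hull is a polytope), the same example yields unbounded VC-dimension for disk ranges as well, matching the remark just before the lemma.
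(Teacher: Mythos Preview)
Your proof is correct and follows essentially the same strategy as the paper: place $2^n$ points in convex position indexed by subsets of $[n]$, build $n$ convex polygons from them, and for each target $T\subseteq[n]$ use a halfplane that isolates the single position labeled $T$; the reduction to $(T,\Eu{H}_2)$ via Lemma~\ref{lem:hs-ch} is identical. Your implementation is in fact slightly cleaner than the paper's---by letting $K_i=\conv\{w_S:i\in S\}$ you avoid the perturbation step the paper uses (shifting each polygon's $T$-vertex clockwise or counter-clockwise), since the membership structure alone already determines which $K_i$ contain $w_T$.
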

\begin{proof}
For any integer $z$, we can design a set of $z$ convex sets $c_1, c_2, \ldots, c_z \in \mathbb{C}$ in $\R^2$ so that we can shatter the set.  Let each $c_j$ be a nearly identical convex polygon with $2^z$ vertices.  Each of the $2^z$ vertices, which are in nearly the same location for each polygon, corresponds with one of the $2^z$ different subsets we seek to define.  If the vertex is in a polygon that should be in the subset, shift it slightly counter-clockwise; if it is not in the subset, shift it slightly clockwise.  Then a halfspace can include the vertex with all points shifted counter-clockwise (those intended for the subset) and nothing else in any polygon.  Since we can do this independently for all $2^z$ subsets for any $z$, we can shatter a subset of any size $z$.  

Now because a halfspace intersecting a trajectory $t$ is equivalent to intersecting any of its waypoints (see Lemma \ref{lem:hs-ch}), then it is equivalent to intersecting the convex hull of those waypoints.  Thus, we can generate the same construction with a trajectories with $k=2^z$ waypoints for any value $z$, and thus if we have no bound on $k$, we have no bound on the VC-dimension of $(T,\Eu{H}_2)$.  
\end{proof}

For rectangles we can construct a similiar proof.

\begin{lemma}
\label{lem:VC-R}
The VC-dimension of $(T,\Eu{R}_2)$ is unbounded if there is no restriction on $k$.    
\end{lemma}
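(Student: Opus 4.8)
The plan is to mimic the construction from Lemma~\ref{lem:VC-H}, replacing the single ``pivot vertex per subset'' trick for halfspaces with an analogous trick tailored to axis-aligned rectangles. The goal is: for any integer $z$, build $z$ trajectories $t_1,\dots,t_z$ (each with $k = 2^z$ waypoints, so $k$ unbounded as $z\to\infty$) whose induced ranges under $\Eu{R}_2$ shatter $\{t_1,\dots,t_z\}$. As in the halfspace case, I will reserve one ``gadget location'' $g_S$ for each of the $2^z$ subsets $S \subseteq \{1,\dots,z\}$, and at location $g_S$ place one waypoint from each trajectory, positioned so that a small axis-aligned rectangle centered near $g_S$ can be made to contain exactly the waypoints belonging to trajectories $t_j$ with $j \in S$, and (crucially) no waypoint of any trajectory at any other gadget location $g_{S'}$, $S' \ne S$.

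First I would lay out the $2^z$ gadget locations far apart from each other (say along a line, spaced by a large constant), so that any rectangle tight enough to separate waypoints within one gadget cannot reach a neighboring gadget; this guarantees the ``no waypoint elsewhere'' condition automatically once the per-gadget geometry works. Second, within a single gadget at location $g_S$, I need to place $z$ points (one per trajectory) so that some axis-aligned rectangle picks out exactly the subset $S$. For halfspaces a single linear threshold on a perturbation direction sufficed; for rectangles I have two independent axis constraints, so the natural move is to put the $z$ points of a gadget into general position along a tiny diagonal segment and then use the rectangle's left/right and top/bottom edges to cut off a contiguous prefix — but $S$ need not be a prefix. The cleaner fix: at gadget $g_S$, perturb the $z$ points so that those in $S$ are placed on a small sub-box and those not in $S$ are pushed just outside it, exactly as in the halfspace proof but now the ``in/out'' decision is realized by whether the point lies inside vs.\ outside an axis-aligned box rather than a halfplane. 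Concretely, fix a tiny box $B_S$ around $g_S$; for $j \in S$ place $t_j$'s gadget-$S$ waypoint strictly inside $B_S$; for $j \notin S$ place it strictly outside $B_S$ (e.g.\ just beyond one of its corners). Then the rectangle $B_S$ itself is the witness range: it contains precisely $\{t_j : j \in S\}$ among all $2^z z$ waypoints, using the far-spacing of gadgets for the other $2^z-1$ locations.

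Third, I would note that "a trajectory is hit by a rectangle iff one of its $k$ waypoints is" is exactly how the full model defines intersection (per Section~\ref{sec:full} and the setup preceding Lemma~\ref{lem:VC-H}), so the induced range on $\{t_1,\dots,t_z\}$ from rectangle $B_S$ is exactly $S$; ranging $S$ over all subsets shatters the set. Since $z$ is arbitrary and each trajectory uses $k = 2^z$ waypoints, with no bound on $k$ the VC-dimension of $(T,\Eu{R}_2)$ is unbounded.

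The main obstacle I anticipate is the per-gadget placement detail: unlike a halfplane, an axis-aligned rectangle is an intersection of four halfplanes with \emph{fixed} axis directions, so I cannot choose an arbitrary separating direction per subset. I must verify that for every subset $S$ there genuinely exists an axis-aligned box separating the ``in'' points from the ``out'' points of that gadget — this is fine if I get to design the point positions per gadget (I do), by simply placing ``in'' points inside a chosen small box and ``out'' points outside it, but I should double-check that doing so for gadget $S$ does not accidentally make some other gadget's box $B_{S'}$ contain an unwanted point; the large inter-gadget spacing handles this, and I should state that spacing bound explicitly. A secondary, purely bookkeeping point is that trajectories must be genuine polylines through all $2^z$ of their waypoints (in some fixed order); since the gadget locations are generic points in the plane, any ordering yields a valid (self-intersecting is allowed) polyline, and the intersection-with-rectangle condition only depends on the waypoint set, so ordering is irrelevant.
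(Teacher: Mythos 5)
Your construction is correct, but it takes a genuinely different route from the paper's. The paper builds $z$ trajectories each with $z!$ vertices, arranged as permutations of $\{t_1,\dots,t_z\}$ along the two lines $y=x$ and $y=x+1$; since every subset of $\{1,\dots,z\}$ occurs as a prefix of some permutation, a rectangle corner sliding along one of these lines can cut off exactly that subset without disturbing the other permutation blocks. You instead use the standard ``one cluster per subset'' VC lower bound: $2^z$ well-separated gadget locations, one per subset $S$, with each trajectory contributing one waypoint per gadget placed inside or outside a tiny box $B_S$ according to whether $j\in S$, and $B_S$ itself witnessing the range $S$. Your version is more economical ($k=2^z$ waypoints per trajectory versus $z!$) and sidesteps the permutation bookkeeping entirely, because at gadget $g_S$ you already know which subset you need and simply design the in/out placement for it; the paper's permutation device exists precisely to realize arbitrary subsets as prefixes when the separating structure (a corner cutting a sorted order) can only produce prefixes. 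You correctly identify and resolve the one place where the halfspace argument does not transfer verbatim, namely that a rectangle is an intersection of four axis-parallel halfplanes rather than a single arbitrarily oriented one.

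One caveat: your closing claim that ``the intersection-with-rectangle condition only depends on the waypoint set, so ordering is irrelevant'' is only true in the waypoint-represented model $(\R^{d\times k},\RangeSet_k)$ of Section~\ref{sec:VC}. Under the full-curve definition of Section~\ref{sec:full}, a trajectory $t_j$ with $j\notin S$ has a waypoint just outside a corner of $B_S$, and the segments joining it to far-away gadgets can easily pass through $B_S$, which would break the shattering. This is repairable --- shrink the boxes, place the ``out'' waypoints on the side of $B_S$ facing away from all other gadgets, and order the traversal so incident segments leave directly away from $B_S$ --- but you should state that routing explicitly rather than dismiss it. (The paper's own proof reasons only about vertices and glosses over the same point.)
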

\begin{proof}
Again for any integer $z$, we can design a set of $z$ trajectories $t_1, t_2, \ldots, t_z \in T$ in $\R^2$ so that we can shatter the set. Each $t_i$ will have $z!$ vertices such that for each trajectory $t_i$ it has 
$t!/2$ vertices where each vertex takes part in a permutation of $t_1, t_2, \ldots, t_z$ on the line $y = x$ and similarly $t!/2$ vertices on the line $y = x + 1$ partaking in their own set of permutations. Between the two lines every permutation can be constructed. The corner of a rectangle can then cut off a subset of each permutation independendently of intersecting another subset and since all possible $2^z$ subsets are contained in the permutations, every subset of trajectories can be induced.  Thus if we have no bound on $k$, we have no bound on the VC-dimension of $(T,\Eu{R}_2)$.  
\end{proof}

\section{Scalable Algorithms for Finding Trajectory Anomalies}
\label{sec:algo}

We next describe how to efficiently scan over the trajectory range spaces to efficiently find $\eps$-approximate spatial scan anomalies on the various range spaces defined for trajectories, and statistical discrepancy functions $\Phi$.  In the case of the flux and partial models, we provide new direct reductions to the point-set based scanning algorithms.  For the full model these reductions are not possible, and we require the development of several new insights -- different ones for each scanning shape.  In particular, for scanning under the full model with disks (perhaps the most intuitive definition) we develop new ways (the MultiScale Disk approach) to represent and approximate the range space which becomes much more efficient that what was even previously known about point-set based scanning.

\subsection{Reductions for Partial and Flux Models}
\label{sec:reduction}

We first describe two reductions for the flux model and partial range model to algorithms for scanning over points instead of trajectories.  

\myParagraph{Flux model reduction.}
For the flux model, the reduction starts again by sampling trajectory subsets $S,N \subset T$.  Now we convert each trajectory $t \in S$ (or in $N$) to a point set in $S_P$ (or $N_P$) as follows.  We first convert every trajectory $t$ into only its two endpoints $p_1$ and $p_k$ and place both of these points in $S_P$ (or in $N_P$).  
In $S$ we require $r$ (recorded) and $b$ (baseline) values, and we only focused on the simpler variant which considers the linear model $\phi(r(C),b(C)) = |r(C)-b(C)|$.  
The we set
$b(p_k) = b(t)$, $r(p_k) = -r(t)$ and $b(p_1) = -b(t)$, $r(p_1) = r(t)$.  
Note now that if both $p_1,p_k \in C$, then the total contribution $r(C)$ and $b(C)$ is $0$; the points cancel each other out.  When only $p_1 \in C$, then the contribution is $r(p_1) - b(p_1) = r(t) - b(t)$ as desired, and if only $p_k \in C$, then the contribution is $r(p_k) - b(p_k) = -r(r) + b(t) = - (r(t) - b(t))$, also as desired.  

\begin{theorem}
	A flux model scan statistic for the linear statistical discrepancy function $\Phi$ on trajectories can be reduced to a point-based scan statistic on the endpoints.  
\end{theorem}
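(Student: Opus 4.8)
The construction preceding the statement already specifies the reduction map; the plan is to show that this map preserves the objective \emph{exactly}, range by range, so that no error is introduced by the reduction itself and any $\eps$-approximate point-based scan anomaly on the endpoint set is automatically an $\eps$-approximate flux anomaly on the trajectories. Fix the endpoint sets $X^{b} = \{p_1 : t \in T\}$ and $X^{e} = \{p_m : t \in T\}$ together with the signed weights $r(\cdot),b(\cdot)$ defined above, and let $P = X^{b} \cup X^{e}$. First I would verify, by a case analysis on which of the two endpoints of a trajectory $t$ lie inside a shape $C \in \Eu{C}$, that the total signed contribution $\sum_{p \in P'_t \cap C} r(p)$ equals $+r(t)$ when only $p_1 \in C$, equals $-r(t)$ when only $p_m \in C$, and equals $0$ when both endpoints or neither endpoint lie in $C$ (the two weights cancel); and identically for $b$. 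Summing over all $t \in T$, the net recorded flux of $T$ across $\partial C$ is precisely $\sum_{p \in P \cap C} r(p)$, and the net baseline flux is $\sum_{p \in P \cap C} b(p)$. Dividing by the fixed, $C$-independent normalizers $|T_r|$ and $|T|$ (which the endpoint construction leaves unchanged) recovers exactly the flux-model rates $r(C)$ and $b(C)$, so $\Phi(C) = |r(C)-b(C)|$ evaluated on the point set $P$ coincides with the flux-model score for every $C \in \Eu{C}$ simultaneously. In particular $\arg\max_{C} \Phi$ is the same on both sides, and the ``or vice versa'' direction of the flux objective is exactly the branch of the absolute value, so an $\eps$-approximate maximizer transfers with no loss.

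Second, I would assemble the scalable algorithm from this identity: draw the two-level samples $S, N \subseteq T$ as in the point-based framework~\cite{SSSS,MP18b}, convert each sampled trajectory to its first and last waypoints to obtain point sets $S_P$ (carrying the signed $r,b$ weights) and $N_P$, and run the existing linear-discrepancy point scan on $(N_P, S_P)$. Since each trajectory contributes exactly two points, the induced point range space has the same constant VC-dimension as the underlying shape family, so the net and sample sizes --- and hence the runtime --- match those of the point-set algorithms up to constants; and because the per-range objective is exact, the entire $\eps$ budget is spent on the point-scan sampling step alone. This yields the claimed reduction.

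The step I expect to be the main obstacle is the bookkeeping around \emph{signs and normalization}: the endpoint weights are signed and sum to zero over all of $P$, so one must not normalize by the total weight of $P$; instead I would normalize by the fixed counts $|T_r|$ and $|T|$, and correspondingly observe that the linear point scan needs to be (or already is) stated so as to accept signed point weights and to maximize $|r(C)-b(C)|$ over ranges with $r(C),b(C)\in[-1,1]$ rather than in $[0,1]$ --- a cosmetic extension, since the routine still maximizes the absolute value of a linear functional of $C \cap S_P$. A secondary point requiring care is that the cancellation relies on \emph{both} endpoints of a trajectory being selected together; sampling at the trajectory level and then converting (rather than sampling points of $P$ independently) guarantees this, which is precisely why the reduction is phrased in terms of $S, N \subseteq T$.
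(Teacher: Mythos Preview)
Your proposal is correct and follows essentially the same approach as the paper: the paper's argument is precisely the per-trajectory case analysis on which of $p_1,p_k$ lie in $C$, showing that the signed endpoint weights make each trajectory contribute $\pm(r(t)-b(t))$ or $0$ to $r(C)-b(C)$ exactly as the flux model requires. Your write-up is in fact more careful than the paper's about normalization, signed weights, and the need to sample at the trajectory level so paired endpoints stay together; these points are glossed over in the paper but your treatment of them is sound.
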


\myParagraph{Partial model reduction.}
For the partial range model, the key quantities for $\Phi(C)$ are $r(C)$ and $b(C)$, the fraction of all possible contributions from trajectories from the recorded and baseline sets.  Since in the partial range model we restrict to parts of trajectories, independently of which trajectory they are part of, we can convert to a point set input as follows.  Given the full sets of trajectories $T$, we denote the continuous set of points in these trajectories as $X_T$. We  
then take uniform (or weighted) samples of $X_T$ to construct $S_T$ and $N_T$.

Since the contribution of a point in $S_T$ towards $\Phi(C)$ is independent of the contribution of other points on the trajectory (unlike the full model), running a point-based scanning model on $S_T,N_T$ will return the same $\Phi(C)$ value for any $C$ as the trajectory-based partial range model.    

\begin{theorem}
A partial range model scan statistic on trajectories can be reduced to a point-based scan statistic on a uniform sample over the trajectory by arclength.  
\end{theorem}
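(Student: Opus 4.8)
The plan is to show that, under the partial range model, the discrepancy $\Phi(C)$ depends on the shape $C$ only through the two numbers $r(C)=\mu(T_r\cap C)$ and $b(C)=\mu(T\cap C)$, that each of these numbers is exactly a normalized arclength measure of a continuous planar point set, and hence that a uniform sample by arclength reduces the whole problem to the point-based two-level framework of \cite{MP18b,SSSS} with no new ideas needed.

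First I would isolate the key linearity property. Let $\lambda$ denote one-dimensional arclength measure on curves in $\R^2$, so $L(t)=\lambda(t)$ and $\mu(t\cap C)=\lambda(t\cap C)/\lambda(t)$. Because arclength is additive along a curve, for any $U\subseteq T$,
\[
\mu(U\cap C)\;=\;\frac{\sum_{t\in U}\lambda(t\cap C)}{\sum_{t\in U}\lambda(t)}\;=\;\frac{\lambda\!\left(X_U\cap C\right)}{\lambda(X_U)},
\]
where $X_U=\bigsqcup_{t\in U}t\subset\R^2$ is the multiplicity-counting union of the trajectory curves carrying the measure $\lambda$. So $b(C)$ and $r(C)$ are precisely the normalized masses of $C$ in the two (infinite) ground sets $X_T$ and $X_{T_r}$, and $\Phi(C)=\phi(r(C),b(C))$ is nothing but the point-based discrepancy evaluated on these ground sets; consequently $\max_{C\in\Eu{C}}\Phi(C)$ for the partial model equals the point-based spatial scan statistic over $(X_T,\lambda)$ and $(X_{T_r},\lambda)$. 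This is exactly where the partial model parts ways with the full model, whose trajectory contribution is an all-or-nothing maximum over a trajectory's points and therefore does not decompose as a measure.

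Next I would discretize by sampling. Draw $S_b$ (resp.\ $S_r$) as a uniform sample from $X_T$ (resp.\ $X_{T_r}$) with respect to $\lambda$ --- points spread uniformly along arclength, which automatically weights each trajectory in proportion to its length, and which is precisely what the \Fme and random-sampling sketches of Section~\ref{sec:coresets} produce --- and draw a smaller sample $N_T$ to generate the combinatorial family $\Eu{C}_{N_T}$. For every fixed $C$, $|S_b\cap C|/|S_b|$ and $|S_r\cap C|/|S_r|$ are unbiased estimates of $b(C)$ and $r(C)$; since $X_T$ and $X_{T_r}$ are ordinary planar point sets and $(\R^2,\Eu{C})$ has constant VC-dimension for halfspaces, disks, and rectangles (there is no $\log k$ term here, because we are sampling points rather than the labeled $k$-tuples of Section~\ref{sec:VC}), the two-level analysis of \cite{MP18b,SSSS} gives that with $|S_r|,|S_b|=O(\nu/\eps^2)$ and $|N_T|=O(\nu/\eps)$ the maximizer of the estimated discrepancy over $\Eu{C}_{N_T}$ is an $\eps$-approximate spatial scan anomaly. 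Feeding $(S_r,S_b,N_T)$ into the existing point-based scanning algorithm then solves the partial-model instance.

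The only delicate point --- and the main obstacle --- is the bookkeeping of the normalization: the recorded and baseline masses carry different denominators, $\lambda(X_{T_r})$ versus $\lambda(X_T)$, so the two samples must be drawn and normalized against their own totals, and it is precisely the arclength-additivity of $\mu(t\cap C)$ (in contrast to a thresholded or flux-type contribution) that justifies replacing each trajectory by a cloud of points distributed along it. Under general baseline weights $b(t)$ one instead samples from $X_T$ with density proportional to $b(t)$ per unit arclength; nothing else changes.
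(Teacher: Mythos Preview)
Your proposal is correct and follows essentially the same approach as the paper. The paper's own justification is considerably briefer---it simply observes that in the partial model a sampled point's contribution to $\Phi(C)$ is independent of which trajectory it came from, so uniform arclength sampling of $X_T$ yields a point-set instance to which the existing framework applies---but your formalization via the arclength measure $\lambda$, the explicit identification of $X_T$ and $X_{T_r}$ as ordinary planar ground sets with constant VC-dimension, and your attention to the distinct normalizations for $r(C)$ and $b(C)$ spell out precisely the same reduction with more care than the paper itself provides.
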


\subsection{Scanning under the Full Model}
\label{sec:full-scanning}
There are three major challenges in extending scanning algorithms to the full model -- even after first converting each trajectory $t$ into a point set $P'_t$ of size $k$.  
The resulting approaches are multi-faceted, and different for each scanning shape, and summarized in Table \ref{tab:alg}.  

First, in order to obtain the runtime bounds of point-based two-level sampling algorithms, the sets $N$ and $S$, were of size roughly $n = \nu/\eps$ and $s = \nu/(2\eps^2)$ in the point-based model, now need to be of size $n_k = n \cdot k$ and $s_k = s\cdot k$, respectively.  Each object placed into the ``net'' or ``sample'' set now is required to be a set of $k$ points (on average) from each of the trajectories sampled.  The scanning algorithms have linear time in $s_k$ and moderate polynomial (degree $1$ to $3$) in $n_k$, so this increases the runtimes beyond the point-based setting by a moderate polynomial factor in $k$.  The results for the various spatial approximation techniques are summarized in Table \ref{tab:spatial-err}.  
We will demonstrate that for halfspaces and rectangular ranges, this increase is tolerable if the right $\alpha$-spatial approximation is used, but for disks we design a new multi-level approximate scanning approach which works in tandem with the $\alpha$-spatial approximation.  

Second, extra bookkeeping is required to maintain which point sets $P'_t$ are already intersecting a shape $C$ during the scanning process--so that a trajectory is not overcounted when multiple points intersect $C$.  
For this we use a global integer counter array $S.\mathsf{counter}$, where a non-zero counter serves as an indicator that the trajectory $t \in S$ is in the shape $C$ in question.  We maintain and only update $\Phi$ when a counter toggles between zero and non-zero.  
This extra bookkeeping may seem a trivial change, but it prevents the use of some approaches, in particular for rectangle scanning.  

Third, to extend the approximation guarantee~\cite{SSSS,MP18a} based on two-level sampling to this setting, we need to bound the VC-dimension $\nu$ of the range space $(\Eu{T},\Eu{A}_\Eu{C})$.  In Section \ref{sec:VC} we showed that for halfspaces, disks, and rectangles, if there is no bound on $k$, then the VC-dimension is unbounded.  However, we also showed for all of these shapes that when each trajectory is represented by a point sets of size $k$ (the ground set is $\R^{d \times k}$), then the VC-dimension is only $O(d \log k)$ in $d$ dimensions.   That is when the objects are point sets of size at most $k$, they only increase the VC-dimension by a benign $\log k$ factor, and hence the coreset sizes for $n_k$ and $s_k$.   These sample bounds are incorporated into Table \ref{tab:comb-err}.

\begin{table}
\begin{tabular}{|p{80mm}|}
\hline
\vspace{0.2mm}
\textbf{\hspace{7mm} Runtime Overview in the Full Model} 
\\ 

\vspace{0.2mm}
SSS runtimes for the full model vary by shape, and error parameters, and methods of spatial approximation.  
From the scanning perspective they depend on $n_k = n k$ and $s_k = s k$, shown in the middle column of Table \ref{tab:alg}.   
Parameters $n$ is the small net size, and $s$ is the large sample size are described in Table \ref{tab:spatial-err} for allowing $\eps$-error in $\Phi$.  
The spatial approximation size $k$ is described in Table \ref{tab:comb-err}, and depends on the method used to achieve an $\alpha$-approximation.  
The best runtime bounds for $\eps$-error on $\Phi$ and $\alpha$-spatial error are shown in the right column of Table \ref{tab:alg}.  

\\
\renewcommand{\arraystretch}{1.5}
\begin{tabular}{|r|cc|} 
 \hline
  Shape & Runtime  & Best Bounds \\ 
\hline
 Rectangles & $O(n_k^3 s_k)$ & $\tilde{O}\left (\frac{L}{\alpha^4}  \frac{1}{\eps^2}  \log(\frac{L}{\eps \alpha}) \right )$ \\
 Halfspaces & $O(n_k s_k \log n_k)$ & $\tilde{O}\left (\frac{1}{\alpha} \frac{1}{\eps^3}  \log^3 (\frac{1}{\eps \alpha}) \right )$ \\
 Disks & $O(n_k^2 s_k \log n_k)$ & $\tilde{O}\left ( \left (\frac{L}{\alpha} \right )^3 \frac{1}{\eps^4}  \log^4 (\frac{L}{\eps \alpha}) \right )$ \\
\hline
\end{tabular}
\vspace{-3mm}
\caption{\label{tab:alg} 
Overall algorithm runtimes in terms of sample size ($n_k = n k$ and $s_k = s k$) or error parameters (statistical error $\eps$, spatial error $\alpha$, and arclength $L$).  $\tilde O(\cdot)$ hides $\log \log$ factors and is for constant probability of success.  
Derivations of bounds are in Section \ref{sec:algo}.}

\\
\renewcommand{\arraystretch}{1.2}
\begin{tabular}{|r|ccc|} 
 \hline
 Method & Size $=k$ & Error & Shapes \\ 
\hline
Convex Hull & $m$ & $0$ & $\Eu{H}_2$ \\
 DP Algorithm & $m$ & $\alpha$ & $\Eu{H}_2$ \\
 Approx Hull & $O(1/\sqrt{\alpha})$ & $\alpha$ & $\Eu{H}_2$ \\
 Grid Kernel & $O(L/(r^{1/4} \alpha^{3/4}))$ & $\alpha$ & $\Eu{C}_r$ \\
 Random Sample & $O \left ((L/\alpha) \log \frac{L}{\alpha} \right )$ & $\alpha$ &All \\ 
 Even & $O \left ( L/\alpha \right )$ & $\alpha$ &  All \\ 
 Gridding & $O(L/\alpha)$ & $\alpha$ & All \\
\hline
\end{tabular}
\vspace{-3mm}
\caption{\label{tab:spatial-err} 
The $\alpha$-spatial approximation bounds for trajectory coreset algorithms with output of size $k$.  Trajectories are of length $L$ and with $m$ waypoints, and all in a domain $[0,1]\times[0,1]$.  
Derivations of bounds in Section \ref{sec:coresets}. }

\\
\renewcommand{\arraystretch}{1.5}
\begin{tabular}{|r|c|} 
 \hline
   & Sample Size  \\ 
\hline
 Sparse Net $N$ & $n = O \left ( \frac{\log k}{\eps} \log \frac{\log k}{\eps \delta} \right ) $ \\
 Dense Sample $S$ & $s = O \left ( \frac{1}{\eps^2} (\log k +  \log \frac{1}{\delta} \right ) $ \\
\hline
\end{tabular}
\vspace{-3mm}
\caption{\label{tab:comb-err} 
Sample size bounds to obtain $|\Phi(C^*) - \Phi(\hat{C})| \le \eps$ with probability $1 - \delta$ where $C^*$ is the true maximum and $\hat{C}$ is the found approximation or trajectories of size $k$.
Derivations of bounds in Sections \ref{sec:VC} and Section \ref{sec:full-scanning}.\vspace{-3mm}}
\\ \hline
\end{tabular}
\end{table}

\myParagraph{Rectangles.}
We extend a $O(n^4 + s)$ time algorithm for scanning rectangles~\cite{MP18b}, which in our case becomes $O(\frac{1}{\alpha^3} s_k)$.  The faster algorithms from that paper~\cite{MP18b} (taking $O(n^3)$ and $O(n^2 \log \log n)$ time do not extend because they require a special decomposition for implicit processing of the ranges which cannot accommodate the maintenance of the counter.

This algorithm defines a non-uniform grid by using a scan line in the x and y direction to ensure each row and column has at most $\eps s$ trajectories or has at least $\alpha$ width whichever is larger. We recommend the \Fmg approach for an $\alpha$-approximation since then many of these coordinates are duplicated reducing the effective grid size.  Then all of the points in $S$ (of size $s_k$) are mapped into the appropriate grid cells, and duplicates for the same trajectory can be removed.  Then we consider each of the at most $n_k^4$ rectangles defined on this grid.  We can scan them efficiently by fixing every possible upper, bottom, and left side of the rectangle (there are $O(n_k^3)$ combinations), and then sliding the right side of the rectangle from the left edge until it hits the end of the grid.  The entire scanning of the right edge updates the counters at most $s_k$ times.  So the total runtime is $O(n_k^3 s_k)$. 

\myParagraph{Spatially-approximated Rectangles.}
An important optimize for rectangles takes advantage of the $\alpha$-spatial approximation.  Instead of setting consider all $n_k^3$ rectangles, we only need to consider endpoints differing by at least $1/\alpha$, there are no more than $O(1/\alpha^3)$ of these.  Thus the total runtime becomes $O(s_k / \alpha^3)$.  This variant is used in experiments.  

We can also add a restriction where we short circuit the algorithm if the resulting subgrid will have height or width greater than some set value. This restriction can significantly decrease the runtime.

\myParagraph{Halfspaces.}
To create approximate sets for each trajectory we use the \Fmh and the \Fmk methods to reduce to size $k$ point sets.  The former induces no spatial error, and the later provides and absolute bound on the size $k$ (at $k = O(1/\sqrt{\alpha})$). 

Let $n_k$ represent the total number of net points required, with $n$ trajectories sampled into $N$ and then on average requiring $k$ points $P'_t$ to approximate.
Using a combinatorial arrangement view of halfspaces and point sets, these can be scanned in $O(ns)$ time~\cite{DEM96}, using advanced techniques from computational geometry.  This can be converted into a $O(n_k s_k)$ bound in our setting.  
We review a simpler model here that takes $O(ns \log n)$ time on point sets, and $O(n_k s_k \log n_k)$ in our setting.  Let $N_k$ be the $n_k$ points approximating the $n$ trajectories in $N$, and similarly let $S_k$ be the $s_k$ points from $S$.   
For each point $q \in N_k$ sort all $s_k$ points in $S_k$ radially around $q$.  Then consider halfspaces with $q$ on the boundary, and scan through them radially around $q$ updating the the counters and $\Phi$ as necessary.  

Higher dimensional halfspaces can be reduced to lower dimensional halfplane problems by doing an affine projection down into the $2$-dimensional space. Using the halfplane algorithm to solve these problems gives a runtime of $O(n_k^{d - 1}s_k)$.  
An optimization we call the ``\textsf{Hull Trick}'' does an additional pruning step after each projection to $2$ dimensions; it creates the convex hull in $2$-dimensions, and only retains the points on the hull before scanning.

\myParagraph{Disks.}
After approximating trajectories by point sets, disk-scanning can be implemented as halfspace scanning for $d=3$ using the Veronese map. This gives runtime bounds of $O(n_k^2 s_k)$. 

However, these shapes are significantly more sensitive to the $\alpha$-spatial approximation technique, especially if there are long edges. For instance in our experimental data we found disks might need $10$ or even $50$ times more points to generate an $\alpha$-approximation for a trajectory (see Table \ref{tbl:alpha-size}). As such we design a new scanning method that works in concert with the \Fmgd approach.

\myParagraph{MultiScale Disks.}
The previous disk scanning algorithm is not tractable or scalable, but we can combine a large number of tricks to handle these issues.

We consider scanning over disks with radius in a range $[r_{\min}, r_{\max}]$, specifically where $r_{\max} = r_{\min} 2^z$ for a small integer $z$.  We decompose this into $z$ subranges, so in each the radius is in a range $[r, 2r]$ and handle these subranges independently.  In experiments, we set $z=4$ (or in some cases $z=5$) which covers a natural and intuitive set of radii.  

Then we scan in concert with the $\alpha$-spatial approximation method \Fmgd using parameter $r$.  This reduces the size of each trajectory approximation $k$, especially for large $r$; see Table \ref{tbl:alpha-size}.  
Assuming the data has been mapped to a $[0,1] \times [0,1]$ range, we can create a $1/r \times 1/r$ grid, with cell edge length $r$.  We now consider the $O(1/r^2)$ $5 \times 5$ subsets of grid cells where the center cell has at least one point in the $N_k$ set of points.  Then for each such center cell point $q$ (a ``pivot'' point), we consider only disks with $q$ on the boundary, and other boundary points among those in this $5 \times 5$ subset of $N_k$.  
Especially for small $r$, the restriction to this local subset of the data greatly reduces the number of points which are considered when constructing disks with respect to each pivot $q$: from all $n_k$ net points to only those points from $N_k$ inside this $5 \times 5$ subset. Furthermore this also significantly reduces the number of sample points in $S_k$ that must be tested for inclusion in each disk. 

Together the effects of lower bounding the radius (to reduce $k$), and upper bounding the radius (to effectively reduce the number of considered regions and the the number of points to scan $s_k$ for each pivot), makes every range $[r,2r]$ of this MultiScale Disk approach quite tractable. 

This method can be further improved with the \textsf{Hull Trick}, especially for complex trajectories.  Focusing on the disk scanning algorithm for all disks passing through the pivot $q$; this maps to the problem of scanning all $2$d halfspaces in a different lifted parameter space, as discussed in Section \ref{sec:coresets}.  This reduction allows us to apply the \textsf{Hull Trick} from the Halfspace algorithm section or another halfspace coreset method from Table \ref{tab:spatial-err} to significantly reduce the number of points that must be considered.

\section{Experiments}
\label{sec:exp}

In this section we show scalability of all of the proposed algorithms, and their effects on the statistical power of the scan statistics.  We show for the partial and flux models, through our new direct reductions to recent work, we can compute scan statistics in a scalable yet statistically powerful way.  
For the full model, the most naive reductions to existing methods are not viable, but our proposed geometric and algorithmic observations for each scanning shape lead to significant speed ups.  

We demonstrate these improvements with two types of measurements.  The first is directly showing the runtime of the algorithms as a function of either the statistical error parameter $\eps$ or the spatial error parameter $\alpha$.  
We also measure the \emph{discrepancy error}, where we attempt to find a large $\Phi(C)$ value, and we show at increasing parameter settings how the largest $\Phi(C)$ region found approaches $\arg \max_{C \in \Eu{C}} \Phi(C)$ as a function of the runtime.  
This demonstrates that these algorithms are not just fast, but they become statistically powerful in tractable runtimes.

\subsection{Setup and Data}

\label{sec:software}
All of our code and the scripts used to generate experiments are publicly available on github and our project website \footnote{https://mmath.dev/pyscan}. 
All of our code is written in C++ with an easy to use python wrapper that we hope will allow for researchers to apply these algorithms to many other data sets. Current algorithm implementations are serial, but the algorithms are embarrassingly parallel, so converting to a 
parallel implementation would be trivial.

Experiments were conducted using the python wrapper on computers with an Intel Core i7-3820 and 64GB of memory. It ran Ubuntu 14.04 with kernel version 3.13.0-147. Code was compiled with GCC version 8.1.0, Boost version: 1.69.0, and Python version 3.4.3. Experiments were run in successive fashion on a per node basis. No experiments were run 
together on the same node at the same time to minimize the effect of other processes on run time.

We perform experiments on two large trajectory data sets, described next, which have very different conditions.  One is diverse, and has trajectories of significantly different sizes and lengths, but the overlap is clustered.  The other has more uniform trajectory sizes and lengths, but they are all intermingled.

\myParagraph{Open Street Map Trace Data.}
This data set is our default, and consists of a subset of close to 6 million traces from Open Street maps with 1.282 billion total waypoints. 
We restrict the set to ones contained completely inside of a rectangular region in Europe with latitude and longitude of $[35, 60] \times [-10, 25]$.  The large extent of these traces means that there are comparatively sparse regions corresponding to rural areas and also densely packed regions such as cities.
Many trajectories are restricted to small regions compared to the full domain size. 

\myParagraph{Beijing Taxi Data.}
This is a densely packed and highly overlapping set of roughly 3 million trajectories with 129 million total waypoints collected from taxi drivers in Beijing~\cite{LJZL18}. This data set has a very high sampling frequency per trajectory with $75$\% of the points being collected less than 1 minute apart. Many roads have been driven over by hundreds to even thousands of separate traces. Since small regions can be densely packed with trajectories, local scanning approaches that restrict the region of interest to be of small size work comparatively worse. There are comparatively few sparse regions.  A set of only 25 of these trajectories are shown in the extended Full Range Model Example.  We have restricted the trajectories to be confined to a region with latitude and longitude of $[39.788, 40.094] \times [116.15, 116.612]$.

\begin{table}
	\centering
	\small
\begin{tabular}{|r|cc|}
	\hline 
	Name &  \beijingdata \#Pts & \osmdata \#Pts\\ 
	\hline 
	\Fmgd $r=.1$km &  $489$ & $117$ \\ 
	\Fmgd $r =.2$km & $286$ &  $69.6$\\ 
	\Fmgd $r =.4$km &  $191$ & $47.3$\\ 
	\Fmgd $r =.8$km &  $134$ &  $34.1$ \\ 
	\Fmgd $r = 1.6$km  &  $96.9$ &  $25.4$ \\ 
	\Fmgd $r = 3.2$km  &  $71.8$ &  $19.5$ \\ 
	\Fmdp &  $24.6$ &  $8.74$ \\
	\Fmk & $7.27$ &  $3.35$ \\
	\Fmh & $10.0$ &  $9.50$ \\
	\Fmg & $389$  & $106$\\
	\Fme & $395$ & $110.9$\\
	\hline
\end{tabular} 
\normalsize
\vspace{-2mm}
\caption{\label{tbl:alpha-size}
Setting $\alpha = 0.1km$ shows the average value of $k$ (size of approximation) for $\alpha$-approximation algorithms.}
\end{table}

\subsection{Spatial Approximation Size}
We first empirically evaluate how well the various $\alpha$-spatial approximation algorithms work on average.  We simplify all trajectories in each data set such that $\alpha$ is set to $0.1km$ -- about one city block (when normalized to $[0,1]\times[0,1]$, for Beijing this is $\alpha = 1/500$ and for OSM it is $1/\alpha = 30{,}000$).  We show the average value of $k$ for the various algorithms in Table \ref{tbl:alpha-size}.  We observe that the OSM data is much easier to approximate than the Beijing data -- and in fact we use a larger $\alpha$ as default later which would lead to even smaller $k$ values on average.  
Next we observe that the methods designed for halfspaces (\Fmh, \Fmk, and \Fmdp) generate very small average values of $k$ of 	$3$ to $10$ (or $25$ for \Fmdp on Beijing).  
For large values of $r$,  \Fmgd can start to approach $k$ of around $20$ or $30$ for OSM, but otherwise the methods for rectangles and disks (\Fmgd large $r$, \Fmg, and \Fme) have larger average $k$ of around $100$ for OSM, and over $300$ for Beijing.  
This factor of $10$ difference in $k$ can cause dramatic slow-downs in the experiments.

\begin{figure}[b]
	\includegraphics[width=0.505\linewidth]{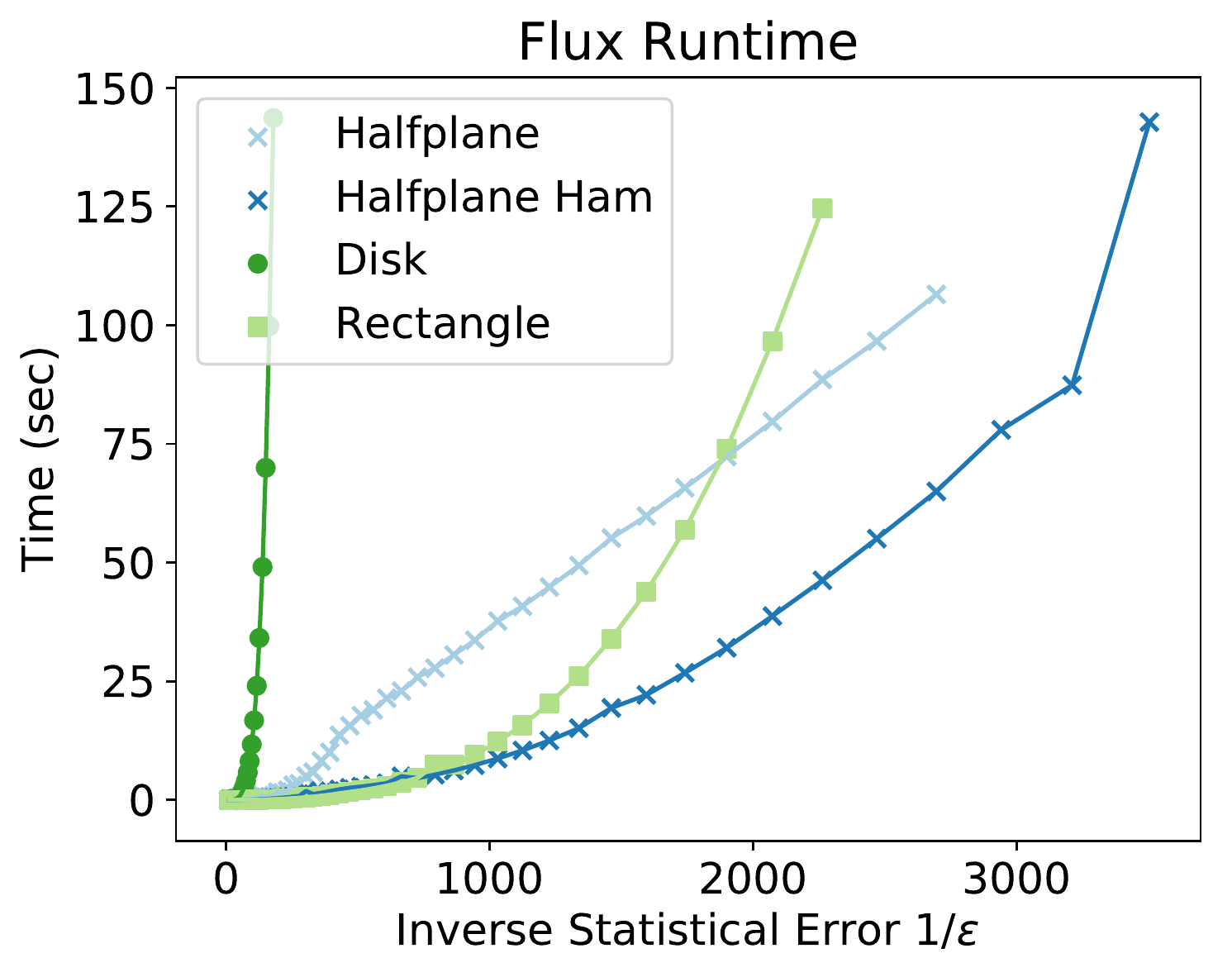}
	\hspace{-0.03\linewidth}
	\includegraphics[width=0.505\linewidth]{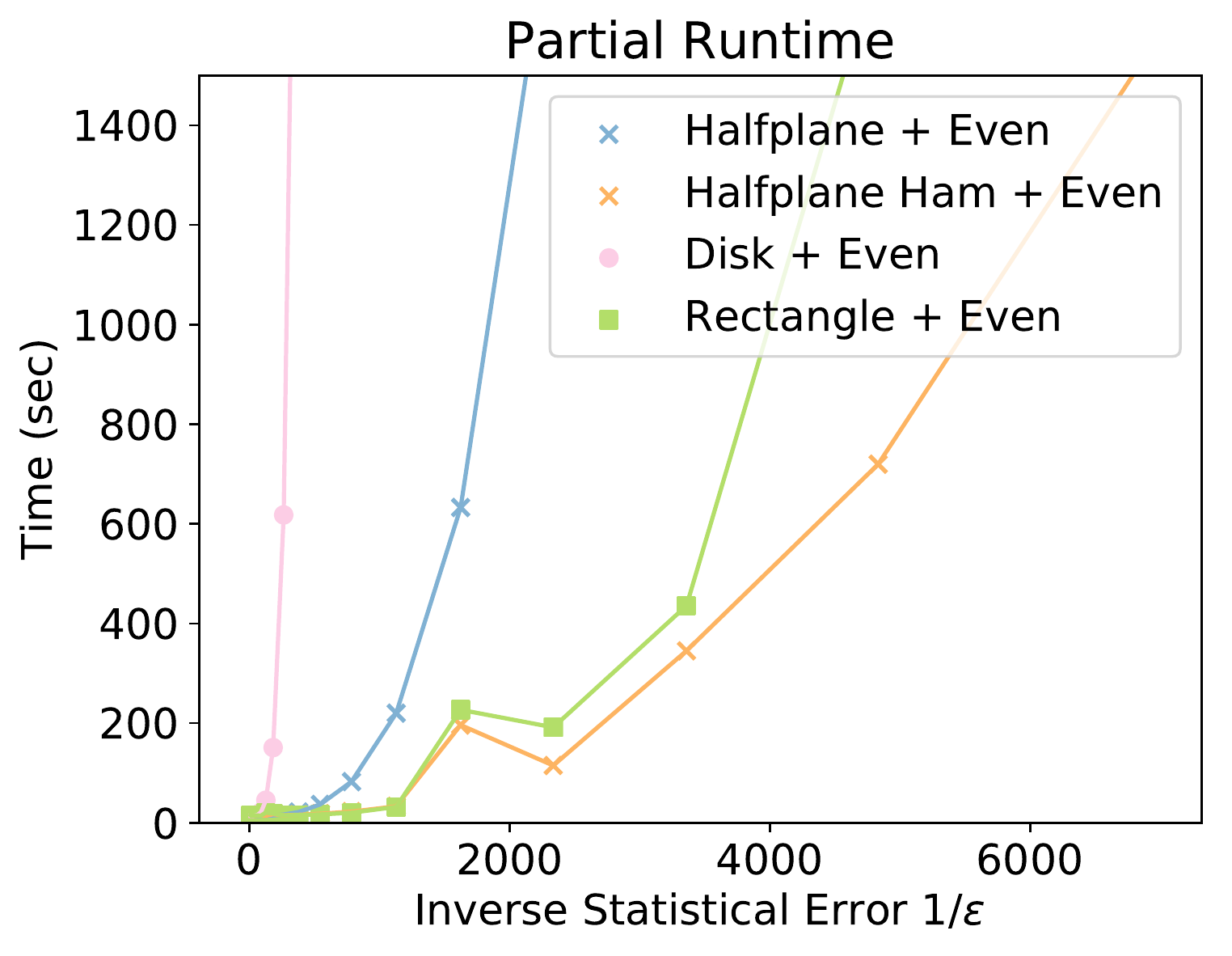}
	
	\vspace{-2mm}
	\caption{Runtime for flux model algorithms (left) and partial model algorithms (right) in terms of input net size $1/\eps$.}
	\label{fig:flux+part}
\end{figure}

\subsection{Scalability Experiments}
We next demonstrate how various parameters affect the runtime, and how various algorithms compare in scalability. 
The runtime as a function of $1/\eps$ for flux model and partial model algorithms are shown in Figure \ref{fig:flux+part}; setting $s = 1/(2\eps)^2$ and $n=1/\eps$ as suggested~\cite{SSSS}.  The partial data is sampled using the \Fme mechanism.  Some runtime curves become linear as $S$ becomes the entire data set (12 million), and only $n$ increases.

\begin{figure}
\begin{tabular}{lcc} 
&  \multicolumn{2}{c}{\textsf{Full Model Runtime on OSM}}  
\\
	\rotatebox[origin=c]{90}{Baseline Time(sec)} \hspace{-7mm} 
	&\raisebox{-0.5\height}{\includegraphics[width=0.5\linewidth]{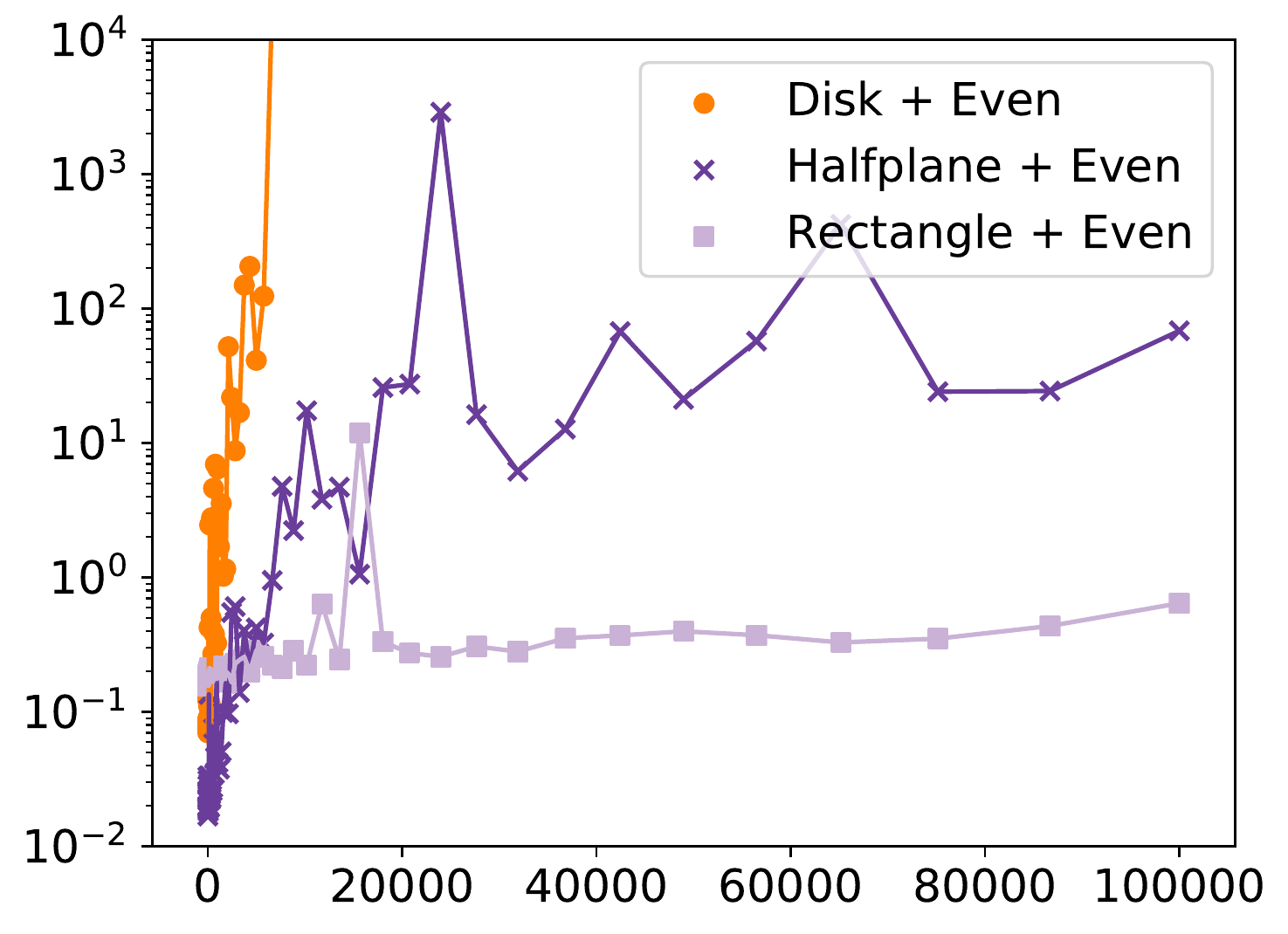}}  \hspace{-6mm} 
	& \raisebox{-0.5\height}{\includegraphics[width=0.45\linewidth]{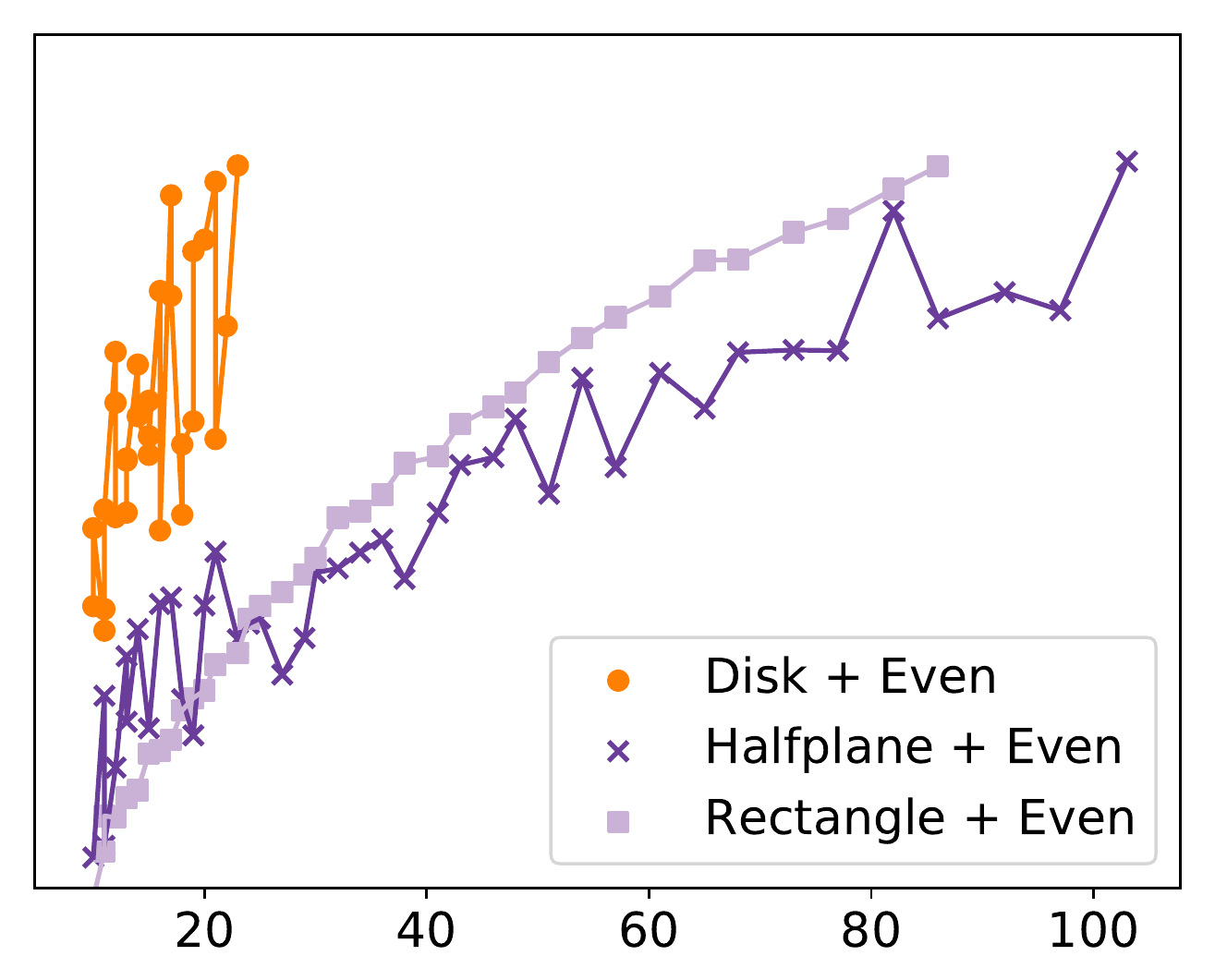}} 
	\\ 
	\rotatebox[origin=c]{90}{Halfplane Time(sec)} \hspace{-7mm} {}
	& \raisebox{-0.5\height}{\includegraphics[width=0.5\linewidth]{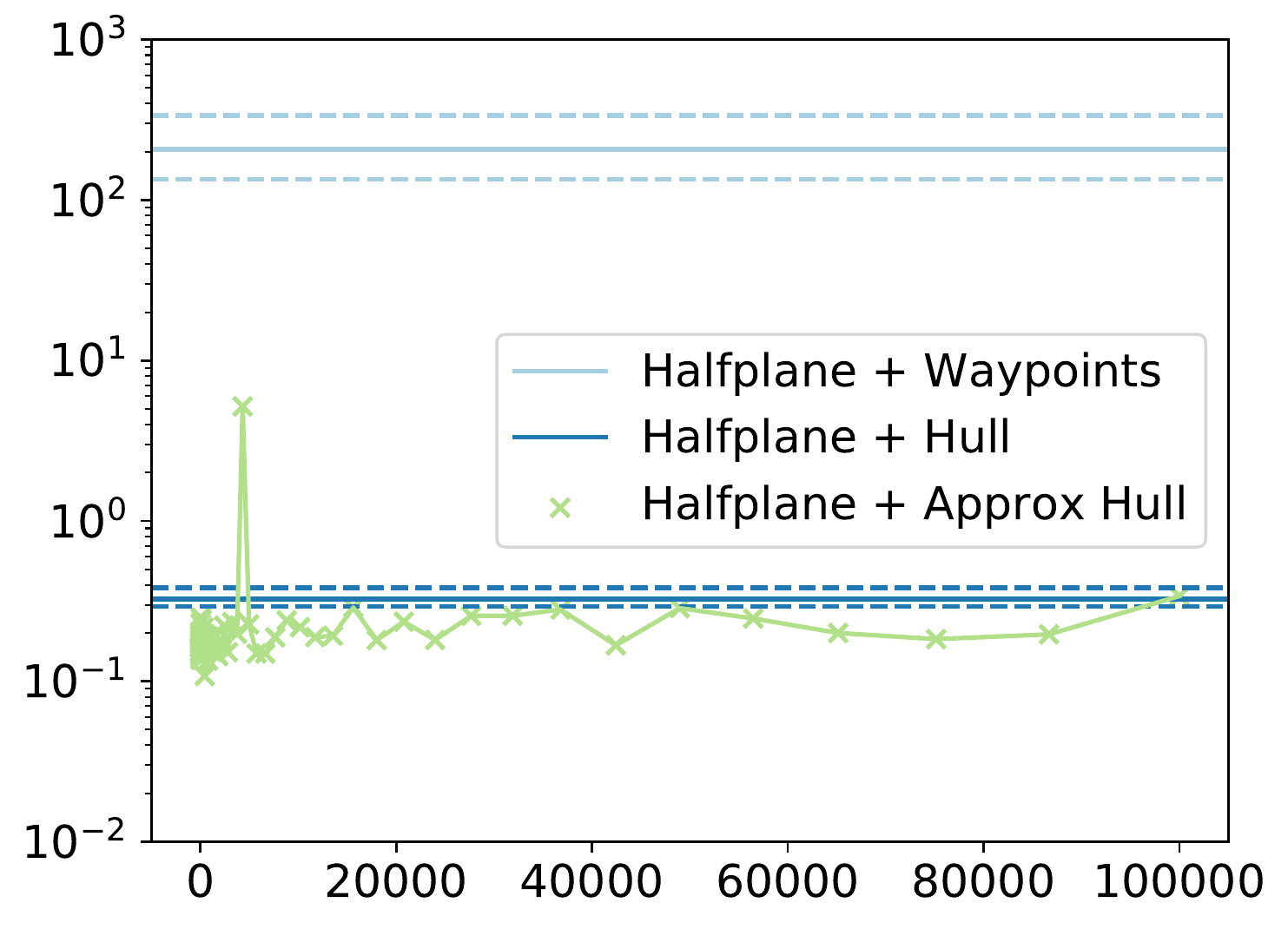}} \hspace{-6mm} 
	& \raisebox{-0.5\height}{\includegraphics[width=0.45\linewidth]{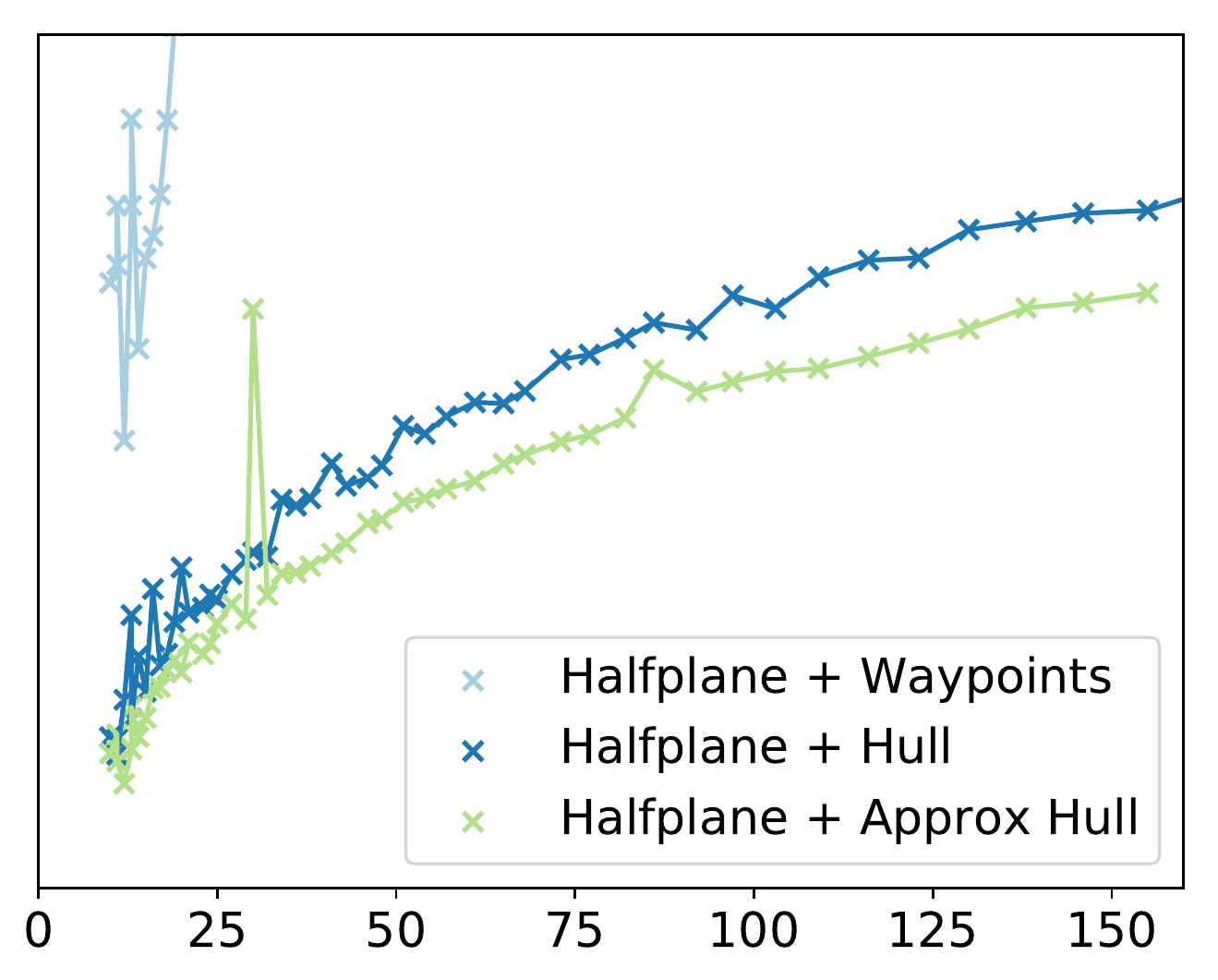}} 
	\\ 
	\rotatebox[origin=c]{90}{Disk Time(sec)} \hspace{-7mm} 
	& \raisebox{-0.5\height}{\includegraphics[width=0.49\linewidth]{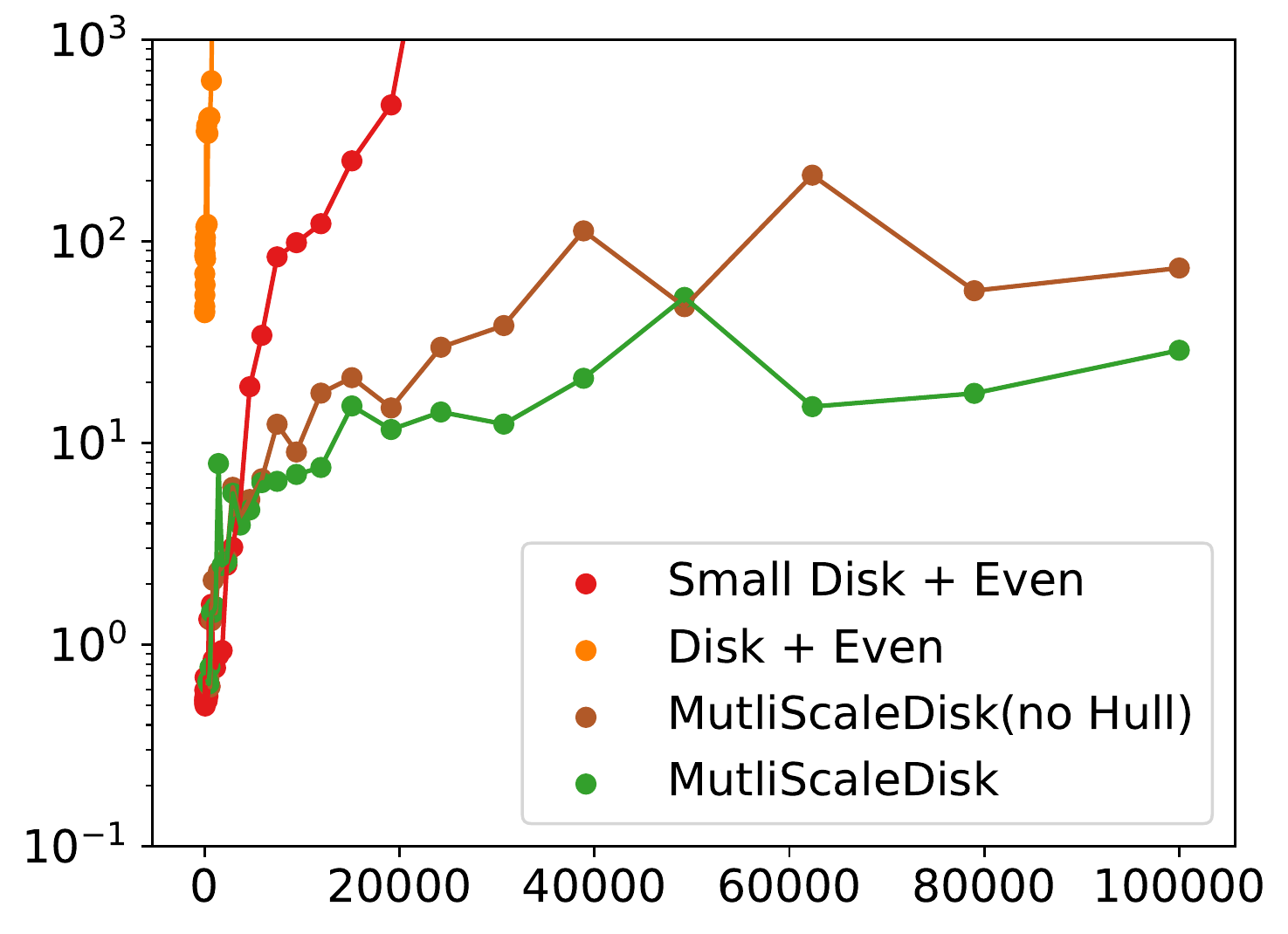}} \hspace{-4mm} 
	& 	\raisebox{-0.5\height}{\includegraphics[width=0.45\linewidth]{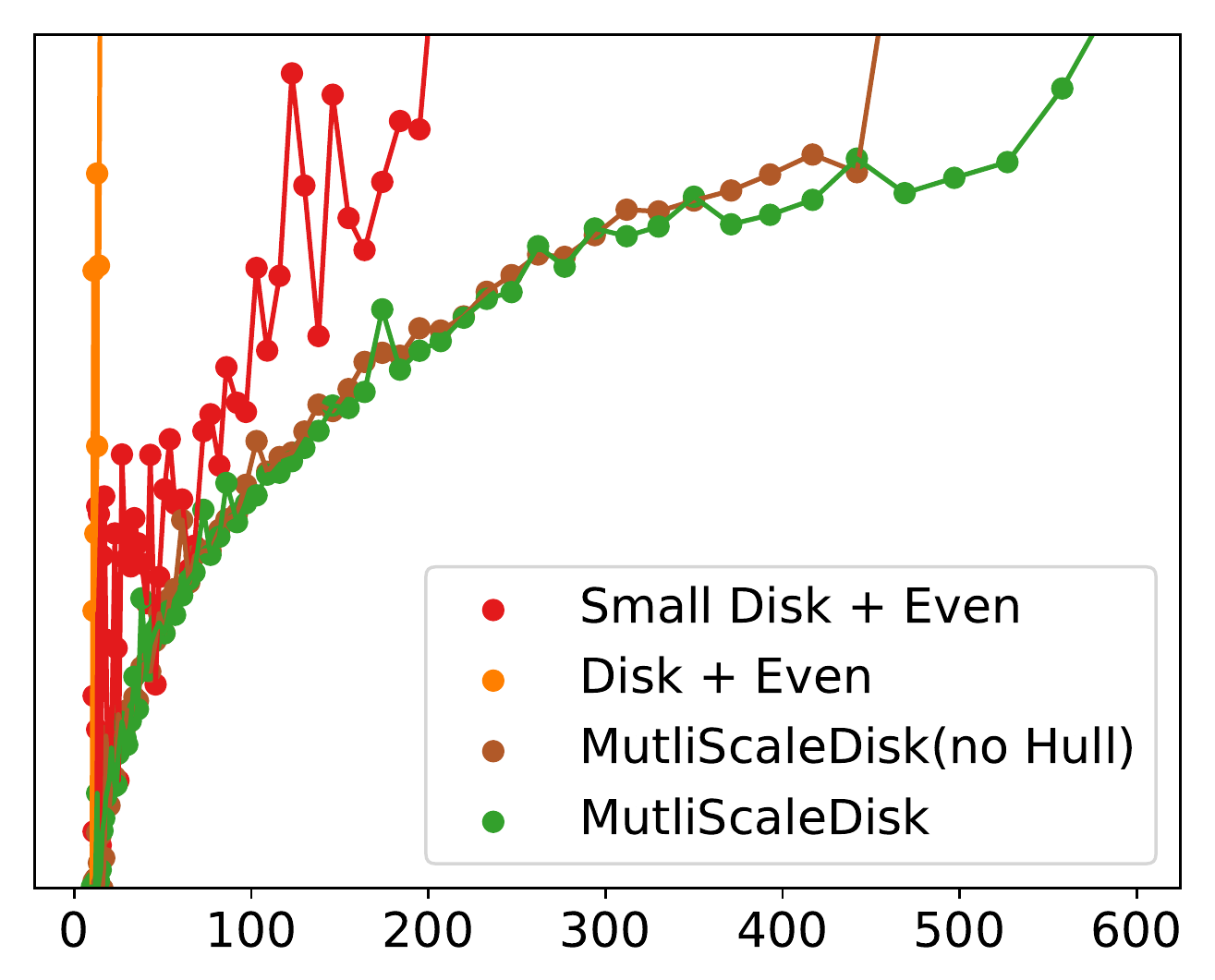}} 
	\\ 
	\rotatebox[origin=c]{90}{Best in Class Time(sec)} \hspace{-7mm} 
	& 	\raisebox{-0.5\height}{\includegraphics[width=0.49\linewidth]{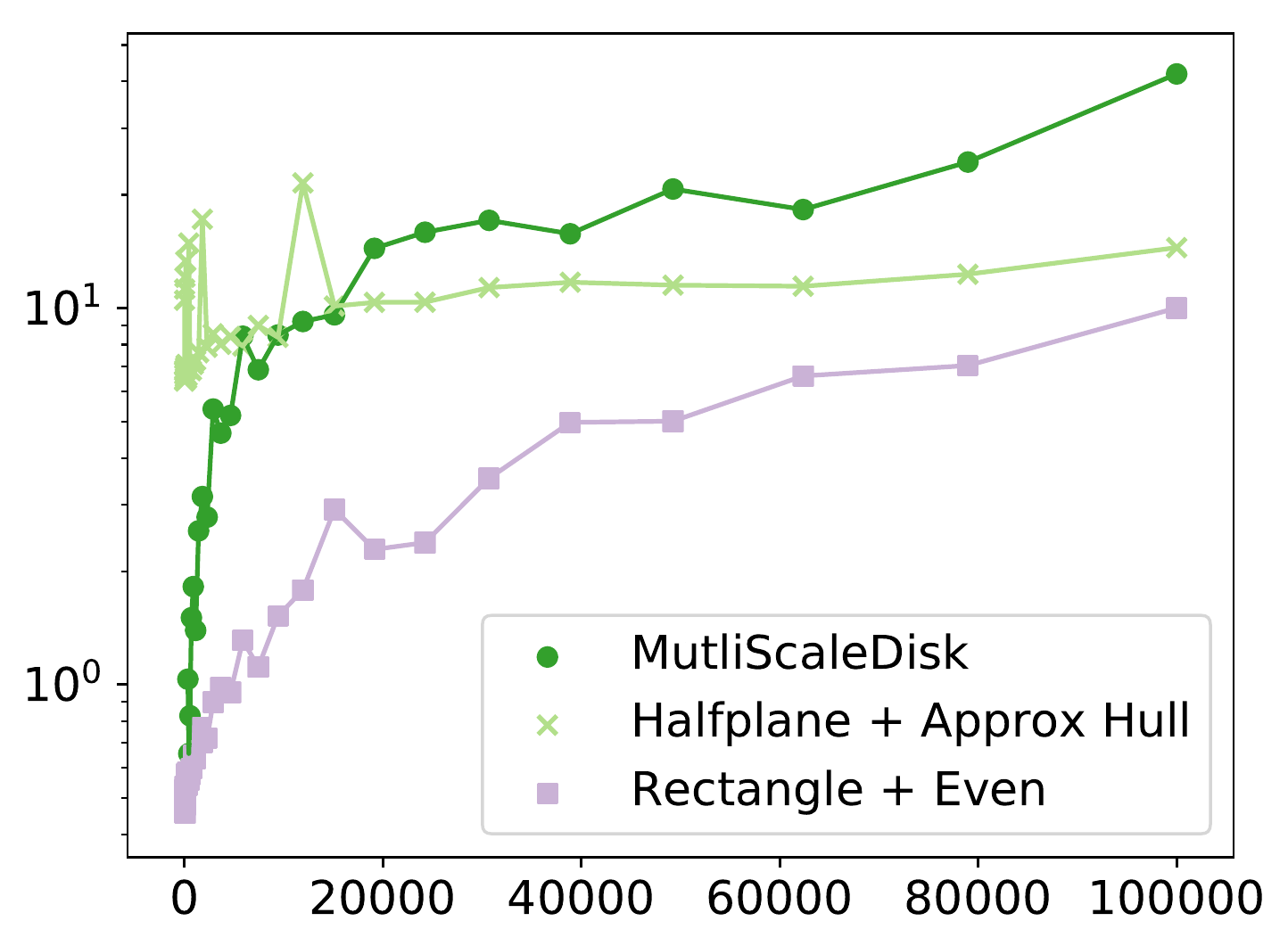}} \hspace{-5mm} 
	& 	\raisebox{-0.5\height}{\includegraphics[width=0.48\linewidth]{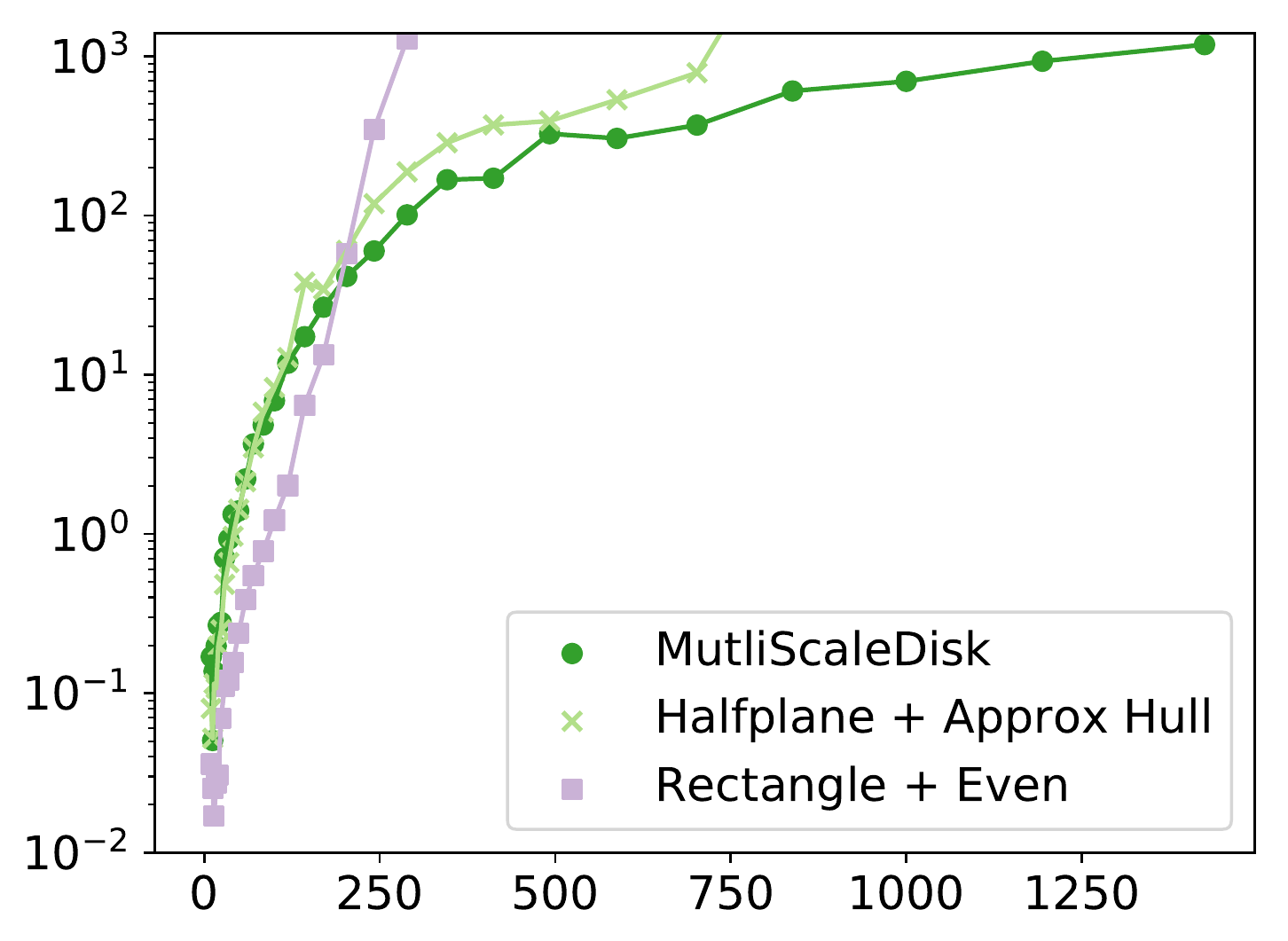}}  
	\\
		& Inverse Spatial Error $1/\alpha$ &  Inverse Statistical Error $1/\eps$ 
\end{tabular}

\caption{\label{fig:full-OSM}
	Runtime for full model scanning as function of inverse spatial error, $1/\alpha$, or inverse statistical error, $1/\eps$, on OSM data. The inverse spatial error and inverse statistical error act as a size parameter for our algorithm since our guarantees and runtime do not depend on the initial data size -- only on the statistical or spatial resolution.}
\end{figure}

\begin{figure}
\begin{tabular}{lcc} 
&  \multicolumn{2}{c}{\textsf{Full Model Runtime on Beijing}}  
\\
	\rotatebox[origin=c]{90}{Disk Time(sec)} \hspace{-7mm} 
	& \hspace{-2mm}  \raisebox{-0.5\height}{\includegraphics[width=0.52\linewidth]{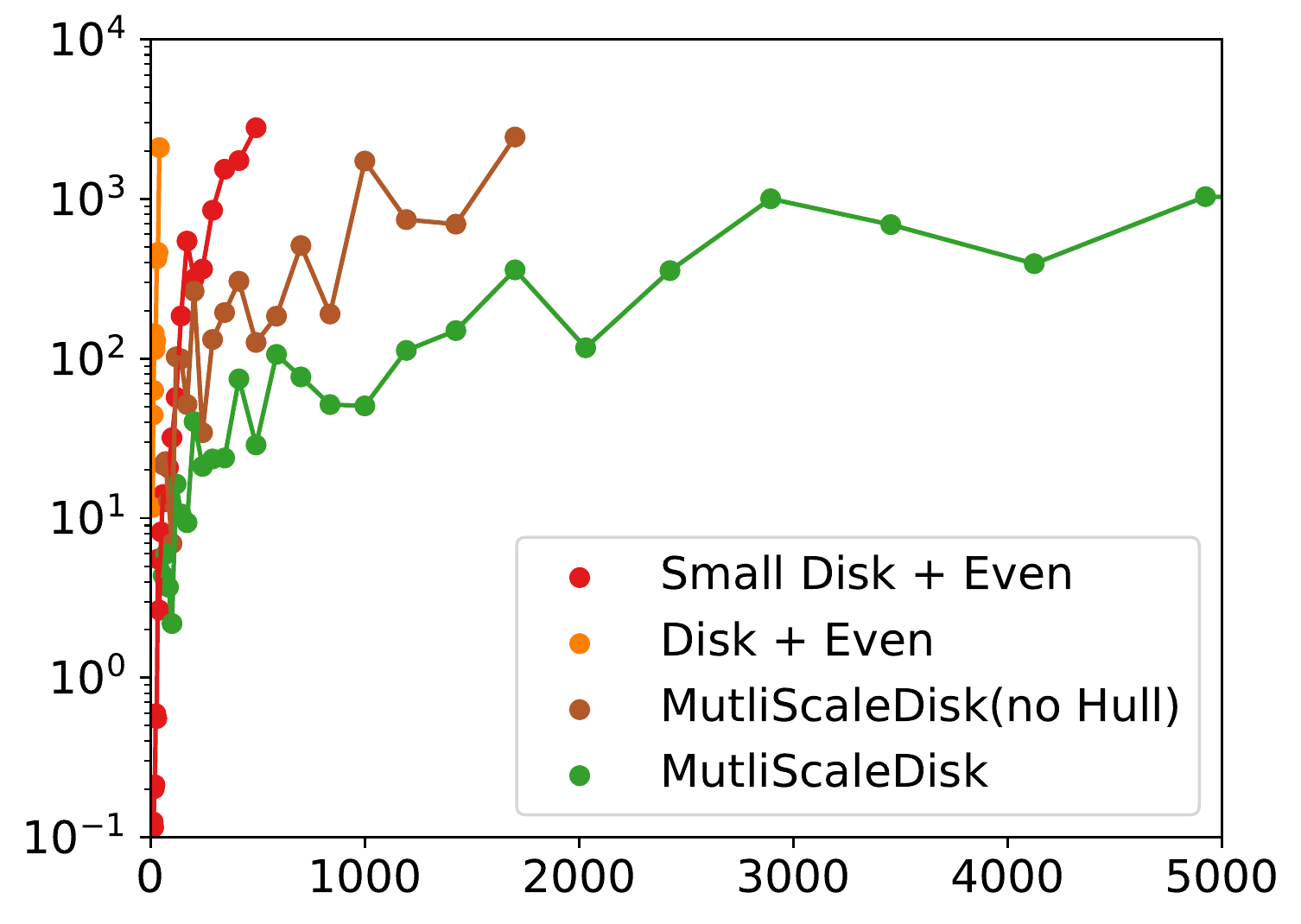}} \hspace{-4mm} 
	& 	\raisebox{-0.5\height}{\includegraphics[width=0.46\linewidth]{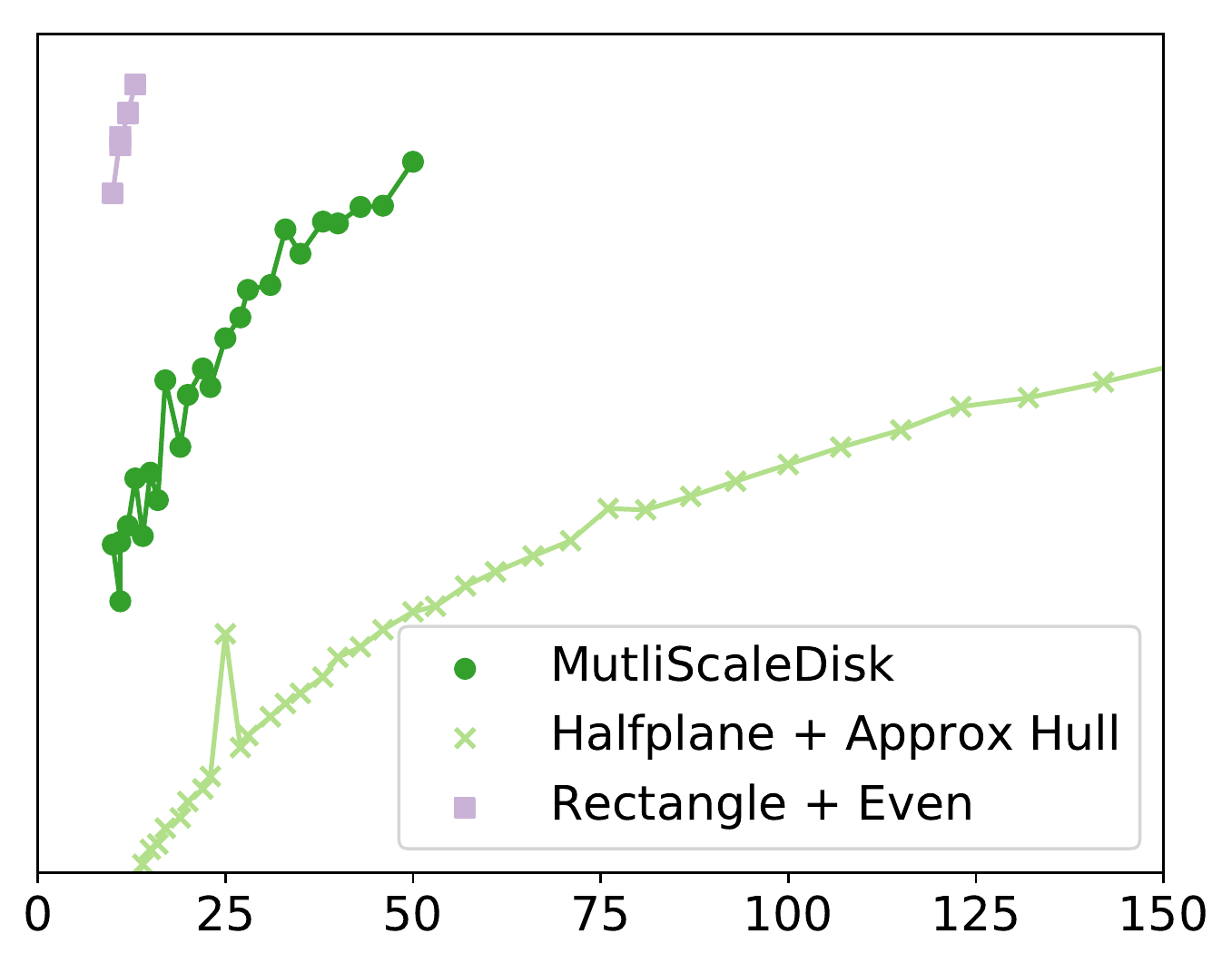}} 
	\\ 
	\rotatebox[origin=c]{90}{Best in Class Time(sec)} \hspace{-7mm} 
	& 	\hspace{-2mm} \raisebox{-0.5\height}{\includegraphics[width=0.52\linewidth]{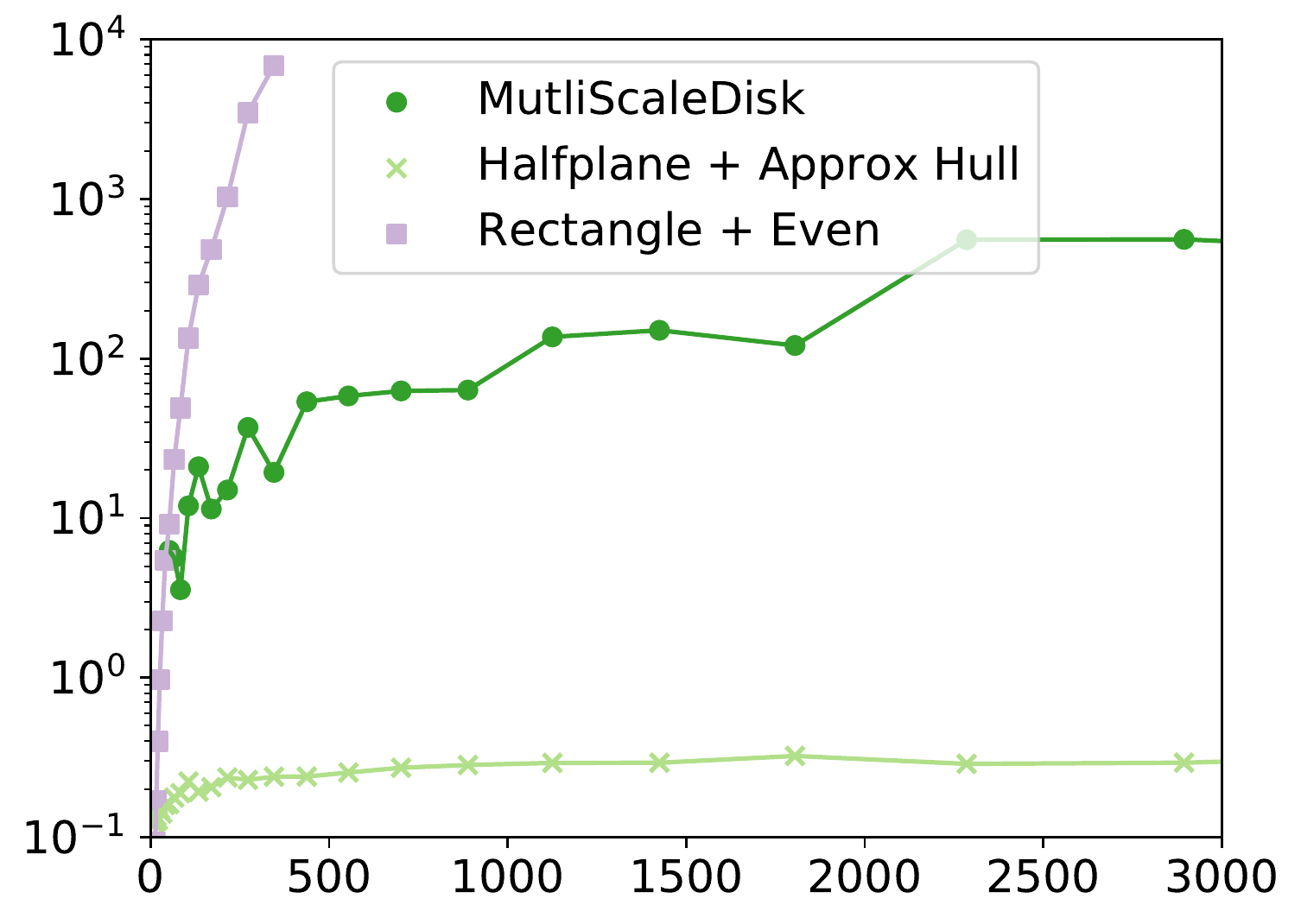}}  \hspace{-4mm} 
	& \raisebox{-0.5\height}{\includegraphics[width=0.47\linewidth]{eps_full_bc_runtime_bj}} 
	\\
		& Inverse Spatial Error $1/\alpha$ &  Inverse Statistical Error $1/\eps$ 
\end{tabular}

\caption{\label{fig:full-BJ}
	Runtime for full model scanning as function of inverse spatial error, $1/\alpha$, or inverse statistical error, $1/\eps$, on Beijing data.} 
\label{fig:bjtaxiruntime}
\end{figure}

We observe that the rectangle-based algorithms are quite scalable, and are able to set $1/\eps \approx 1{,}500$ and still complete in about $1$ minute.  The generic two-level sampling algorithm for halfspace scanning is almost as scalable, and performs better than rectangle when the ``Ham" coreset~\cite{MP18a} is used.  However, the scanning algorithms for disks require several minutes to deal with even $1/\eps \approx 100$, and becoming intractable for anything larger.  
In summary, using our new reductions to the point-based algorithms, under the partial and flux models, rectangles and halfplanes scanning for trajectory anomalies is already scalable.

\myParagraph{Full Scanning on OSM.}
The algorithms for the full intersection model are significantly more nuanced.  These runtimes are shown as a function of $1/\eps$ and $1/\alpha$ in Figure \ref{fig:full-OSM}.  While the other is varied we set $1/\alpha = 6000$ and $1/\eps = 100$ (for Disk and for Best in Class) and $1/\eps = 20$ (for Baseline and for Halfplane). Rectangles are restricted to having max side length less than $1/150$ to make them have roughly the same size as the restricted disk variants in all OSM plots.  Note that all runtimes are on a logarithmic scale.  

The Baseline rows shows the scalability of the basic algorithms using \Fme $\alpha$-spatial approximation.  Again, for halfspace and rectangle scanning, the runtimes are tolerable, but the disk scanning becomes already intractable for moderate parameter values.  
Note that now the runtimes are quite noisy due to the high variance in the trajectories length in the OSM data -- even if $k$ is small on average, for some trajectories sampled it might be quite large.  

The Halfplane row shows the difference in runtimes for the All Waypoints, \Fmh, and \Fmk methods for spatial approximation.  The All Waypoints and \Fmh have $0$ spatial error (by Lemma \ref{lem:hs-ch}), so horizontal lines with error bars are shown.  \Fmh provides a dramatic improvement over All Waypoints, of 2 to 3 orders of magnitude (i.e., from several minutes to less than a second).  The \Fmk approach shows smaller but tangible improvement over using all hull points, but adds some spatial error.

The Disk row shows another dramatic improvement in scalability as we restrict the radii considered to $r \in [1/6000, 1/300]$ and apply other improvements.  With no bound on the radius, the (Disk + \Fme) algorithms are intractable. Still invoking \Fme, but using a $1/r \times 1/r$ grid to prune the scanning to a radius range bound (Small Disk + \Fme) 
allows for moderate values of $1/\eps$ and $1/\alpha$ to complete in minutes.  
But combining the MultiScale Disk approach with the adaptive \Fmgd spatial approximation allows the same moderate error parameter runs to complete in $10$s of seconds, and in about $1$ or $2$ minutes this approach can scale to $1/\eps \approx 400$.  Adding the \textsf{Hull Trick} does not induce significant gains here.  

Finally in the Best in Class, row we show the best algorithms runtime for each scanning shape -- the improvement over the baseline for halfspaces and disks is dramatic. 
With these algorithms it is now possible to set $1/\alpha \approx 100{,}000$ and complete in about $10$ seconds for any shape.  Halfspaces and radius restricted disks can set $1/\eps \approx 500$ or $1000$ and complete in about a minute or two; they are now very scalable in $1/\eps$.  For Rectangles it can set $1/\eps \approx 250$ and still complete in a minute or two.  On the other hand, the algorithm for rectangles is more tolerant to very small values of $\alpha$.

\myParagraph{Full Scanning on Beijing.}
The runtimes for the algorithms under the full model on the Beijing data set are shown in Figure \ref{fig:full-BJ}. We restrict $r \in [1/50, 1/500]$ for radius-restricted disks, and for rectangles side length is restricted to be less than $1/25$.
The first (Disk) row shows the improvement when scanning with radius-restricted disks.  Fixing $1/\eps = 20$, only the MultiScale Disk scanning with \Fmk and the \textsf{Hull Trick} can complete $1/\alpha  > 2000$ in under an hour, and indeed can scale to $1/\alpha = 5000$ in that time.  Here the consistent long trajectories greatly benefit from the extra \textsf{Hull Trick} pruning.  Fixing $1/\alpha = 500$ only this disk scanning variant can reach a moderate $1/\eps = 50$ in under an hour.  

The second (Best in Class) row shows that now the rectangle scanning times are strictly worse than the radius restricted disk times. The halfspace algorithms have similar runtimes as with the OSM data, and are now much more scalable than the MultiDisk approach.  Thus the tangled nature of the Beijing data affects the rectangle scanning the most and the halfspace the least.


\begin{figure}
	\includegraphics[width=0.49\linewidth]{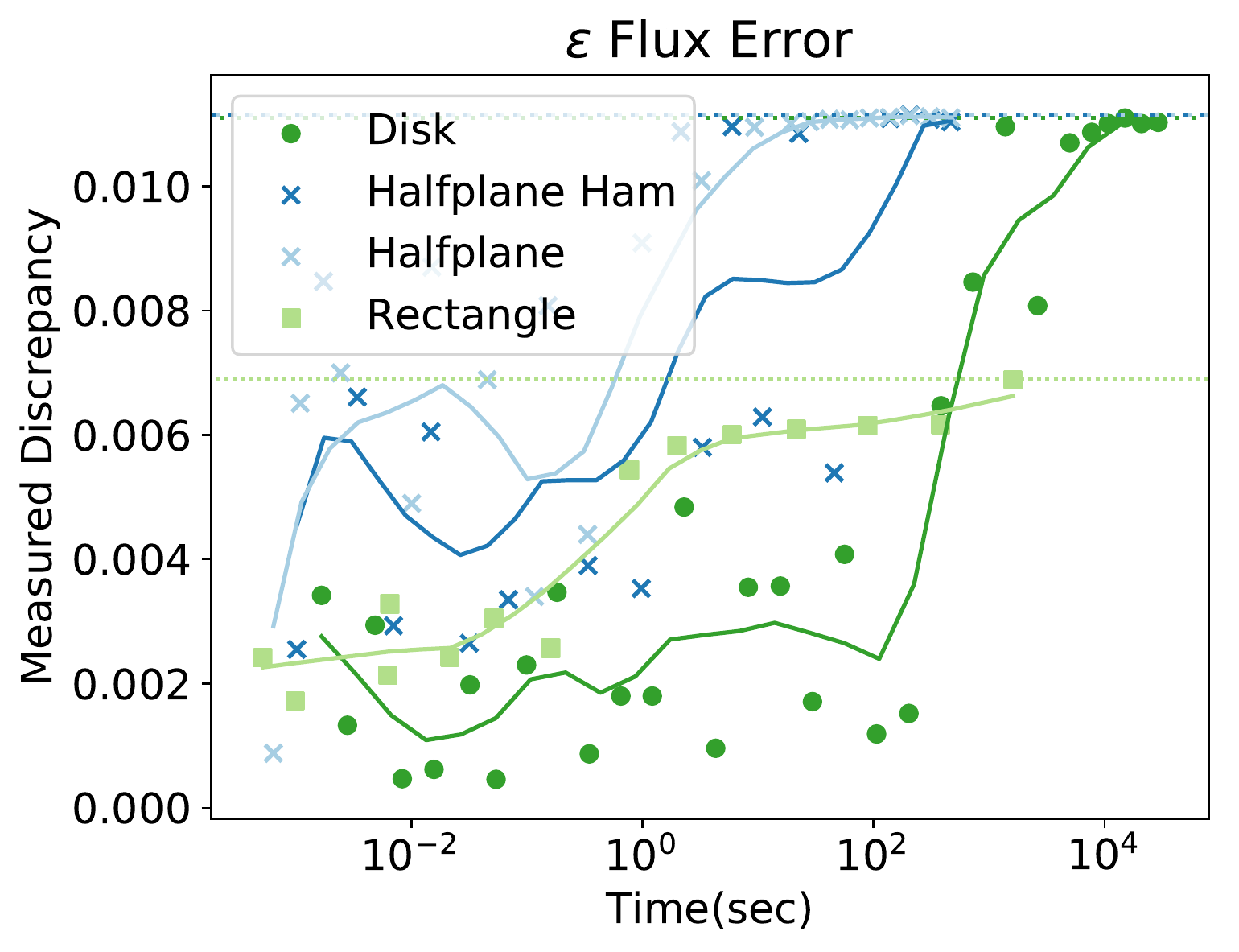} 	
	\includegraphics[width=0.49\linewidth]{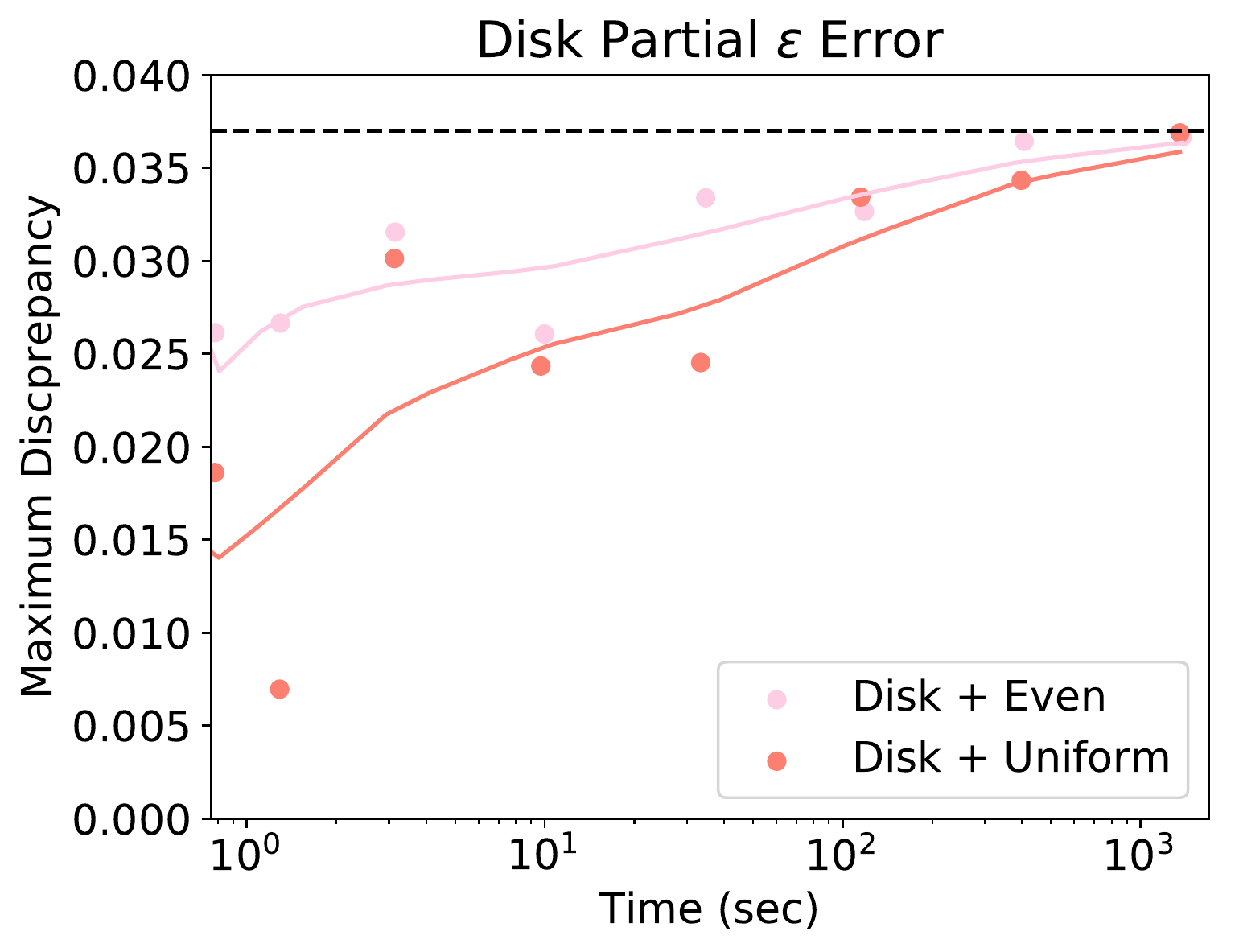}
	
	\includegraphics[width=0.49\linewidth]{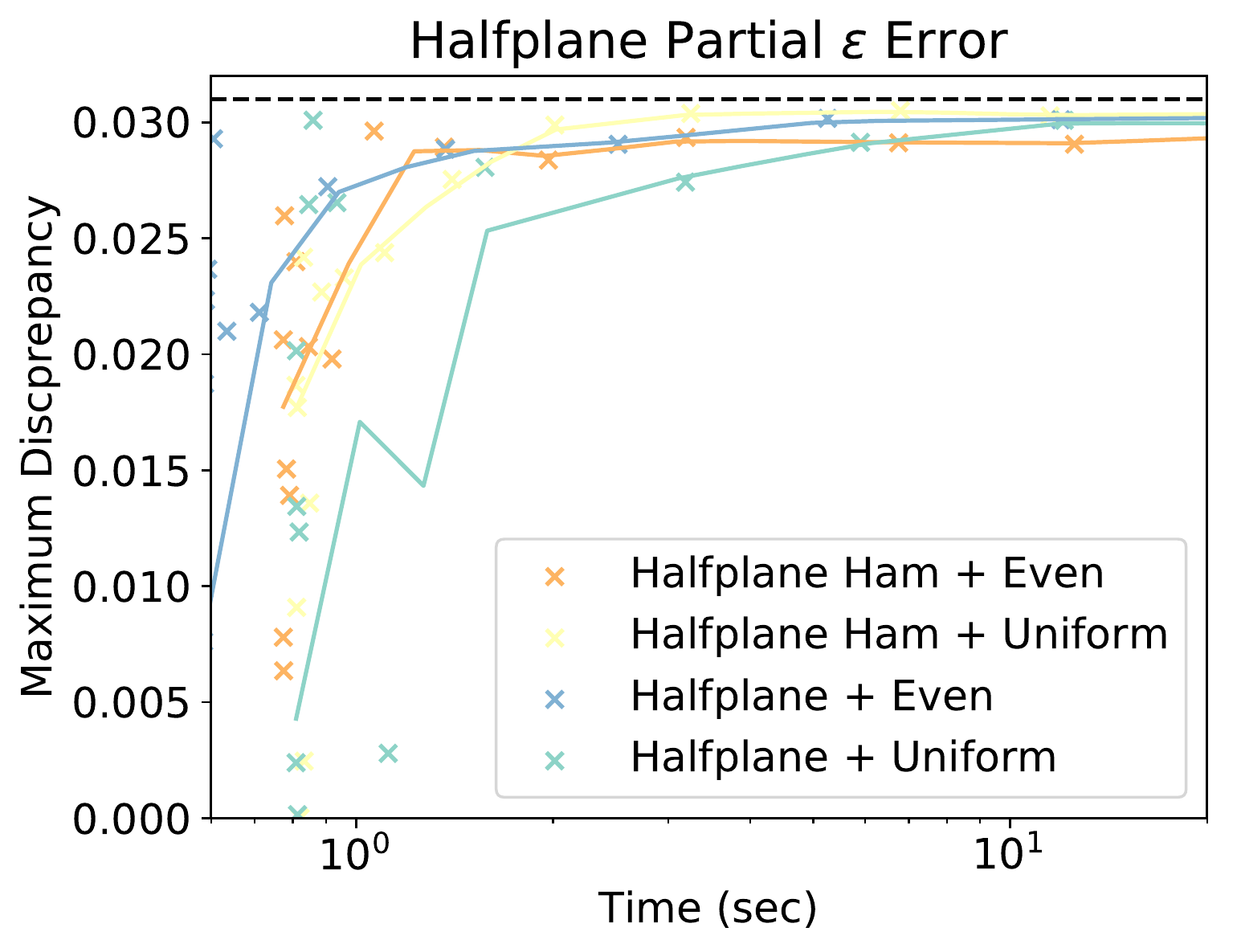}
	\includegraphics[width=0.49\linewidth]{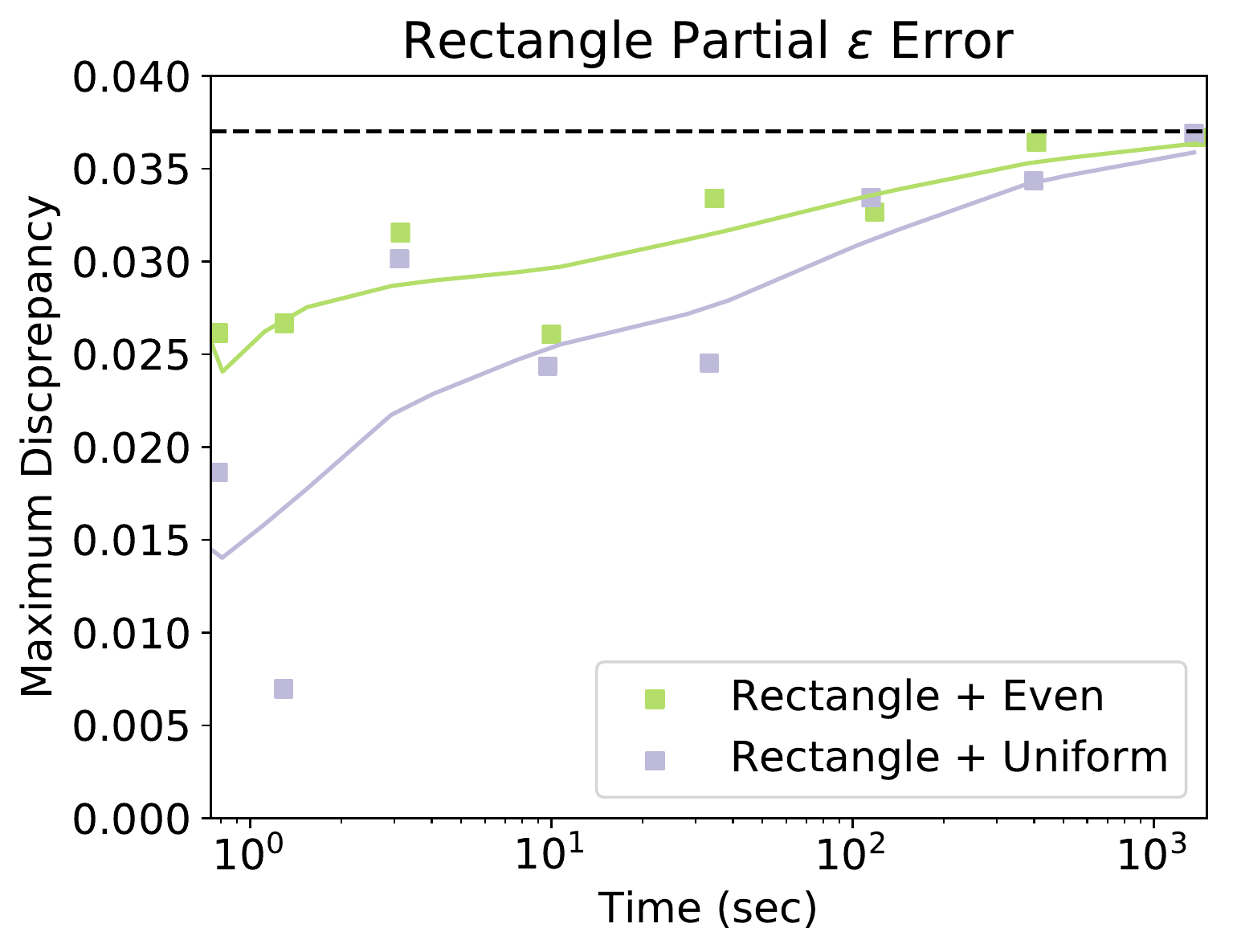}
		
	\caption{\label{fig:flux+part-err}
	Statistical power vs time for partial and flux.}
\end{figure}

\subsection{Statistical Power Tests}

In this section we measure the statistical power of these scanning algorithms.  We fix the spatial layout of trajectories from the OSM or Beijing data sets, and introduce a spatially anomalous region under the various models by adjusting the $r$ and $b$ values of trajectories.  Such regions (either a halfplane, disk, or rectangle to match the scanning region) are ``planted'' by choosing a shape $C \in \Eu{C}$ and setting a desired rates for inside $C$ as $q = r(C)/b(C)$, for all data $p = r(T \setminus C)/b(T \setminus C)$, and the anomaly size $f = b(C)/b(T)$.  The optimal shape may be different (usually slightly shifted), and so we evaluate these tests by tracking the value $\max_{C \in \Eu{C}} \Phi(C)$ (the ``Measured Discrepancy") found for various parameter settings. When this maximum value plateaus, it indicates those parameters settings and runtime have high power and are sufficient to recover the anomalous shapes.  

This process is much noisier than just measuring runtime as a function of parameter settings.  So in plots we show many data points and fit a trend line using a local average.  In the same plots, different scanning shapes naturally converge to distinct Measured Discrepancy values.  

For all of these experiments we use a rate of $p=.5$ for the data outside of the planted region and a rate of $q=.8$ for the data inside of the region.

\myParagraph{Flux Scanning Power.}  
The time required to achieve statistical power for the flux scanning model is shown in Figure \ref{fig:flux+part-err}.  
It shows the Measured Discrepancy values for disk, rectangle, and halfspace algorithms under the flux model on the Beijing data.  
We fix the planted shape to contain $5\%$ of the data ($f=0.05$).  As observed earlier, the disk regions are very slow to scan for under this model, and do not achieve high power until $2$ hours of runtime.  
However, the rectangle and halfspace scanning algorithms converge to a high power setting in about $10$ seconds to $1$ minute -- hence for the flux model, we recommend these shapes.  

\myParagraph{Partial Scanning Power.}  
Figure \ref{fig:flux+part-err} also shows the time require to achieve high statistical power under the partial model on OSM data.  We plant small ranges of size $f = 0.005$.
Each scanning shape (rectangle, disk, halfplane) is in a separate chart.  They all eventually achieve high statistical power, but the halfplane scanning algorithms only take about $1-2$ seconds, while the disk and rectangle algorithms require about $30$ minutes.  We can also observe that \Fme spatial approximation consistently converges faster than Random Sampling.  

\myParagraph{Full Scanning Power.}  
Finally, Figure \ref{fig:full-err} shows how long it takes to achieve high statistical power under the full model.  We plant regions in the OSM data with size $f = 0.005$.  We show the results of scaling runtime by varying $1/\eps$ (setting $1/\alpha = 6000$) and varying $1/\alpha$ (setting $1/\eps = 200$).  
For halfplanes, we use \Fmh which has no spatial error, so it only has its runtime vary as a function of $1/\eps$; and it achieves high power after about $1$ hour.  
For rectangles, it achieves high power through larger $1/\eps$ values in about $10$ minutes in both $1/\eps$ and $1/\alpha$ scaling.  
For radius-restricted disks, we show the MultiDisk scanning with \Fmgd (and \textsf{Hull Trick}); it appears less tied to the choice of $\alpha$ as this setting is tied to the resolution of the grid approximation.  For a fixed $1/\eps = 200$, it sometimes, but does not consistently find a high score region indicating that the setting $1/\eps = 200$ might be too small, but as $1/\eps$ increases after about 10 minutes it achieves high power.

\begin{figure}
	\centering
	\begin{tabular}{lcc} 
		& Inverse Spatial Error & Inverse Statistical Error \\ 
		\rotatebox[origin=c]{90}{Measured Discrepancy} & 	\hspace{-4mm}\raisebox{-0.5\height}{\includegraphics[width=0.522\linewidth]{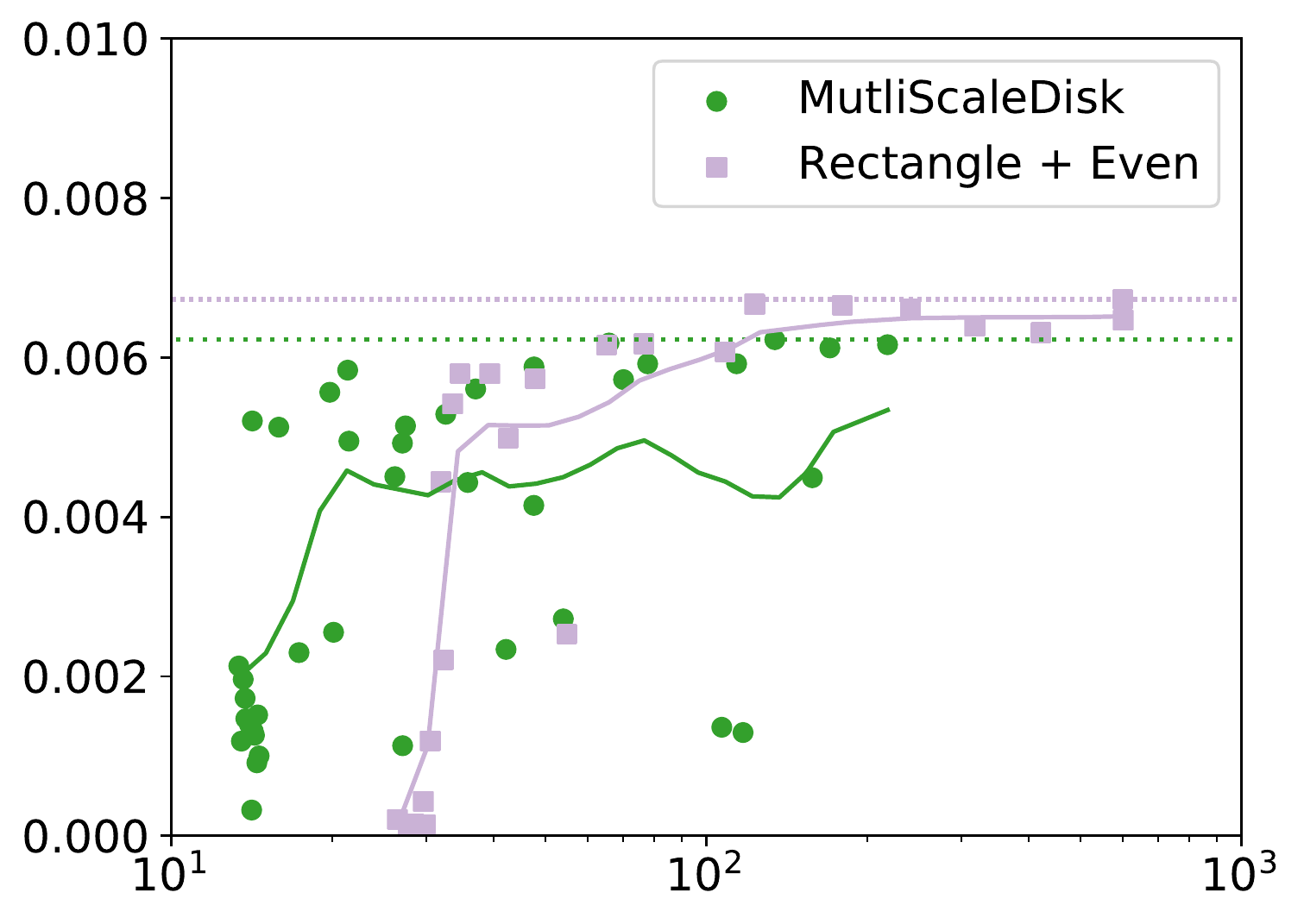}} & \hspace{-5mm}	\raisebox{-0.5\height}{\includegraphics[width=0.453\linewidth]{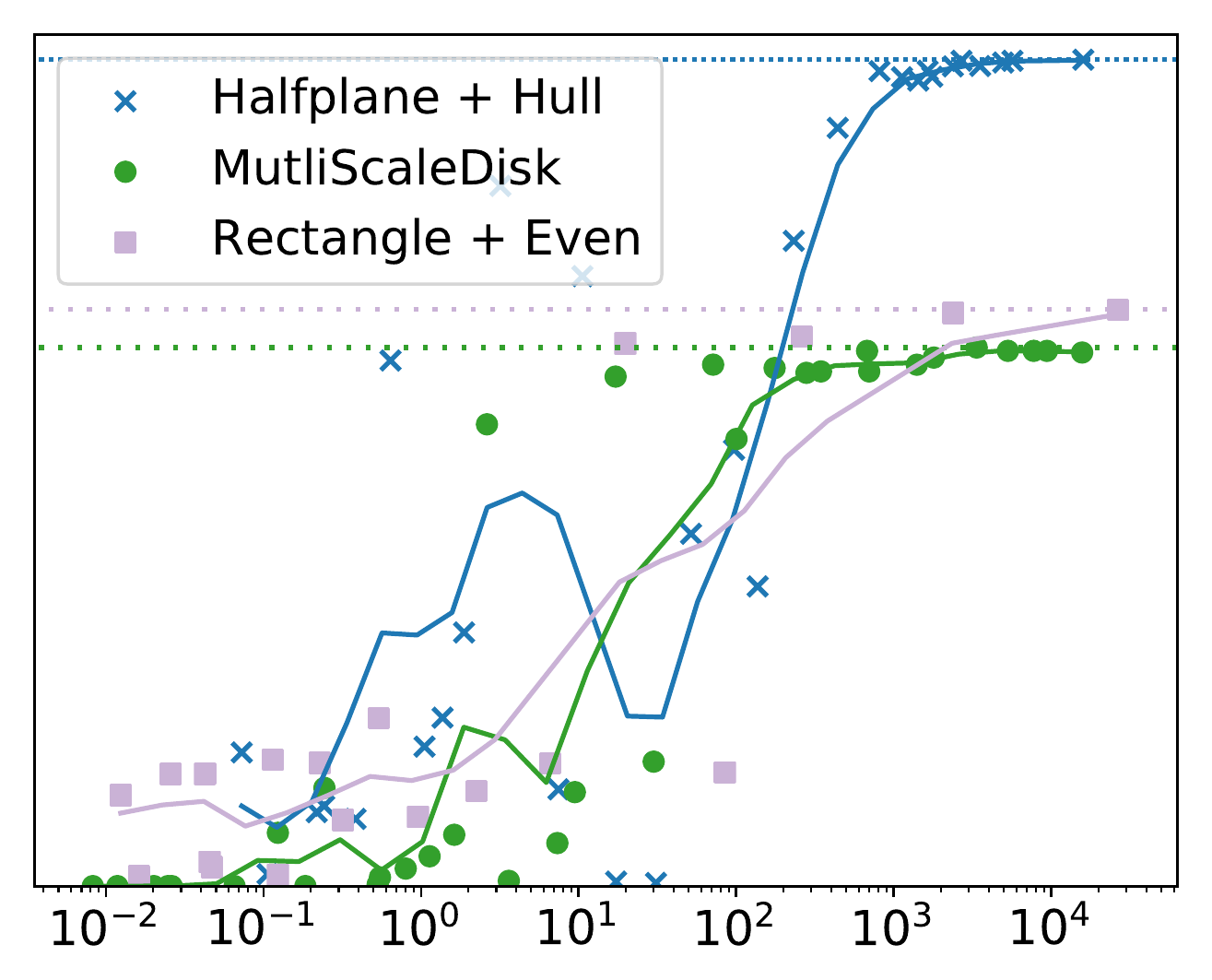}} \\ 
		& Time (sec) & Time (sec) \\
	\end{tabular}

	\caption{\label{fig:full-err}
	Statistical Power on OSM data with full model.}  
\end{figure}

\section{Previous and Related Work}
\label{sec:related}
Our algorithms build upon the recent two-level sampling framework for spatial scan statistics by Matheny \etal~\cite{SSSS,MP18b}.  This focused on making scalable $\eps$-approximate spatial scan statistics over point sets.  
This line of work provides improvements over non-approximate variants~\cite{Kul97,Kul7.0}, in terms of the number samples, and 
the runtime of the scanning algorithms for the same shapes we study: halfspaces, disks, and rectangles.   The introduction of two-level sample makes these approaches tractable, and better coresets or faster scanning makes them extremely efficient.  
However, these fast SSS methods only apply for point sets.

There exist other mechanisms for finding anomalous behavior among trajectories; these involves clustering by density and then identifying outliers, or training learning models on pre-labeled data~\cite{MT08, Dee2008, HXFXTM06, PMF08, YKR2008}. 
These approaches do not specifically identify spatial regions from characteristics of the trajectories or compare against a background population.  

A few papers have attempted to port scan statistics to trajectories, with goal of finding spatial anomalous regions.  
Pang \etal~\cite{PCLZ2011} discretized cities into grids and recorded traffic conditions in each cell. They then computed a likelihood ratio test over all sub-grids to detect the most anomalous region.  
And Liu \etal~\cite{LZCYX11} partitions a city into regions defined by the road network, and their adjacency defines a graph.  Then spatial anomalous links in this graph are scanned over various times to find a time$\times$region of high traffic.  
But neither of these approaches directly operate on the trajectories, and fix regions ahead of time which restricts the set and nature of possible anomalies.

An alternative approach (for point sets) removes the notion of shapes, and focuses on  clustering the measured objects to find potential anomalies~\cite{DA04,DCTB2007,Patil2004,Tango2005,OJPHI7599}.  However, this approach already does not have guarantees about statistical accuracy for point sets, and the task of clustering trajectories appropriately (and in this case one should be concerned about not over-fitting) has its own set of challenges, which are beyond the scope of this paper.

\section{Conclusion}

We introduce three new models for quantifying spatial anomalies among large sets of trajectories using spatial scan statistics and defined by geometric shapes.  These identify regions which exhibit high flux, a large percentage of the total arclength of a measured quantity, or a large percentage of a set of trajectories of interest pass through that region.  These models have numerous applications in traffic analysis, disease outbreak monitoring, epidemiology, and demography.  

Through either combinatorial, geometric reductions, or new scanning algorithms, we are able to identify these anomalous regions efficiently on data sets containing millions of trajectories with billions of waypoints.  This efficiency requires various insights tuned to the families of shapes we considered which include halfplanes, disks, and rectangles.  This includes careful ways to approximate trajectories by point sets, and fast enumeration methods.  These approximations are backed by a theoretical analysis, which shows guaranteed bounded error in the spatial trajectory approximation ($\alpha$) as well as in the measurement of the statistical quantities ($\eps$).  The runtime depends only on these parameters.  

And most importantly, we also measure the statistical power of the scanning algorithms.  That is, if we plant an anomalous region under each of the models, our scanning algorithms can with high probability, recover that region (or a similarly anomalous one) in tractable amounts of time.  Even on the millions of trajectories, high statistical power is often achieved within minutes, or for more challenging variants, in hours.

\clearpage
\bibliographystyle{plain}
\bibliography{discrepancy}

\end{document}